\title{Characterization of the Anderson metal-insulator transition for non ergodic
operators and application}
\author{Constanza Rojas-Molina}
\address{Universit\'e de Cergy-Pontoise, UMR CNRS 8088, F-95000 Cergy-Pontoise, France}
\email{constanza.rojas-molina@u-cergy.fr}
 \thanks{The author was partially supported by ANR BLAN 0261.  The author would like to thank F. Germinet for his valuable support, interesting discussions and helpful remarks.}
\keywords{random Schr\"odinger operators, Anderson transition, Delone operators}
\numberwithin{equation}{section}
\newtheorem{thm}{Theorem}[section]
\newtheorem{lem}[thm]{Lemma}
\theoremstyle{definition}
\newtheorem{defn}{Definition}
\theoremstyle{remark}
\newtheorem{rem}{Remark}
[section]
\newcommand{\be}{\begin{equation}}
\newcommand{\ee}{\end{equation}}
\newcommand{\ba}{\begin{array}}
\newcommand{\ea}{\end{array}}
\newcommand{\bal}{\begin{align}}
\newcommand{\eal}{\end{align}}
\newcommand{\bea}{\begin{eqnarray}}
\newcommand{\eea}{\end{eqnarray}}
\newcommand{\bee}{\begin{eqnarray*}}
\newcommand{\eee}{\end{eqnarray*}}
\newcommand{\tr}{{\rm{tr}}}
\newcommand{\dist}{\mbox{dist}}
\renewcommand{\P}{\mathbb P}
\newcommand{\Rd}{\mathbb R^d}
\newcommand{\R}{\mathbb R}
\newcommand{\C}{\mathbb C}
\renewcommand{\H}{\mathcal H}
\newcommand{\X}{\mathcal X}
\newcommand{\Homega}{H_\omega}
\newcommand{\Prob}[1]{\mathbb P\left(#1\right)}
\newcommand{\norm}[1]{\Vert #1 \Vert}
\newcommand{\abs}[1]{\left| #1 \right|}
\newcommand{\Lp}[1]{\textrm{L}^2(#1)}
\renewcommand{\L}{\Lambda}
\newcommand{\set}[1]{\left\{ #1 \right\}}
\newcommand{\angles}[1]{\langle #1 \rangle}
\begin{document}

\maketitle \markboth {Anderson metal-insulator transition for non ergodic
operators}{Anderson metal-insulator transition for non ergodic operators}

\renewcommand{\sectionmark}[1]{}
\begin{abstract}
We study the Anderson metal-insulator transition for non ergodic
random Schrödinger operators in both annealed and quenched regimes, based on a
dynamical approach of localization, improving known results for
ergodic operators into this more general setting.  In the procedure, we
reformulate the Bootstrap Multiscale Analysis of Germinet and Klein to
fit the non ergodic setting.  We obtain uniform Wegner Estimates needed to perform this adapted Multiscale Analysis in the case of Delone-Anderson type potentials, that is, Anderson potentials modeling aperiodic solids, where the impurities lie on a Delone set rather than a lattice, yielding a break of ergodicity.  As an application we study the Landau operator with a Delone-Anderson potential and show the existence of a mobility edge between regions of dynamical localization and dynamical delocalization.

\end{abstract}
\tableofcontents

\section{Introduction}

Under the effect of a random pertubation, the spectrum of an ergodic Schr\"odinger operator is expected to undergo a transition where we can identify two distinct regimes: the insulator region, characterized by localized states and the metallic region, characterized by extended states.  The passage from one to the other under a certain disorder regime is known as the Anderson metal-insulator transition.  Although a precise spectral description of this phenomena is still out of reach, this transition is better characterized in terms of its dynamical properties.  Germinet and Klein tackled this problem in \cite{GK3} by introducing a local transport exponent $\beta(E)$ to measure the spreading of a wave packet initially localized in space and in energy evolving under the effect of the random operator. This provides a proper dynamical characterization of the metal-insulator transition, and the mobility edge, i.e. the energy where the transition occurs, is shown to be a discontinuity point of $\beta(E)$.  

Since ergodicity  is a basic feature in the theory of random Schr\"odinger operators, Germinet and Klein's work was done in that framework.  However, more real models may lack this fundamental property, examples of this kind of systems are Schr\"odinger operators with Anderson-type potentials where the random variables are not i.i.d. or where impurities are located in aperiodic discrete sets.  The first case (sparse models, decaying randomness, surfacic potentials) has been studied in \cite{BKS}, \cite{BdMSS}, \cite{KV}, \cite{S}, while the second case (Delone-Anderson type potentials) has been treated in \cite{BdMNSS}.  In the deterministic case, Delone operators have been studied with a dynamical systems appproach in \cite{KLS} and \cite{MR}.

We aim to study the Anderson metal-insulator characterization in a general non ergodic setting, with minimal requirements on the model to fit the dynamical characterization of localization/delocalization using the local transport exponent $\beta(E)$, extending the results of \cite{GK3} to the non ergodic models mentioned above.  The main tool in the study of the transport transition is the Multiscale Analysis (MSA), initially developped by Fr\"olich and Spencer \cite{FrS}, it has been improved over the last three decades to its strongest version so far, the Bootstrap MSA by Germinet and Klein  \cite{GK1}.  The Bootstrap MSA yields among other features strong dynamical localization in the Hilbert-Schmidt norm, and so it can be used to characterize the set of energies where the transport exponent is zero, that is associated to dynamically localized states \cite{GK3}, but since it was originally developped in the frame of ergodic operators it is not suitable when there is lack of ergodicity, so we adapt it to our model.  What completes the dynamical characterization is the fact that, in the ergodic case, slow transport in average over the randomness, the so-called annealed regime, implies dynamical localization.  This holds in our new setting and, moreover, this can be improved and it can be shown that it is enough to have slow transport with a good probability, that is, in a quenched regime, to obtain dynamical localization, so in both quenched and annealed regimes the metal-insulator transition can be characterized in an analog way.   There are examples related to the Parabolic Anderson model where the behavior of the solution in both regimes differ from each other and this can depend on the density of the random variables \cite{GaKo}.

We obtain uniform Wegner estimates nedeed for the adapted version of the Bootstrap MSA for both the Laplacian and the Landau operator with Delone-Anderson potentials, that is, Anderson potentials where the impurities are placed in an \emph{a priori} aperiodic set, called a \emph{Delone} set.  It is known that a way to obtain Wegner estimate is to ``lift'' the spectrum by considering the random Hamiltonian as a negative perturbation of a periodic Hamiltonian whose spectrum starts above a certain energy above the bottom of the spectrum of the original free Hamiltonian (called fluctuation boundary).  In this way the Wegner estimate is obtained ``outside`` the spectrum of the periodic operator, as in \cite{BdMLS}.  We stress the fact that this approach is not convenient in our case since we have no information on where the fluctuation boundary lies.  On the other hand, \cite{CHK} and \cite{CHK2} take a different approach by using a unique continuation property to prove Wegner estimates without a covering condition on the single-site potential, and not using fluctuation boundaries.   The results in \cite{CHK} rely strongly on the periodicity of the lattice and the use of Floquet theory, which, again, cannot be used in our model since our set of impurities is aperiodic. However, this was improved in \cite{CHKR} to obtain a positivity estimate for the Landau Hamiltonian that does not rely on Floquet theory, which makes it convenient for our setting.  In the case of the free Laplacian (see \cite{G}) we use a \emph{spatial averaging} method as in \cite{GKH}, \cite{BoK} to prove the required positivity estimate, thus bypassing the use of Floquet theory.  As a result we obtain a uniform Wegner estimate at the bottom of the spectrum in an interval whose lenght depends only on the Delone set parameters and not in the disorder parameter $\lambda$.  We also obtain Wegner estimates in the case where the background hamiltonian is either periodic or the Landau operator.  For the latter, and as an application of the main results, we can show the existence of a metal-insulator transition, as expected from the ergodic case \cite{GKS}. Since the lattice is a particular case of a Delone set, these results imply in particular those of the ergodic setting.  By the lack of ergodicity we cannot make use of the Integrated Density of States to prove the existence of a non random spectrum for $H_\omega$, nor use the characterization of the spectrum in terms of the spectra of periodic operators as done in \cite{GKS2} to locate the spectrum in the Landau band.  Therefore, to show our results are not empty we need to prove that we can almost surely find spectrum near the band edges, which is done adapting an argument in \cite[Appendix B]{CH} in a not necesarily perturbative regime of the disorder parameter $\lambda$.  We stress that we consider a general Delone set and do not assume any geometric property, like repetitivity or finite local complexity.  These features, however, might be needed for further results, for example, related to the Integrated Density of States (see \cite{MR}, \cite{LS3}, \cite{LV}).

The present note is organized as follows: in Section 2 we adapt the Bootstrap MSA to fit our new  setting.  In Section 3 we prove the results on the dynamics in both annealed and quenched regimes.  In Section 4 we prove uniform Wegner estimates for Delone-Anderson random Schr\"odinger operators.  In Section 5, in the lines of \cite{GKS} we proof the existence of a metal-insulator transition for  a Landau Hamiltonian with a Delone-Anderson potential and the existence of alsmot sure spectrum near the band edges, that has non empty intersection with the localization region.

\section{Main results}

%

For $x\in\Rd$ we denote by $\norm{x}$ the usual euclidean norm 
while the supremum norm is defined as $\abs{x}_\infty=\displaystyle\max_{1\leq i\leq
d}\abs{x_i}$,
where $\abs{ \cdot }$ stands for absolute value.

Given $x\in\Rd$ and $L>0$ we denote by $B(x,L)$ the ball of
center $x$ and radius $L$ in the $\norm{\cdot}$-norm, while the
set

\[ \L_L(x)=\left\{ y\in\Rd : |y-x|_{\infty}<\frac{L}{2}\right\} \]
defines the cube of side $L$ centered at $x$, also denoted as $\L_{x,L}$ .  We denote the
volume of a Borel set $\L\subset\Rd$ with respect to
the Lebesgue measure as $|\L|=\int_{\Rd}\chi_\L(x)d^dx $, where $\chi_\L$ is the
characteristic function of the set $\L$.  We will often write
$\chi_{x,L}$ for $\chi_{\L_L(x)}$ and denote by $\norm{f}_{x,L}$ or $\norm{f}_{\L_L(x)}$  the norm of $f$ in $\Lp{\L_{x,L}}$.

\bigskip
 We denote by $\mathcal C^\infty_c(\L)$ the vector space of
real-valued infinitely differentiable functions with compact support
contained in $\L$, with $ C^\infty_{c,+}(\L) $ being the
subclass of nonnegative functions.

\bigskip

We denote by $\mathcal B(\H)$ the Banach space of bounded
linear operators on the Hilbert space $\H$. For a closed,
densely defined operator A with adjoint $A^*$, we denote its domain
by $\mathcal D(A)\subset \Lp{\L}$ and by $\norm{A}=\sup\{\norm{A\phi};\quad \norm{\phi}_2=1\}$ its
(uniform) norm if bounded. We define its absolute value by
$\abs{A}=\sqrt{A^*A}$ and, for $p>1$, we define its (Schatten) $p$-norm
in the Banach space $\mathcal J_p(\Lp{\L})$ as $\norm{A}_p=(\tr |A|^p)^{1/p}$. In particular, $\mathcal J_1$ is the
space of trace-class operators and $\mathcal J_2$, the space of
Hilbert-Schmidt operators.  We write $\left<x\right>=\sqrt{(1+\norm{x}^2)}$
and use $\left<X\right>$
 to denote the operator given by multiplication by the function
$\left<x\right>$.

For convenience we denote a constant $C$ depending only on the
parameters a,b,... by $C_{a,b,...}$. \\

\bigskip

We consider a random Sch\"odinger operator of the form

\be\label{ranop}
 H_{\omega}= H_0 +\lambda V_\omega \quad \textrm{on}\
 \Lp{\R},\ee
where $H_0$ is the free Hamiltonian, $\lambda$ measures the disorder strength
which in the
following we consider fix, and $V_\omega$, called \emph{random
potential}, is the operator multiplication by $V_\omega$,
such that $\{V_\omega(x): x\in\Rd\}$ is a real-valued measurable process
on a complete probability space $(\Omega, \mathcal F, \mathbb P)$ having the
following properties:

\begin{itemize}
 \item[(R)] $V_\omega=V_\omega^+ +V_\omega^-$, where $V_\omega^+$ and
$V_\omega^-$ are real valued measurable processes on $\Omega$ such that for
$\mathbb P-$a.e. $\omega$ : $0\leq V_\omega^+\in \mbox{L}_{loc}^1(\Rd)$ and
$V_\omega^-$ is relatively form-bounded with respect to $-\Delta$, with relative
bound $<1$, i.e. there are nonegative constants $\Theta_1<1$ and $\Theta_2$
independent of $\omega$ such that for all $\psi\in\mathcal D(\nabla)$ we have

\[ |\left< \psi,V_\omega^-\psi \right>|\leq
\Theta_1\Vert\nabla\psi\Vert^2+\Theta_2\Vert\psi\Vert^2 \mbox{ for $\mathbb
P$-a.e. $\omega$}  \]

\item[(IAD)] There exists $\varrho>0$ such that for any bounded sets
$B_1,B_2\subset\Rd$ with dist($B_1,B_2$)$>\varrho$, the processes
$\{V_\omega(x):x\in B_1\}$ and $\{V_\omega(x): x\in B_2\}$ are independent.

\end{itemize}

In the case $H_0=H_B$, the unperturbed Landau Hamiltonian on $L^2(\mathbb
R^2)$
\begin{equation}\label{H_B1}
H_B = (-i\nabla-{\bf A})^2 \quad \hbox{ with } {\bf A}
=\frac{B}{2}(x_2,-x_1),
\end{equation}
 where {\bf A} is the vector potential and $B$ is the strength of
 the magnetic field, we ask ${\bf A}(x)\in L_{loc}^2(\mathbb R^2;\mathbb R^2)$
to satisfy the diamagnetic inequality so we can obtain trace estimates for the Landau
Hamiltonian from those of the Laplacian.

It follows that $H_\omega$ is a semibounded selfadjoint operator for $\mathbb
P$-a.e. $\omega$.  Moreover, the mapping $\omega\rightarrow H_\omega$ is
measurable for $\mathbb P$-a.e. $\omega$, we denote its spectrum by
$\sigma_{\omega}$.

In the usual setting for (ergodic) random Hamiltonians, $H_\omega$ satisfies a
\emph{covariance condition} with respect to the action of a family of unitary
(translation) operators $U_x$ , and its associated ergodic group of translations
$\tau_x$ on the probability space $\Omega$.  Throughout this paper we do not
make any assumption on the ergodicity of $H_\omega$, so this covariance
condition, \emph{a priori}, does not hold , i.e.  

\be\label{nonerg} H_{\tau_{\gamma}(\omega)}\neq U_\gamma H_{\omega}U_\gamma^*, \ee
which makes
$H_{\omega}$ a \emph{non-ergodic} random operator.
\bigskip

For the following assumption we need the notion of a \emph{finite volume
operator}, the restriction of $H_\omega$ to either an open box $\L_L(x)$
with Dirichlet boundary condition or to the closed box $\overline\L_L(x)$ with periodic boundary conditions. 
In this way, we obtain a well defined random operator $H_{\omega,x,L}$ acting on
$\Lp{\L_L(x)}$ defined by

\[ H_{\omega,x,L}=H_{0,x,L}+ \lambda V_{\omega,x,L}.  \]
we denote its spectrum by $\sigma_{\omega,x,L}$ and by $R_{\omega,
x,L}(z)=(H_{\omega,x,L}-z)^{-1}$ its resolvent operator.  We define the spectral projections $P_{\omega}(J)=\chi_J(H_{\omega})$ and $P_{\omega,x,L}(J)=\chi_J(H_{\omega,x,L})$ for
$J\subset\mathbb R$ a Borel set. When stressing the dependence on $\lambda$, it will be added to the subscript.

\begin{defn}\mbox{ }

\begin{itemize}
 \item[(UWE)]  We say that $H_\omega$ satisfies a uniform Wegner estimate with H\"older
exponent $s$ in an open interval $\mathcal J$, i.e., for every $E\in\mathcal J$
there exists a constant $Q_E$, bounded on compact subintervals of $\mathcal J$
and $0<s\leq 1$ such that 

\be\label{WE} 
\sup_{x\in\Rd}\mathbb E\{ \mbox{tr}(P_{\omega,x,L}(E- \eta, E+\eta) \}\leq Q_E\eta^s L^d,
\ee
for all $\eta>0$ and $L\in2\mathbb N$.  It satisfies a uniform Wegner estimate
at an energy $E$ if it satisfies a uniform Wegner estimate in an open interval
$\mathcal J$ such that $E\in\mathcal J$.
\end{itemize}
\end{defn}
\smallskip

\smallskip

To describe the dynamics, we consider the random
moment of order $p\geq 0$ at time $t$ for the time evolution in the
Hilbert-Schmidt norm, initially spatially localized in a square of
side one around $u\in\mathbb Z^2$ and localized in energy by the function
$\mathcal X \in C^\infty_{c,+}(\mathbb R)$, i.e.,

\be M_{u,\omega}(p,\mathcal X, t)= \Vert \langle X-u\rangle^{p/2}
e^{-itH_{\omega}}\mathcal X (H_{\omega})\chi_u
\Vert^2_2.\ee
\smallskip

We next consider its time average,

\be \mathcal M_{u,\omega}(p,\mathcal
X,T)=\frac{2}{T}\int_0^\infty e^{-2t/T}
M_{u,\omega}(p,\mathcal X,t) dt. \ee

%
%
%

\bigskip

\begin{defn}\mbox{ }
\begin{enumerate}
 \item[1.] We say that $H_{\omega}$ exhibits strong Hilbert-Schmidt (HS-) dynamical
localization in the open interval $I$ if for all $\mathcal X\in\mathcal
C^\infty_{c,+}(I)$ we have

\[ \displaystyle
\sup_{u\in\mathbb Z^2}\mathbb E\{ \sup_{t\in \mathbb R}  M_{u,\omega}(p,\mathcal X, t)\}<\infty \hspace{0.5cm}
\mbox{for all $p\geq 0$.}  \]
We say that $H_{\omega}$ exhibits strong Hilbert-Schmidt (HS-) dynamical
localization at an energy $E$ if there exists an  open interval $I$ with $E\in
I$, such that there is strong HS-dynamical localization in the open interval.

\smallskip

\item[2.] The strong insulator region for $H_\omega$ is defined as

\be \Sigma_{SI}= \{E\in\mathbb R: \mbox{$H_{\omega}$ exhibits strong HS-dynamical
localization at $E$} \}
\ee
Note that if there exists a $\delta>0$ such that $\dist(E,\sigma_\omega)>0$ for almost every $\omega$, then $E\in\Sigma_{SI}$.
\end{enumerate}
\end{defn}

\bigskip

As we shall see, the existence of such a region for random Schr\"odinger
operators is the consequence of the applicability of the Bootstrap MSA adapted to the
non ergodic setting (Theorem \ref{Bootstrap}).  
\bigskip

Given $\theta>0$, $E\in\mathbb R$, $x\in\mathbb Z^d$ and $L\in 6\mathbb N$, we
say that the box $\L_L(x)$ is \linebreak $(\theta,E)$-\emph{suitable}  for $H_\omega$
if $E\notin\sigma_{\omega,x,L}$ and

\[  \Vert \Gamma_{x,L}
R_{\omega,x,L}(E)\chi_{x,L/3}\Vert_{x,L}\leq \frac{1}{L^\theta} , \]
where $\Gamma_{x,L}=\chi_{\bar\L_{L-1}(x)\setminus\L_{L-3}(x)}$.  If
we replace the polynomial decay $1/L^\theta$ by $e^{-mL/2} $ we say that the box
$\L_L(x)$ is $(m,E)\emph{-regular}$ for $H_\omega$.

\smallskip
The following theorem is a reformulation of Theorem 3.4 and Corollary 3.10 \cite{GK1}
in a non ergodic setting, 

\begin{thm}\label{Bootstrap}
Let $H_\omega$ be a random Schr\"odinger operator satisfying a uniform Wegner
estimate in an open interval $\mathcal J$ with H\"older exponent $s$ and
assumptions (R), (IAD).  Given $\theta>d$, for each $E\in\mathcal J$ there
exists a finite scale $\mathcal L_\theta(E)=\mathcal L(\theta,E,Q_E,d,s)$,
bounded in compact subintervals of $\mathcal J$, such that if for $\mathcal L>\mathcal L_\theta(E)$ the following holds

\be\label{ILSE} \inf_{x\in\mathbb Z^d} \mathbb P\{\L_{\mathcal L}(x) \mbox{
is ($\theta,E$)-suitable}\}> 1-\frac{1}{841^d},
\ee
then there exists $\delta_0>0$ and $C_\zeta>0$ such that

\be \sup_{u\in\mathbb Z^d} \mathbb E \left(  \displaystyle\sup_{\Vert
f\Vert\leq 1} \Vert \chi_{x+u} f(H_\omega) P_\omega(I(\delta_0))\chi_u\Vert_2^2
\right)\leq C_\zeta e^{-|x|^\zeta}, \ee
for $0<\zeta <1$, where $I(\delta_0)=[E-\delta_0,E+\delta_0]$.  Moreover, $E\in
\Sigma_{SI}$ and we have the following properties,
\begin{itemize}
 \item[(SUDEC)] Summable uniform decay of eigenfunction correlations:  for a.e. $\omega\in\Omega$, the Hamiltonian $H_\omega$ has pure point spectrum in $I \subset\Sigma_{SI}$ with finite multiplicity.  Let $\{\epsilon_{n,\omega}\}_{n\in\mathbb N}$ be an enumeration of the distinct eigenvalues of $H_\omega$ in I.  Then for each $\zeta\in ]0,1[$ and $\epsilon >0$ we have, for every $x,u\in\mathbb Z^d$,

\be \norm{\chi_{x+u} \phi } \norm{\chi_{u} \varphi} \leq C_{I,\zeta,\epsilon,\omega} \norm{T^{-1}_u \phi} \norm{ T^{-1}_u \varphi } \langle x+u\rangle^{\frac{d+\epsilon}{2}} \langle u\rangle^{\frac{d+\epsilon}{2}} e^{-|x|^\zeta}, \ee 
for all $\phi,\varphi\ \in \mbox{ Ran } P_{\omega}(\{\epsilon_{n,\omega}\})$ (see Section \ref{Proof1}).


\item[(DFP)] Decay of the Fermi projections: for $E\in\Sigma_{SI}$ and for any $\zeta\in ]0,1[$ we have
\be \sup_{u\in\mathbb Z^d}\mathbb E\lbrace \norm{\chi_{x+u} P_{\omega}((-\infty, E])\chi_u}_2^2  \rbrace \leq C_{\zeta, \lambda,E} e^{-|x|^\zeta} \ee 
where the constant $C_{\zeta,E}$ is locally bounded in E.

\end{itemize}

\end{thm}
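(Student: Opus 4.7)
The plan is to adapt the four-stage Bootstrap MSA of Germinet--Klein to the non-ergodic setting essentially by replacing every probabilistic statement that covariance would make translation-invariant with its uniform-in-$x$ analogue. The deterministic ingredients of the scheme, namely the Geometric/Strong Resolvent Inequality (SLI) and the Eigenfunction Decay Inequality (EDI), depend only on the box and the energy, not on $\omega$ or on ergodicity, so they transfer verbatim. The probabilistic ingredients are the Wegner estimate (W), the number-of-eigenvalues estimate (NE) that it implies, and the inductive scale-to-scale probability estimate. By hypothesis (UWE) gives us $\sup_{x\in\mathbb Z^d}\mathbb E\{\mbox{tr}\,P_{\omega,x,L}(E-\eta,E+\eta)\}\le Q_E\eta^s L^d$ with $Q_E$ locally bounded on $\mathcal J$, which is exactly the uniform form needed at every scale; (NE) follows by taking $\eta$ of order $1$.

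Next, I would carry out the inductive step in the uniform form. Define
\[
p(\theta,E,L)=\sup_{x\in\mathbb Z^d}\mathbb P\{\L_L(x)\text{ is not }(\theta,E)\text{-suitable for }H_\omega\},
\]
and analogously for $(m,E)$-regularity. The hypothesis (ILSE) states $p(\theta,E,\mathcal L)<841^{-d}$ at some $\mathcal L\ge \mathcal L_\theta(E)$. Given a putative $(\theta,E)$-singular box at the next scale $\L_{L'}(x)$, the standard deterministic SLI argument (Lemma 4.2 of \cite{GK1}) forces the existence of two disjoint subboxes of side $L$ at distance $>\varrho$ which are both singular; by (IAD) these two events are independent, with individual probabilities bounded by $p(\theta,E,L)$, so the probability of the bad configuration is bounded by a constant (depending on $d$) times $p(\theta,E,L)^2$. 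Since the upper bounds are taken as suprema over base points, the product structure survives without invoking covariance. Iterating along the four geometric sequences of scales used by GK1 (polynomial $\to$ stretched exponential $\to$ subexponential $\to$ exponential), each stage produces an accelerated decay of $p(\cdot,E,L_k)$, still uniformly in $x$.

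From the end of the bootstrap one obtains, for every $\zeta\in(0,1)$, a scale $L_k$ and an interval $I(\delta_0)$ around $E$ on which
\[
\sup_{x\in\mathbb Z^d}\mathbb P\bigl\{\L_{L_k}(x)\text{ is not }(m,E')\text{-regular for all }E'\in I(\delta_0)\bigr\}\le e^{-L_k^{\zeta}},
\]
uniformly. Plugging this uniform MSA output into the generalized eigenfunction expansion exactly as in Section~3 of \cite{GK3}, with the crucial modification that every $\mathbb P$-probability over translates is now replaced by its supremum over the base point, yields the kernel decay
\[
\sup_{u\in\mathbb Z^d}\mathbb E\Bigl(\sup_{\|f\|\le 1}\|\chi_{x+u}f(H_\omega)P_\omega(I(\delta_0))\chi_u\|_2^2\Bigr)\le C_\zeta e^{-|x|^\zeta},
\]
and hence $E\in\Sigma_{SI}$, together with (SUDEC) and (DFP) as in \cite[Cor.~3.10]{GK1} and \cite{GK3}, since those consequences are deduced from the same kernel estimate by purely deterministic trace-class manipulations and polynomial weights (the factors $\langle x+u\rangle$ and $\langle u\rangle$ entering (SUDEC) come from the generalized eigenfunction expansion and are unrelated to ergodicity).

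The main obstacle is conceptual rather than computational: one must verify that nowhere in the GK1/GK3 machinery does a reduction from ``all $x$'' to ``$x=0$'' (via the covariance relation $H_{\tau_\gamma\omega}=U_\gamma H_\omega U_\gamma^*$, which fails here by \eqref{nonerg}) get silently used to pass from a single-site probability to a uniform one. Each such step must be reinspected and rewritten in terms of $\sup_{x}$ or $\inf_{x}$ of probabilities, and this is precisely what the rewritten statements (UWE) and (ILSE) were designed to supply. Once that bookkeeping is done, all the quantitative estimates of \cite{GK1,GK3} carry over unchanged, and the non-random spectrum plays no role — the constants depend only on $d$, $\theta$, $s$, $Q_E$, and $\delta_0$.
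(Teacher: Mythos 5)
Your strategy is correct and, in its broad outline, is exactly the paper's strategy: promote the single-site probabilistic inputs to their uniform-in-$x$ analogues (UWE, UNE, the ILSE), observe that the deterministic SLI/EDI arguments carry over verbatim, run the four-stage Bootstrap MSA on $p_L=\sup_{x}\,p_{x,L}$ using (IAD) for the independence at each inductive step, and feed the resulting uniform output into the generalized eigenfunction expansion to get the kernel decay, from which (SUDEC) and (DFP) follow as in \cite{GK4} and \cite{BGK}. The MSA part of your argument matches Section~3 of the paper essentially step by step.

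There is, however, one piece you dismiss too quickly. You write that the polynomial weights entering (SUDEC) ``come from the generalized eigenfunction expansion and are unrelated to ergodicity,'' and your description of the needed modification is limited to ``replace every $\mathbb P$-probability over translates by its supremum.'' That is not the full story. The generalized eigenfunction machinery of \cite{GK1,GK3} is built around a \emph{fixed} weight operator $T$, multiplication by $(1+|x|^2)^\nu$ centered at the origin, together with a single pair of weighted spaces $\mathcal H_\pm$ and a trace estimate (SGEE). In the ergodic case, covariance is used precisely to reduce the estimate at $\chi_u$ to the estimate at $\chi_0$. Without it, the paper must replace $T$ by the whole family $T_u$ (multiplication by $(1+|x-u|^2)^\nu$), introduce the family of weighted spaces $\mathcal H^u_\pm$, the family of expansion kernels $\mathbf P_{u,\omega}(\tilde\lambda)$ and measures $\mu_{u,\omega}$, and prove a $\sup_u$-uniform version of the trace estimate, (USGEE). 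This is a structural modification to a \emph{deterministic} ingredient, not a uniformization of a probabilistic one, and it is the main content of the adaptation beyond swapping $p_{0,L}$ for $p_L$. Your outline would produce a proof, but as written it risks silently applying the ergodic GEE lemmas at a translated base point $u\ne 0$, which is exactly the kind of hidden use of covariance you correctly warn against earlier in the same paragraph.
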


\begin{rem} \emph{The condition (\ref{ILSE}) is called  the
\emph{initial length scale estimate} (ILSE) of the Bootstrap MSA.  In practice is often useful to
prove the equivalent estimate \cite[Theorem 4.2]{GK3}:  For some $\theta>d$,
we have }

\be\label{ILSE2}\displaystyle\limsup_{L\rightarrow\infty} \inf_{x\in\mathbb Z^d}
\mathbb P\{\L_{\mathcal L}(x) \mbox{ is ($\theta,E$)-suitable}\}= 1.
\ee
\end{rem}

\begin{defn}
The multiscale analysis region for $H_{\omega}$ is defined as the set of
energies where we can perform the bootstrap MSA, i.e.

\begin{align}\Sigma_{MSA}= & \{E\in\mathbb R: \mbox{$H_{\omega}$ satisfies a
uniform Wegner estimate at $E$ and}\nonumber\\ &  \mbox{(ILSE) holds for some  $\mathcal L>\mathcal
L_\theta(E)$} \}
\end{align}
By Theorem \ref{Bootstrap}, we have  $\Sigma_{MSA}\subset\Sigma_{SI}$.

\end{defn}

\bigskip

We introduce the (lower) transport exponent in the annealed regime:

\be \beta(p,\mathcal X)= \displaystyle
\liminf_{T\rightarrow\infty}\frac{\log_+ \displaystyle\sup_u \mathbb E(\mathcal
M_{u,\omega}(p,\mathcal X, T))}{p\log T} \ee
for $p\geq 0$, $\mathcal X \in C^\infty_{c,+}(\mathbb R)$, where
$\log_+t= \max\{0,\log t\}$, and define the $p$-th local transport exponent
at the energy $E$,  by

\be \beta(p,E)=\displaystyle\inf_{I\ni E} \displaystyle
\sup_{\mathcal X\in C^\infty_{c,+}(I)} \beta(p,\mathcal
X), \ee where $I$ denotes an open interval. The exponents
$\beta(p,E)$ provide a measure of the rate of transport in wave
packets with spectral support near $E$.  Since they are increasing
in $p$, we define the local (lower) transport exponent
$\beta(E)$ by

\be \beta(E)=\displaystyle\lim_{p\rightarrow
\infty}\beta(p,E)=
\displaystyle\sup_{p>0}\beta(p,E). \ee

With the help of this transport rate we can define two complementary
sets in the energy axis for fixed $B>0$, $\lambda>0$, the region of
\emph{dynamical localization}

\be \Xi^{DL}=\{ E\in\mathbb
R:\mbox{ } \beta(E)=0  \},  \ee
also called the trivial transport region (TT) in \cite{GK3} and the region of
\emph{dynamical delocalization}

\be \Xi^{DD}=\{
E\in\mathbb R:\mbox{ } \beta(E)>0  \},
\ee
also called the weak metallic transport region (WMT), in \cite{GK3}.  Recalling Theorem \ref{Bootstrap}
we have that $\Sigma_{MSA}\subset\Sigma_{SI}\subset\Xi^{DL}$.

The following result is an improvement of  \cite[Theorem 2.11]{GK3} for the non
ergodic setting

\bigskip
\begin{thm}\label{Emom} Let $H_{\omega}$ be a Schr\"odinger operator
satisfying a uniform Wegner estimate with H\"older exponent $s$ in an open interval $\mathcal J$  and
assumptions (R), (IAD).  Let
$\mathcal X \in C^\infty_{c,+}(\mathbb R)$ with $\mathcal X \equiv
1$ on some open interval $J\subset \mathcal J$, $\alpha\geq0$ and
$p>p(\alpha,s):= 12 \frac{d}{s}+2\alpha\frac{d}{s}$.  If

\be \label{Emomentum} \displaystyle\liminf_{T\rightarrow\infty}
\sup_{u\in\mathbb Z^d} \frac{1}{T^{\alpha}}\mathbb E \left(\mathcal M_{u,\omega}(p,\mathcal X,T)\right)< \infty ,\ee
then $J\subset\Sigma_{MSA}$. In particular, it follows that
(\ref{Emomentum}) holds for any $p\geq0$.
\end{thm}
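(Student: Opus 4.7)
The plan is to verify the initial length scale estimate in the form (\ref{ILSE2}), uniformly in $x\in\Z$, at every $E\in J$ for some $\theta>d$, and then invoke Theorem \ref{Bootstrap} together with the uniform Wegner hypothesis to conclude $J\subset\Sigma_{MSA}$. The last sentence of the theorem is then automatic: $\Sigma_{MSA}\subset\Sigma_{SI}$ yields strong HS-dynamical localization on $J$, hence $\sup_u \mathbb E(\sup_t M_{u,\omega}(p,\mathcal X,t))<\infty$ for every $p\geq 0$, which is (\ref{Emomentum}) with $\alpha=0$ for all $p\geq 0$.

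The first step is a Chebyshev extraction: along a sequence $T_k\to\infty$ realizing the $\liminf$ in (\ref{Emomentum}), for any $\delta>0$,
\[\sup_{u\in\Z}\P\set{\mathcal M_{u,\omega}(p,\mathcal X,T_k)>T_k^{\alpha+\delta}}\leq C\,T_k^{-\delta}.\]
Since $\mathcal M_{u,\omega}(p,\mathcal X,T)$ is the Laplace transform of $M_{u,\omega}(p,\mathcal X,\cdot)$ at parameter $2/T$, on the corresponding good event this translates into a weighted resolvent bound of the form $\norm{\angles{X-u}^{p/2} R_\omega(E+i/T)\mathcal X(H_\omega)\chi_u}_2\lesssim T^{(\alpha+\delta)/2}$ for every $E$ in the plateau of $\mathcal X$.

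The second step, following GK1 and the proof of [GK3, Theorem 2.11], converts this full-space resolvent estimate into the geometric suitability bound $\norm{\Gamma_{x,L} R_{\omega,x,L}(E)\chi_{x,L/3}}_{x,L}\leq L^{-\theta}$. One chooses $T=L^\mu$, uses a geometric resolvent identity to replace $R_\omega$ by $R_{\omega,x,L}$ while absorbing the weight $\angles{X-x}^{p/2}$ on the boundary strip $\Gamma_{x,L}$, and uses the uniform Wegner estimate both to guarantee $E\notin\sigma_{\omega,x,L}$ and to pass from $E+i/T$ to $E$ at probabilistic cost $Q_E L^d T^{-s}$. Optimizing $\mu$ and tracking the exponents, the condition $\theta>d$ is achievable precisely when $p>12d/s+2\alpha d/s$: the factor $d/s$ comes from the Wegner exchange of imaginary part for a power of $L$, the numerical constant $12$ accumulates from the resolvent sandwich and the interpolation between spatial decay and energy regularization, and the $2\alpha d/s$ term absorbs the allowed growth $T^\alpha$ in the hypothesis.

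The main obstacle is uniformity in the base point $x$. The ergodic proof translates $x$ to $0$ via the covariance relation $H_{\tau_\gamma\omega}=U_\gamma H_\omega U_\gamma^*$ and reduces all probabilistic statements to a single orbit; here (\ref{nonerg}) forbids this, and every step above must be carried out uniformly in $x\in\Z$. The hypothesis (\ref{Emomentum}) already contains $\sup_u$, the Wegner estimate is uniform by assumption, and the geometric resolvent identities are local, so uniformity propagates; the delicate point is the bookkeeping of constants through the resolvent cascade, and the verification that no implicit use of ergodicity slips into the choice of the cutoff $\mathcal X$ or the balancing of the scales $T\sim L^\mu$.
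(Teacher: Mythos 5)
Your high-level roadmap (verify the initial length scale estimate uniformly in the base point and invoke Theorem \ref{Bootstrap}) is correct, and you are right that the closing sentence is automatic from $\Sigma_{MSA}\subset\Sigma_{SI}$. However, there is a genuine gap at the heart of the argument, in the step where you claim that controlling $\mathcal M_{u,\omega}(p,\mathcal X,T)$ ``translates into a weighted resolvent bound $\norm{\angles{X-u}^{p/2} R_\omega(E+i/T)\mathcal X(H_\omega)\chi_u}_2\lesssim T^{(\alpha+\delta)/2}$ for every $E$ in the plateau of $\mathcal X$.'' This does not follow. The Plancherel-type identity (cf.\ \cite[Lemma 6.3]{GK3}) gives only an \emph{integrated} bound in energy,
\be \int_\R \norm{\angles{X-u}^{p/2}R_\omega(E+i\epsilon)\X(\Homega)\chi_u}_2^2\,dE = \pi\,T\,\mathcal M_{u,\omega}(p,\X,T) \quad (\epsilon=1/T), \ee
so the resolvent bound can fail at any individual $E$. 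This is precisely why the paper's proof introduces, for each $u$, the set of ``good'' energies $A_{u,M,I,\epsilon}$ (in the annealed case the expectation-level sets, in the quenched case the $\omega$-dependent sets $A_{u,\omega,M,I}$), bounds the Lebesgue measure of its complement using the moment hypothesis, and then for a ``bad'' $E$ finds a nearby good energy $E_u$ and uses the resolvent identity together with the Wegner estimate to propagate the bound. Without this measure-theoretic detour, the passage from a moment bound (an $L^1_E$ statement after Laplace transform) to a fixed-energy suitability estimate is unfounded.

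A secondary point: your opening Chebyshev step on the probability space is closer in spirit to the paper's proof of Theorem \ref{mom} (the quenched version, which needs a deterministic inclusion lemma on the event level), not to the paper's proof of this theorem. The proof here works directly with $\mathbb E$: it bounds the probability $P_{E,L}$ via Lemma \ref{lemproba}, inserts the energy cutoff $\mathcal X(H_\omega)$ and splits the resolvent, and then applies Chebyshev only at the trace-class level to the part involving $1-\mathcal X(H_\omega)$, while the part involving $\mathcal X(H_\omega)$ is controlled by the $A_{u,M,I,\epsilon}$ argument. You also need to supply the exponent bookkeeping explicitly: the proof requires $p>5\theta+3\gamma+2d+(\alpha+1)(\theta s+d)/s$, achievable with $\theta,\gamma>d/s$ exactly when $p>12d/s+2\alpha d/s$, which is where the stated threshold $p(\alpha,s)$ comes from; merely saying ``the constant $12$ accumulates'' does not establish that the threshold is attainable. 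Your observation about uniformity in the base point being the non-ergodic obstacle is correct, and it is indeed handled by taking $\sup_u$ in the hypothesis and indexing the sets $A$ by $u$, but that observation needs to be coupled with the measure-theoretic argument it is protecting, which is missing from your draft.
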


Moreover, we can extend this result to a quenched regime, a new feature in both ergodic and non-ergodic situations:

\bigskip

\begin{thm}\label{mom} Let $H_{\omega}$ be a Schr\"odinger operator
satisfying a uniform Wegner estimate with H\"older exponent $s$ in an open interval $\mathcal J$  and
assumptions (R), (IAD).  Let
$\mathcal X \in C^\infty_{c,+}(\mathbb R)$ with $\mathcal X \equiv
1$ on some open interval $J\subset \mathcal J$, $\alpha\geq0$ and
$p>p(\alpha,s):= 15 \frac{d}{s}+2\alpha\frac{d}{s}$.  If

\be\label{momentum} \displaystyle\liminf_{T\rightarrow\infty}
\sup_{u\in\mathbb Z^d} T^{\frac s d}\mathbb P (\mathcal M_{u,\omega}(p,\mathcal X,T)> T^\alpha)=0,\ee
then $J\subset\Sigma_{MSA}$. In particular, it follows that
(\ref{momentum}) holds for any $p\geq0$.
\end{thm}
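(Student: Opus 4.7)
The strategy is to verify the initial length scale estimate in the form (\ref{ILSE2}) at every $E\in J$. Together with the uniform Wegner estimate assumed on $\mathcal J\supset J$, this forces $J\subset\Sigma_{MSA}$ by definition, hence $J\subset\Sigma_{SI}\subset\Xi^{DL}$ through Theorem~\ref{Bootstrap}.

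Fix $E\in J$ and pick a sequence $T_k\uparrow\infty$ realising the liminf in (\ref{momentum}). Introduce two scales $L=T^\gamma$ and $\eta=L^{-2\theta}$, with $\gamma>0$ to be optimised at the end. For each $x\in\Z$ consider the \emph{good events}
\bee
\Omega^{tr}_{T,x}&=&\{\omega:\mathcal M_{u,\omega}(p,\mathcal X,T)\leq T^\alpha\ \text{for every}\ u\in\Z\cap\L_{L/3}(x)\},\\
\Omega^{sep}_{T,x}&=&\{\omega:\dist(E,\sigma_{\omega,x,L})>\eta\},
\eee
and put $\Omega^{good}_{T,x}=\Omega^{tr}_{T,x}\cap\Omega^{sep}_{T,x}$. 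A union bound combined with hypothesis (\ref{momentum}) gives $\sup_{x\in\Z}\mathbb P((\Omega^{tr}_{T,x})^c)\leq c L^d\cdot o(T^{-s/d})$ along $\{T_k\}$, while the uniform Wegner estimate (\ref{WE}) yields $\sup_{x\in\Z}\mathbb P((\Omega^{sep}_{T,x})^c)\leq Q_E L^{d-2\theta s}$. For $\gamma$ small enough (depending on $p,\alpha,s,\theta,d$) both quantities are strictly less than $(2\cdot 841^d)^{-1}$ for $T_k$ large, so that $\inf_{x\in\Z}\mathbb P(\Omega^{good}_{T,x})>1-841^{-d}$.

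The deterministic core of the argument is to show that on the event $\Omega^{good}_{T,x}$ the box $\L_L(x)$ is $(\theta,E)$-suitable for $\Homega$. I would proceed in the spirit of \cite[Sect.~4]{GK3}. First, since $\mathcal X\equiv 1$ on $J$, Laplace inversion of the exponential time-average together with the spectral theorem converts the bound $\mathcal M_{u,\omega}(p,\mathcal X,T)\leq T^\alpha$ into a pointwise control on $\norm{\angles{X-u}^{p/2}P_\omega(J)\chi_u}_2$ in terms of $T$. Second, using the spectral separation $\dist(E,\sigma_{\omega,x,L})>\eta$ and a geometric / Helffer--Sj\"ostrand resolvent expansion, replace $P_\omega(J)\chi_u$ by $P_{\omega,x,L}(J)\chi_u$ at a cost of order $\eta^{-1}$, absorbed by a polynomial factor in $L$. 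Third, the resulting decay of the finite-volume spectral kernel between $\L_{L/3}(x)$ and the boundary layer $\Gamma_{x,L}$, combined once more with the spectral gap, gives $\norm{\Gamma_{x,L}R_{\omega,x,L}(E)\chi_{x,L/3}}_{x,L}\leq L^{-\theta}$, as required.

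\textbf{Main obstacle.} The hard part is this last chain of deterministic estimates: unlike in the annealed Theorem~\ref{Emom}, one cannot take expectations at intermediate steps and must instead carry the infinite-to-finite volume comparison \emph{pointwise in $\omega$} on $\Omega^{good}_{T,x}$. Tracking simultaneously (a) the union bound over the $O(L^d)$ centres $u\in\Z\cap\L_{L/3}(x)$ needed to pass from a single-site to a box statement, (b) the $\eta^{-1}$ loss in the resolvent expansion, and (c) the coupling $L\leftrightarrow\eta$ forced by the Wegner estimate, is precisely what balances the powers of $T$ in $L=T^\gamma$ and produces the sharper threshold $p>15\,d/s+2\alpha\,d/s$, that is, a loss of $3d/s$ with respect to the annealed case Theorem~\ref{Emom}. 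Once the ILSE has been established along the subsequence $\{T_k\}$ at scales $L_k=T_k^\gamma\gg\mathcal L_\theta(E)$, Theorem~\ref{Bootstrap} delivers $E\in\Sigma_{MSA}$, and since $E\in J$ was arbitrary, $J\subset\Sigma_{MSA}$, completing the proof.
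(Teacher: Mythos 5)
Your high-level strategy (verify the initial length scale estimate at each $E\in J$ and invoke Theorem~\ref{Bootstrap}) is correct, and you have the right intuition that the quenched hypothesis should yield ILSE along a subsequence. However, the proposal leaves the deterministic core unproven and the sketch of that step contains a conceptual error that is precisely the crux of the argument.

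The claim that ``Laplace inversion of the exponential time-average converts the bound $\mathcal M_{u,\omega}(p,\mathcal X,T)\leq T^\alpha$ into a pointwise control on $\norm{\angles{X-u}^{p/2}P_\omega(J)\chi_u}_2$'' is not correct. Slow transport for a fixed $\omega$ yields, via the Plancherel-type identity underlying \cite[Lemma~6.3]{GK3}, an \emph{$L^1$-in-energy} bound
\[
\int_{\mathbb R} \norm{\angles{X-u}^{p/2}R_\omega(E'+i\epsilon)\mathcal X(H_\omega)\chi_u}_2^2\,dE'
\leq \pi\,\epsilon^{-(\alpha+1)}\,\mathcal M_{u,\omega}(p,\mathcal X,1/\epsilon),
\]
and nothing pointwise. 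The paper's Lemma~\ref{deterministiclemma} handles this in the only way available: it introduces the set $A_{u,\omega,M,I}$ of energies at which the resolvent quantity is controlled, shows by Chebyshev that $|I\setminus A_{u,\omega,M,I}|\leq \pi/M$, and then for the fixed energy $E$ either $E\in A_{u,\omega,M,I}$ directly, or one finds $E_{u,\omega}\in A_{u,\omega,M,I}$ with $|E-E_{u,\omega}|\leq \pi/M$ and pays $|E-E_{u,\omega}|\cdot\norm{R_\omega(E+i\epsilon)}\norm{R_\omega(E_{u,\omega}+i\epsilon)}\lesssim M^{-1}\epsilon^{-2}$ through the resolvent identity. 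Balancing that error against the direct bound is where the extra $3d/s$ in the threshold comes from. Your proposal skips this measure-theoretic/resolvent-identity mechanism entirely, and it is not recoverable from a ``pointwise control on $P_\omega(J)\chi_u$.''

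A second issue concerns the passage from infinite to finite volume. You introduce a separate good event $\Omega^{sep}_{T,x}=\{\dist(E,\sigma_{\omega,x,L})>\eta\}$ with $\eta=L^{-2\theta}$, and propose to replace $P_\omega(J)\chi_u$ by $P_{\omega,x,L}(J)\chi_u$ via Helffer--Sj\"ostrand ``at a cost of order $\eta^{-1}$.'' A spectral gap of width $\eta$ controls only the \emph{norm} of $R_{\omega,x,L}(E)$, not its off-diagonal decay, and the swap of projections is far from routine; the loss $\eta^{-1}=L^{2\theta}$ moreover overwhelms the target decay $L^{-\theta}$ unless the bookkeeping is done carefully, which you do not do. The paper avoids this entirely by using the probabilistic Lemma~\ref{lemproba} (from \cite[Lemma~6.4]{GK3}), which bounds $\mathbb P(\norm{\chi_2 R_{\omega,L}(E)\chi_1}_L>a/2)$ directly in terms of $\mathbb P(\norm{\chi_2 R_\omega(E+i\epsilon)\chi_1}>a/L^\gamma)$ plus a Wegner term, rather than by conditioning on a separate resolvent-norm event. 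The decomposition with $\chi_{y,L}=\chi_{y,2L/3}+\chi_{y,L\setminus 2L/3}$ and the subsequent insertion of $\mathcal X(H_\omega)$, inherited verbatim from the proof of Theorem~\ref{Emom} up to equation~\eqref{probXiii}, is the part you would need to redo from scratch in your scheme. In short: the route you propose is genuinely different and the hard step is asserted rather than proved; the proof as submitted has a gap precisely at the deterministic lemma it needs.
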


\begin{rem}{\it If the moment increases almost surely at any other rate less than polynomial,
this implies in particular condition (\ref{momentum}) for some $\alpha>0$, and
the result follows. 

Moreover, if condition (2.28) in  \cite[Theorem 2.11]{GK3} holds for $\alpha>0$ and
$p>p(\alpha,s)+d$, then condition (\ref{momentum}) holds for
$\alpha'=\alpha+\delta$ and the same $p$, where $0<s/2<\delta<\frac{s(p-p(\alpha,s))}{2d}$ and
 $p>p(\alpha',s)$, since  by Chebyshev's inequality we have

\be T^{\frac s d}\sup_u\mathbb P (\mathcal M_{u,\omega}(p,\mathcal X,T)>
T^{\alpha'})\leq \frac{1}{T^{\alpha+\delta-s/2}}\sup_u\mathbb E (\mathcal
M_{u,\omega}(p,\mathcal X,T)) \mbox{ for all $T>0$. }
\ee
This also shows that (\ref{momentum}) is indeed a weaker condition than (\ref{Emomentum}).}

\end{rem}

\bigskip

By Theorem \ref{Emom} we have that $\Xi^{DL}\subset\Sigma_{MSA}$, so Theorems
2.8 and 2.10 of \cite{GK3} hold in our setting. Thus, the local transport exponent
$\beta(E)$ gives a characterization of the metal-insulator transport transition
for non ergodic models as for the usual ergodic setting.  Moreover, if we consider only the random moments in a quenched regime to behave asymptotically slow, we see the same behavior for the ergodic and non ergodic setting, in agreement with the annealed regime.

\bigskip

\section{Proof of Theorem \ref{Bootstrap}}\label{Proof1}

\subsection{Generalized eigenfunction expansion}
We have to construct a generalized eigenfunction expansion adapted to the non ergodic case.  Compared to \cite[Section 2.3]{GK3} we shall use a family of weighted spaces rather than just one in particular, using translations in $u\in\mathbb Z^2$ of the operator $T$
defined there and thus without using translation invariance in the proofs.

\bigskip

Let $T_u$ be the operator in $\H$ given by multiplication by
the function $(1+|x-u|^2)^\nu$, where $\nu>d/4$, $u\in\mathbb Z^2$.  We define
the weighted spaces $\mathcal
H_\pm^u$ as

\be\H_\pm^u=L^2(\Rd,(1+|x-u|^2)^{\pm 2\nu}
dx;\C). \ee

The sequilinear form 

\[\langle\phi_1,\phi_2\rangle_{\H_+^u,\H_-^u}=\int
\overline\phi_1 \phi_2 (x)dx \hspace{0.5cm}\mbox{for $\phi_1\in \H_+^u $
, $\phi_2\in\H_-^u$ }\]
makes $\H_+^u$ and $\H_-^u$
conjugates dual to each other and we denote by $\dagger$ the conjugation with
respect to this duality.  The natural injections  $\iota_+^u:\mathcal
H_+^u\rightarrow\H$ and $\iota_-^u:\H\rightarrow\H_-^u$
are continuous with dense range, with $(\iota_+^u)^\dagger=\iota_-^u$.  The
operators $T_{u,+}:\H_+^u\rightarrow\H$ and $T_{u,-}:\mathcal
H\rightarrow\H_-^u$ defined by $T_{u,+}=T_u\iota_+^u$,
$T_{u,-}=\iota_-^uT_u$ on $\mathcal D(T_u)$ are unitary with
$T_{u,-}=T_{u,+}^\dagger$.
Note that

\be \Vert \chi_{x,L}  \Vert_{\H, \mathcal
H^u_+}=\Vert \chi_{x,L}  \Vert_{\H^u_-, \mathcal
H}\leq C_{L,d, \nu}(1+|x-u|^2)^\nu, \ee
for all $x\in\Rd$ and $L>0$.

With this redefinition we can follow \cite{GK1}, restating assumption GEE for
non ergodic operators.  We consider a fixed open interval
$\mathcal I$  and we recall that
$P_{\omega}(J)= \chi_J(H_\omega)$ is the spectral projection of the operator
$H_\omega$ on a Borel set $ J\subset\mathbb R$.

\bigskip

(\textbf{UGEE})\emph{ For some $\nu>d/4$, the set $\mathcal D_{+}^{u,\omega}=\{
\phi\in\mathcal D(H_\omega)\cap\H_{+}^u:\mbox{ }
H_\omega\phi\in\H_+^u\}$ is dense in $\H_+$ and an operator core
for $H_\omega$ for $\mathbb P-$a.e. $\omega$ and all $u$.  There exists a
bounded function $f$, strictly positive on the spectrum of $H_\omega$ such
that,}  \}

\be \sup_u\mbox{ }\tr_\H\left( T_u^{-1}f(H_\omega)P_{\omega}(\mathcal
I)T_u^{-1}  \right)<\infty, \ee
\emph{for $\mathbb P-$a.e. $\omega$}.

If UGEE holds, for almost every
$\omega$ and all $u$ we have

\be \tr_\H \left(T_u^{-1}P_\omega(J\cap\mathcal I)T_u^{-1}
\right) <\infty, \ee
for all bounded sets $J$. Thus with probability one, for all $u$

\be\label{Defmu_u,omega} \mu_{u,\omega}(J)= \mbox{tr}_\H \left(
T_u^{-1}P_{\omega}(J\cap\mathcal I)T_u^{-1} \right) \ee
is a spectral measure for the restriction of $H_\omega$ to the
Hilbert space $P_{\omega}(\mathcal I)\H$, and for
every bounded set $J$,
\be \mu_{u,\omega}(J)<\infty.  \ee

Then, we have a generalized eigenfunction expansion as in  \cite[Section
2]{GK1}: for every $u$, there exists a
$\mu_{u,\omega}$-locally integrable function ${\bf P}_{u,\omega}(\tilde
\lambda)$ from $\mathbb R$ into $\mathcal T_1(\H^u_+,
\H^u_-)$, the space of trace class operators from $\mathcal
H^u_+$ to $\H^u_-$, with

\be {\bf P}_{u,\omega}(\tilde \lambda)={\bf P}_{u,\omega}(\tilde
\lambda)^\dagger
\ee
and 
\be \tr_\H \left(T_{u,-}^{-1}{\bf P}_{u,\omega}(\tilde \lambda)T_{u,+}^{-1}
\right) =1 \hspace{0.5cm}\mbox{for $\mu_{u,\omega}$-a.e. $\tilde\lambda$,}
\ee
such that 

\be\label{ProjectionPu} \iota_-^u P_\omega(J\cap\mathcal I)\iota_+^u=\int_J {\bf P}_{u,\omega}(\tilde
\lambda)d\mu_{u,\omega}(\tilde\lambda) \hspace{0.5cm}\mbox{for bounded Borel sets
$J$}
\ee
where the integral is the Bochner integral of $\mathcal T_1(\H^u_+,
\H^u_-)$-valued functions.

The following (a restatement of assumption SGEE), is 
a stronger version of UGEE:

 \textbf{(USGEE)} \emph{ We have that (UGEE) holds with } 
\be \label{USGEE} \sup_u \mathbb E \left( [\tr_\H\left(
T_u^{-1}f(H_\omega)P_{\omega}(\mathcal I)T_u^{-1}\right)]^2 \right)<\infty  .\ee

So for every bounded set $J$,

\be \sup_u \mathbb E(\mu_{u,\omega}(J)^2)<\infty.  \ee

\subsection{Kernel Decay and Dynamical Localization}

Following the arguments in \cite{GK1} for ergodic operators, we can show that
HS-strong dynamical localization is a consequence of the applicability of the
Bootstrap MSA for the non ergodic setting  (\cite[Theorem 3.4]{GK1} with the
stronger initial ILSE (\ref{ILSE}) instead of the original one).

We can restate Lemma 2.5 and Lemma 4.1 \cite{GK1} as follows,
extending the proofs to our new definitions,

\begin{lem}\label{Pu} Let $H_\omega$ be a random operator satisfying assumption
GEE.  We have with probability one, for all $u$, that
for $\mu_{u,\omega}$-almost every $\tilde\lambda$,

\be \Vert \chi_x {\bf P}_{u,\omega}(\tilde\lambda)\chi_y \Vert_1\leq C
(1+|x-u|^2)^\nu(1+|y-u|^2)^\nu \ee
for all $x,y \in\Rd$, with $C$ a finite constant independent
of $\tilde\lambda,\omega$ and $u$.

Suppose, moreover, that assumption EDI in \cite{GK1} is satisfied in some
compact
interval $I_0\subset\mathcal I$.  Given $I\subset I_0$, $m>0$, $L\in6 \mathbb N$
and
$x,y\in \mathbb Z^d$, if $\omega\in R(m,L,I,x,y)$, with $R(m,L,I,x,y)$ defined
as in (\ref{R(x,y)}), then

\be \Vert \chi_x {\bf P}_{u,\omega}(\tilde\lambda)\chi_y \Vert_2\leq C
e^{-mL/4} (1+|x-u|^2)^\nu(1+|y-u|^2)^\nu,  \ee
for $\mu_{u,\omega}$-almost all $\tilde\lambda\in I$, with
$C=C(m,d,\nu,\tilde\gamma_{I_0})$, where $\tilde\gamma_{I_0}$ is the
constant on assumption EDI.
\end{lem}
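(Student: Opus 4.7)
The plan is to adapt the proofs of Lemma 2.5 and Lemma 4.1 in \cite{GK1} to the family of weighted spaces $\{\mathcal H_\pm^u\}_{u\in\mathbb Z^d}$, using the weight $T_u$ in place of the single weight $T$ used there. The only new point is that every constant must be extracted in a form that is uniform in $u$, since translation covariance is no longer available to reduce the problem to a single reference point.

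For the first inequality I would start from the factorisation
\be
\chi_x {\bf P}_{u,\omega}(\tilde\lambda)\chi_y = \bigl(\chi_x T_{u,-}\bigr)\,\bigl(T_{u,-}^{-1}{\bf P}_{u,\omega}(\tilde\lambda) T_{u,+}^{-1}\bigr)\,\bigl(T_{u,+}\chi_y\bigr),
\ee
in which the middle operator has trace norm equal to $1$ by the normalisation $\tr_\H\bigl(T_{u,-}^{-1}{\bf P}_{u,\omega}(\tilde\lambda)T_{u,+}^{-1}\bigr)=1$. The outer operators are bounded by the norms of $\chi_{x,1}$ and $\chi_{y,1}$ between $\mathcal H$ and the weighted spaces, which by the estimate $\Vert \chi_{x,1}\Vert_{\mathcal H,\mathcal H_+^u}\leq C_{d,\nu}(1+|x-u|^2)^\nu$ and its $\dagger$-dual provide precisely the $u$-uniform polynomial factors. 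Since the trace class is an ideal in $\mathcal B(\mathcal H)$, the composition bound yields the claimed inequality with a constant independent of $u$, $\omega$ and $\tilde\lambda$.

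For the second inequality I would invoke the eigenfunction decay inequality (EDI) of \cite{GK1}, which represents a generalised eigenfunction locally at $x$ in terms of $\Gamma_{x,L} R_{\omega,x,L}(\tilde\lambda)\chi_{x,L/3}$ applied to the same eigenfunction at a neighbouring region. Applying this representation at both endpoints $x$ and $y$ of the kernel ${\bf P}_{u,\omega}(\tilde\lambda)$, and using the hypothesis $\omega\in R(m,L,I,x,y)$ to iterate along a chain of $(m,E)$-regular boxes, one obtains an exponential factor $e^{-mL/4}$ multiplying a residual trace-norm expression of the type controlled in the first part of the lemma. Combining these estimates, and this time using that $\mathcal J_2$ (rather than $\mathcal J_1$) is an ideal, produces the Hilbert--Schmidt bound with the stated weight $(1+|x-u|^2)^\nu(1+|y-u|^2)^\nu$.

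The main obstacle is bookkeeping rather than analysis: one has to verify that each inequality imported from \cite{GK1} can be restated with a constant that is uniform in the reference point $u$, and that the polynomial weights coming from the distinct factors $T_u$, $\chi_x$, $\chi_y$ combine exactly into the stated form. The $u$-uniform norm bound for $\chi_{x,L}$ quoted above is the key input that replaces the translation argument of \cite{GK1}; with it in hand, the rest of the proof is essentially a transcription.
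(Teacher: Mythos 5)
Your argument for the trace-norm bound is correct: sandwiching ${\bf P}_{u,\omega}(\tilde\lambda)$ between $T_{u,\pm}$, using that $T_{u,-}^{-1}{\bf P}_{u,\omega}(\tilde\lambda)T_{u,+}^{-1}$ is positive with unit trace (hence unit trace norm), and peeling off the $u$-uniform polynomial factors $\Vert\chi_x\Vert_{\mathcal H_-^u,\mathcal H}$ and $\Vert\chi_y\Vert_{\mathcal H,\mathcal H_+^u}$ is precisely how \cite[Lemma~2.5]{GK1} transfers to the $u$-indexed family of weights, and keeping every constant uniform in $u$ is indeed the only novelty.

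For the second bound, however, the mechanism you describe does not work. The event $R(m,L,I,x,y)$ of (\ref{R(x,y)}) only guarantees that, for each $E\in I$, \emph{one} of the two boxes $\L_L(x)$ or $\L_L(y)$ is $(m,E)$-regular, not both, and it gives no chain of regular boxes between $x$ and $y$; so you cannot ``apply EDI at both endpoints'' nor ``iterate along a chain'' --- the iteration belongs to the multiscale analysis run \emph{before} this lemma, which is what produces the probability bound on $R(m,L,I,x,y)$. What the adapted proof of \cite[Lemma~4.1]{GK1} actually does is diagonalize $T_u^{-1}{\bf P}_{u,\omega}(\tilde\lambda)T_u^{-1}=\sum_k\alpha_k\,\phi_k\langle\phi_k,\cdot\,\rangle$ with $\phi_k$ orthonormal in $\mathcal H$ and $\sum_k\alpha_k=1$, set $\psi_k=T_u\phi_k$ (generalized eigenfunctions of $H_\omega$ with generalized eigenvalue $\tilde\lambda$), bound $\Vert\chi_x{\bf P}_{u,\omega}(\tilde\lambda)\chi_y\Vert_2\leq\sum_k\alpha_k\Vert\chi_x\psi_k\Vert\,\Vert\chi_y\psi_k\Vert$, and then apply EDI \emph{once} at whichever endpoint carries the regular box, say $\Vert\chi_x\psi_k\Vert\leq\tilde\gamma_{I_0}e^{-mL/2}\Vert\Gamma_{x,L}\psi_k\Vert\leq\tilde\gamma_{I_0}e^{-mL/2}\Vert T_u\Gamma_{x,L}\Vert$; the other endpoint is controlled only by the crude weighted bound $\Vert\chi_y\psi_k\Vert\leq\Vert T_u\chi_y\Vert\leq C(1+|y-u|^2)^\nu$. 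Since $\Vert T_u\Gamma_{x,L}\Vert\leq C_L(1+|x-u|^2)^\nu$ with $C_L$ polynomial in $L$, absorbing $C_L\,e^{-mL/2}$ into $e^{-mL/4}$ gives the stated exponent. You should replace your ``both endpoints/chain'' step by this single application of EDI at the regular endpoint.
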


\bigskip

\begin{proof}[Proof of Theorem \ref{Bootstrap}]
To apply the MSA in the non ergodic case we first need to verify for an operator
 satisfying only properties (R), (IAD) and (UWE), the standard assumptions (SLI),
(EDI) \cite{GK1}, plus (UNE) and (USGEE), which are stronger assumptions than those stated in the
mentioned article.  

\smallskip

As for (SLI) and (EDI), these are deterministic assumptions that hold for each
$\omega \in \Omega$ and their proof, done in \cite[Appendix A]{GK3}, relies on
property (R), with no use of ergodicity.  In the same appendix we see that
assumption (NE) is uniform on cubes centered in $x\in\Rd$ and relies on
property (R) so it holds in our more general setting.  The same is true for
 \cite[Lemma A.3]{GK3}, and can be extended in an analog way to the case $H_0=H_B$  \cite[Section 2.1]{BGKS}, proving the first part of (USGEE) (and (UGEE)).

As for the trace estimate (\ref{USGEE}) , for the case $H_0=-\Delta$ it follows
from  \cite[Lemma A.4]{GK3} and  \cite[Theorem 1.1]{KKS}, taking $V=\langle
X-u\rangle^{-2\nu}$ there, the result being uniform in $u$.  It can be extended
to the case $H_0=H_B$ as in  \cite[Proposition 2.1]{BGKS}.

To obtain the basic result of MSA  \cite[Theorem 3.4]{GK1} we need conditions
(IAD), (SLI), (UNE) and (UWE) to follow an analog iteration procedure.  Recall that
in their article, Germinet and Klein take two versions of MSA by Figotin and
Klein, improve their estimates yielding other two MSA and then
bootstrapping them to obtain the strongest result out of the weakest hypothesis,
so in order to extend this results to the non ergodic setting we reformulate this
methods. Each step consists of a purely geometric deterministic part where we
use SLI, and therefore it does not depend on the placement of the boxes were we
perform the procedure, and a probabilistic part, where we use (UWE) instead of
(WE) to obtain an estimate on the probability of having bad events, in a stronger
sense than the usual, that is, uniform with respect to the placement of the box in space.

 We begin with the single energy multiscale analyses, Theorems 5.1 and 5.6
\cite{GK1}, which in our non-ergodic setting consists in estimating the decay
of 

\be p_L=\displaystyle\sup_{x\in \mathbb Z^d} p_{x,L},\ee
 where

\be p_{x,L}= \mathbb P\{ \L_L(x) \mbox{ is bad} \}
\ee
(here a box is bad if it is not $(\theta,E)$-\emph{suitable} for
$H_\omega$).  In the ergodic case we need only to consider $p_{0,L}$. 
Hypothesis (\ref{ILSE}) ensures we can follow the same iteration procedure in
all boxes centered in $x\in \mathbb Z^d$, where $p_{0,L}$ is thus replaced by
$p_L$.  We use properties SLI and (UWE) instead of WE, and the deterministic
arguments remain the same, since  they do not depend on the location of the box. 
Considering a H\"older exponent $s$ in WE implies that the choice of the initial
length scale will also depend on $s$.

Next we consider the energy interval multiscale analyses, Theorems 5.2 and 5.7
\cite{GK1}, which in our general setting consists in estimating 

\be\tilde p_L=\displaystyle\sup_{x,y\in \mathbb Z^d \atop |x-y|>L+\varrho}
\tilde p_{x,y,L},\ee 
with

\be \tilde p_{x,y,L}= \mathbb P\{R(m,L,I(\delta_0),x,y)^c\}\ee
where $I(\delta_0)=[E-\delta_0,E+\delta_0]$, for some $\delta_0>0$
and 

\be \label{R(x,y)} R(m,L,I(\delta_0),x,y)=\{\omega: \mbox{ for every $E\in
I(\delta_0)$,  }\L_L(x) \mbox{ or }\L_L(y) \mbox{ is good}\}\ee
(here a box is good if it is $(m,E)$-\emph{regular} for $H_\omega$,
with $m$ to be specified later).  In the ergodic case it suffices to consider
$\tilde p_{x,y,L}$.  We can thus follow the original iteration procedure on this
estimate, replacing $\tilde p_{x,y,L}$ by $\tilde p_L$, obtaining an analog of
 \cite[Eq. 3.4]{GK1}, i.e., there exists $\delta_0>0$ such that given any
$\zeta$, $0<\zeta <1$ there is a length scale $L_0<\infty$ and a mass
$m_\zeta=m(\zeta,L_0)>0$ such that if we set $L_{k+1}=[L_k^\alpha]_{6\mathbb
N}$, $0<\alpha<\zeta^{-1}$ ,$k=0,1,2,...$ we have

\be\label{outputMSA} \displaystyle\inf_{x,y\in \mathbb Z^d \atop
|x-y|>L+\varrho}\mathbb P\{R(m_\zeta,L_k,I(\delta_0),x,y) \} \geq
1-e^{-L^\zeta_k} .\ee

\smallskip

To derive results on the spectrum and the dynamics of the operator from this estimate we need to consider also conditions
EDI and USGEE.  Thus, with Lemma \ref{Pu} in hand, (\ref{outputMSA}) and USGEE
we can follow the proof of
 \cite[Theorem 3.8]{GK1} with minor modifications. We want to show that if (\ref{outputMSA}) holds we have that for any
$0<\zeta<1$, there is a finite constant $C_\zeta$ such that

\be\label{KerDecay} \sup_u \mathbb E \left(\displaystyle\sup_{\Vert
f\Vert\leq 1} \Vert \chi_{x+u} f(H_\omega) P_\omega(I(\delta_0))\chi_u\Vert_2^2
\right)\leq C_\zeta e^{-|x|^\zeta}, \ee
For this, we consider the pair of points $x,y$ as the pair $x+u,u$, and fix $x\in\mathbb Z^d$ and $k$ such that $L_{k+1}+\varrho>|x|>L_k+\varrho$.  We split the expectation in (\ref{KerDecay}) in two parts: the first one over the set
$R(m_\zeta,L_k,I(\delta_0),x+u,u)$ and the second one over its complement, which has probability
less than $e^{-L^\zeta_k}$, uniformly in $u$, by (\ref{outputMSA}).  We follow the arguments in  \cite[Eq. 4.8-4.13]{GK1}.
By (\ref{ProjectionPu}) and Lemma \ref{Pu} we can write, for a positive constant $C_1$,

\be \displaystyle\sup_{\Vert
f\Vert\leq 1} \Vert \chi_{x+u} f(H_\omega) P_\omega(I(\delta_0))\chi_u\Vert_2 \leq C_1e^{-L_k^\zeta}\mu_{u,\omega}(I). \ee 
This implies,
\begin{align}\label{exp1} & \sup_u\mathbb E \left( \displaystyle\sup_{\Vert
f\Vert\leq 1} \Vert \chi_{x+u} f(H_\omega) P_\omega(I(\delta_0))\chi_u\Vert_2^2
; R(m_\zeta,L_k,I(\delta_0),x+u,u)\right) \nonumber\\ & \leq  C_1^2\sup_u\mathbb E\{(\mu_{u,\omega}(I(\delta_0)))^2\}e^{-2L_k^\zeta} \end{align}
As for the expectation over $R(m_\zeta,L_k,I(\delta_0),x+u,u)^c$, (\ref{outputMSA}) implies that \[\displaystyle\sup_u\mathbb P(R(m_\zeta,L_k,I(\delta_0),x+u,u)^c)<e^{-L^\zeta_k}\]
this yields,

\begin{align}\label{exp2} & \sup_u\mathbb E \left(\displaystyle\sup_{\Vert
f\Vert\leq 1} \Vert \chi_{x+u} f(H_\omega) P_\omega(I(\delta_0))\chi_u\Vert_2^2
; R(m_\zeta,L_k,I(\delta_0),x+u,u)^c\right)\nonumber\\ & \leq  4^\nu\sup_u \mathbb E\{(\mu_{u,\omega}(I(\delta_0)))^2\}^{\frac{1}{2}}e^{-\frac{1}{2}L_k^\zeta} \end{align}
where we use the fact that by (\ref{Defmu_u,omega}) we can write

\be \Vert \chi_{x+u} f(H_\omega) P_\omega(I(\delta_0))\chi_u\Vert_2^2 \leq \norm{f}^2\norm{P_\omega(I(\delta_0))\chi_u}_2^2 \leq C\norm{f}\mu_{u,\omega}(I(\delta_0))  \ee 
Combining  (\ref{exp1}) and (\ref{exp2}), using USGEE we obtain the desired decay, namely (\ref{KerDecay}).

\smallskip

Now we can prove a strong version of dynamical localization as in  \cite[Corollary 3.10]{GK1}.  Notice that, if $p>2$

\begin{align} 
\langle{X-u}\rangle^p= \sum_{x\in\mathbb Z^d} (1+\Vert y-u\Vert^2)^{p/2}\chi_x(y)
&\leq  C_d \sum_{x\in\mathbb Z^d} (1+\Vert x-u\Vert^2)^{p/2}\chi_x(y)\nonumber\\
& = C_d \sum_{x\in\mathbb Z^d} (1+\Vert x\Vert^2)^{p/2}\chi_{x+u}(y),\end{align}
so we have, 

\begin{align}
\Vert \langle{X-u}\rangle^{p/2} &f(H_{\omega})P_\omega(I(\delta_0))\chi_u
\Vert_2^2 \nonumber\\ 
& = \tr [\chi_u f(H_{\omega})P_\omega(I(\delta_0))\langle{X-u}\rangle^p 
P_\omega(I(\delta_0))f(H_{\omega})\chi_u]\nonumber\\
& \leq C_d \sum_{x\in\mathbb Z^d} (1+\Vert x\Vert^2)^{p/2} \tr [\chi_u
f(H_{\omega})P_\omega(I(\delta_0)) \chi_{x+u}
P_\omega(I(\delta_0))f(H_{\omega})\chi_u]\nonumber\\
& = C_d\sum_{x\in\mathbb Z^d} (1+\Vert x\Vert^2)^{p/2}\Vert
\chi_{x+u}f(H_{\omega})P_\omega(I(\delta_0))\chi_u \Vert_2^2 
\end{align}
Taking the expectation and then the supremum over $u\in\mathbb Z^2$, by (\ref{KerDecay}) we obtain
strong HS-dynamical localization in the energy interval $I(\delta_0)$.

Following the proof of \cite[Corollary 3]{GK4}, after adapting \cite[Theorem 1]{GK4} to our setting 
 we obtain the summable uniform decay of eigenfunction correlations SUDEC.  As for property DFP, it is a consequence of (\ref{KerDecay}) combined with \cite[Theorem 1.4]{BGK}, which is a deterministic result also valid in our setting, in the lines of \cite[Theorem 3]{GK4} .

\end{proof}


\section{Proofs of Theorems \ref{Emom} and \ref{mom}} \label{Proofs23}

Here we can proceed as in \cite{GK3}. First we state the following Lemma, which
is an intermediate result in the proof of \cite[Lemma 6.4]{GK3}, adapted to the
(UWE) with H\"older exponent $s$.  We consider a cube $\Lambda_L(x)$ with arbitrary $x$ so we omit it from the notation.

\begin{lem}\label{lemproba}
Let $H_\omega$ be a random Schr\"odinger operator satisfying a
uniform Wegner estimate in an open interval $\mathcal I$, with Wegner
constant $Q_E$ and H\"older exponent $s$. Let $p_0>0$ and $\gamma>d$.  For each
$E\in\mathcal
I$, there exists $\mathcal L=\mathcal L(d,E,Q_E,\gamma, p_0,s)$
bounded on compact subsets of $\mathcal I$, such that, given
$L\in2\mathbb N$ with $L\geq \mathcal L$, and subsets $B_1$ and
$B_2$ of $\L_L$(not necessarily disjoint) with $B_1\subset
\L_{L-5/2}$ and $\bar\L_{L-1}\setminus\L_{L-3}\subset B_2$, then
for each
$a>0$ and $0<\epsilon\leq 1$ we have

 \be\label{lemproba1} \mathbb P\left(\Vert \chi_2
R_{\omega,L}(E+i\epsilon)\chi_1\Vert_L>\frac{a}{4}\right)\leq
\mathbb P\left(\Vert \chi_2R_\omega(E+i\epsilon)\chi_1  \Vert
>\frac{a}{L^\gamma}\right) + \frac{p_0}{10}, \ee
and

\be\label{lemproba2} \mathbb P\left(\Vert \chi_2
R_{\omega,L}(E)\chi_1\Vert_L>\frac{a}{2}\right)\leq \mathbb
P\left(\Vert \chi_2R_\omega(E+i\epsilon)\chi_1  \Vert
>\frac{a}{L^\gamma}\right) +
Q_E\left(\frac{4\epsilon}{a}\right)^{s/2}L^d+ \frac{p_0}{10},\ee
where $\chi_i$ stands for $\chi_{B_i}$, $i=1,2$.
\end{lem}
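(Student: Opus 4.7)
The plan is to follow the structure of \cite[Lemma 6.4]{GK3}, paying attention to how the H\"older exponent $s$ in (UWE) enters the final bound. The two main ingredients are the geometric resolvent identity, which relates $\chi_2 R_{\omega,L}(z)\chi_1$ to $\chi_2 R_\omega(z)\chi_1$ modulo a boundary term supported in $\bar\L_{L-1}\setminus\L_{L-3}\subset B_2$, and the uniform Wegner estimate, which controls the probability that $E$ is close to $\sigma_{\omega,L}$. Since neither (SLI)/(EDI) nor (UWE) uses ergodicity, no new deterministic input is required beyond what is already proved in \cite[Appendix A]{GK3}.

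For \eqref{lemproba1} I would start from the geometric resolvent identity
\[
\chi_2 R_{\omega,L}(z)\chi_1 = \chi_2 R_\omega(z)\chi_1 + \chi_2 R_\omega(z)\,W_L\, R_{\omega,L}(z)\chi_1,
\]
where $W_L=[H_\omega,\phi]$ is the commutator of $H_\omega$ with a smooth cutoff $\phi$ equal to $1$ on $B_1$ and supported in $\L_L$; the hypothesis $B_1\subset\L_{L-5/2}$ allows one to arrange $\mathrm{supp}(\nabla\phi)\subset\bar\L_{L-1}\setminus\L_{L-3}\subset B_2$, so that $W_L$ can be sandwiched between two copies of $\chi_2$. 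Using the deterministic bounds from (EDI) and (NE) with $z=E+i\epsilon$, together with the uniform Wegner estimate to remove, on an event of probability at least $1-p_0/10$, the large-norm contribution of $R_{\omega,L}(E+i\epsilon)$ on a suitable sub-interval of $\mathcal I$, one shows that, for $L\geq \mathcal L$, the event $\{\|\chi_2 R_{\omega,L}(E+i\epsilon)\chi_1\|_L>a/4\}$ is contained in $\{\|\chi_2 R_\omega(E+i\epsilon)\chi_1\|>a/L^\gamma\}$. The polynomial loss $L^\gamma$ (with $\gamma>d$) is absorbed into the choice of $\mathcal L=\mathcal L(d,E,Q_E,\gamma,p_0,s)$, which is uniform in $x$ thanks to the uniformity of (UWE).

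For \eqref{lemproba2} I would reduce to \eqref{lemproba1} via the first resolvent identity
\[
R_{\omega,L}(E)-R_{\omega,L}(E+i\epsilon)=i\epsilon\,R_{\omega,L}(E)\,R_{\omega,L}(E+i\epsilon),
\]
valid on the event $\mathcal E_\eta:=\{\dist(E,\sigma_{\omega,L})>\eta\}$. Setting $\eta:=2\sqrt{\epsilon/a}$ (so that $\eta\geq \epsilon$ in the relevant regime $\epsilon\leq 4/a$), on $\mathcal E_\eta$ both $\|R_{\omega,L}(E)\|$ and $\|R_{\omega,L}(E+i\epsilon)\|$ are bounded by $1/\eta$, so the norm of the difference is at most $\epsilon/\eta^2=a/4$; hence
\[
\{\|\chi_2 R_{\omega,L}(E)\chi_1\|_L>a/2\}\cap\mathcal E_\eta\subset\{\|\chi_2 R_{\omega,L}(E+i\epsilon)\chi_1\|_L>a/4\},
\]
to which \eqref{lemproba1} applies. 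The complementary event is controlled by the H\"older uniform Wegner estimate,
\[
\P(\mathcal E_\eta^c)\leq Q_E\,\eta^s L^d = Q_E\,(4\epsilon/a)^{s/2}L^d,
\]
which reproduces exactly the extra term in \eqref{lemproba2}.

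The main obstacle is essentially bookkeeping: one must verify that the geometric constraints $B_1\subset\L_{L-5/2}$ and $\bar\L_{L-1}\setminus\L_{L-3}\subset B_2$ are sharp enough to place $\chi_2$ on both sides of the boundary operator $W_L$, and then track the constants in (EDI), (NE), and (UWE) so that the threshold scale $\mathcal L$ can indeed be chosen uniformly in $x\in\Z$ and bounded on compact subsets of $\mathcal I$. The only genuinely new step compared to the $s=1$ case of \cite[Lemma 6.4]{GK3} is the optimization $\eta=2\sqrt{\epsilon/a}$, which produces the exponent $s/2$ in the factor $(4\epsilon/a)^{s/2}$; the remainder of the argument is structurally identical to the ergodic case.
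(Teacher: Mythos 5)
Your overall strategy is the right one, and it is what the paper intends (the paper gives no proof of this lemma, presenting it as the intermediate step of \cite[Lemma 6.4]{GK3} rephrased for H\"older exponent $s$). Your argument for \eqref{lemproba2} is essentially correct and clean: the first resolvent identity on the event $\mathcal E_\eta=\{\dist(E,\sigma_{\omega,L})>\eta\}$, the choice $\eta=2\sqrt{\epsilon/a}$, and the Wegner bound $\P(\mathcal E_\eta^c)\le Q_E\eta^sL^d=Q_E(4\epsilon/a)^{s/2}L^d$ reproduce exactly the extra term. (The parenthetical ``so that $\eta\geq\epsilon$'' is not actually used anywhere; drop it.)

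For \eqref{lemproba1}, however, the geometric resolvent identity you wrote is the wrong one for this estimate. Starting from $(H_{\omega,L}-z)\bigl(\phi R_\omega(z)\chi_1\bigr)=\chi_1+W_LR_\omega(z)\chi_1$ (with $\phi\equiv 1$ on $\L_{L-3}\supset B_1$, $\mathrm{supp}\,\phi\subset\L_{L-1}$, $\mathrm{supp}\nabla\phi\subset\bar\L_{L-1}\setminus\L_{L-3}\subset B_2$), the usable identity is
\begin{equation*}
\chi_2 R_{\omega,L}(z)\chi_1=\chi_2\phi R_\omega(z)\chi_1-\chi_2 R_{\omega,L}(z)\,W_L\,R_\omega(z)\chi_1,
\end{equation*}
i.e.\ the \emph{finite-volume} resolvent $R_{\omega,L}$ sits next to $\chi_2$ and the \emph{full} resolvent $R_\omega$ sits next to $W_L$ and $\chi_1$. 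Then $\|\chi_2\phi R_\omega(z)\chi_1\|\le\|\chi_2 R_\omega(z)\chi_1\|$, and EDI/NE applied to $W_LR_\omega(z)\chi_1$ (the lower-order term vanishes because $\chi_W\chi_1=0$ and $\epsilon\le1$) yields $\|W_LR_\omega(z)\chi_1\|\le C\|\chi_2 R_\omega(z)\chi_1\|$, so $\|\chi_2 R_{\omega,L}(z)\chi_1\|\le(1+C\|R_{\omega,L}(z)\|)\|\chi_2 R_\omega(z)\chi_1\|$; the only random quantity left to control is $\|R_{\omega,L}(E+i\epsilon)\|$, which Wegner handles. In your version $\chi_2 R_\omega(z)W_LR_{\omega,L}(z)\chi_1$, the resolvents are interchanged: when you try to bound the boundary term you are forced to estimate $\|\chi_2 R_\omega(z)W_L\|$, which by EDI involves $\|\chi_2 R_\omega(z)\chi_W\|$ with $\chi_W\subset B_2$ overlapping $\chi_2$, so this factor is of order $1/\epsilon$ rather than $\|\chi_2 R_\omega(z)\chi_1\|$, and the argument collapses. (Also, what the identity really gives is $\chi_2\phi R_{\omega,L}\chi_1$, not $\chi_2 R_{\omega,L}\chi_1$, and since $\chi_2\phi\le\chi_2$ that inequality goes the wrong way.)

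One more point worth noting: once the corrected GRI is combined with the Wegner bound to force $\dist(E,\sigma_{\omega,L})\gtrsim L^{-\gamma}$ with probability $\ge1-p_0/10$, the resulting estimate $Q_E\,C^s L^{d-s\gamma}\le p_0/10$ requires $s\gamma>d$, i.e.\ $\gamma>d/s$ rather than $\gamma>d$. This matches how the paper actually invokes the lemma in the proof of Theorem \ref{Emom} (``where $\gamma>d/s$''), and strongly suggests the hypothesis $\gamma>d$ in the lemma statement is a leftover from the $s=1$ case in \cite{GK3}. Your sketch would inherit the same gap; it should be flagged.
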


\bigskip

\begin{proof}[Proof of Theorem \ref{Emom}]

By the same arguments used in \cite[Theorem 4.2]{GK3}, it suffices to show that,
under condition (\ref{momentum}), for each
$E\in J$ there is some $\theta>d/s$ such that 

\be\label{CImsa} \displaystyle \limsup_{L\rightarrow\infty}
\inf_{y\in\mathbb Z^d} \mathbb P \left( \Vert \Gamma_{y,L}
R_{\omega,y,L}(E)\chi_{y,L/3} \Vert_{y,L} \leq \frac{1}{L^\theta}
\right)=1,  \ee
i.e. the starting condition for the bootstrap MSA, (\ref{ILSE}), in its strong
version, holds at some finite scale $L>\mathcal L_\theta(E)$.

Let $E\in J$, $\theta>d/s$ and $L\in 6\mathbb N$.  We start by
estimating

\be P_{E,L}:=\displaystyle\sup_y\mathbb P \left( \Vert \Gamma_{y,L}
R_{\omega,y,L}(E)\chi_{y,L/3} \Vert_{y,L} > \frac{1}{L^\theta}
\right).\ee
We decompose as in \cite[Eq. 6.26-6.28]{GK3}, using 
\[ \chi_{y,L}=\chi_{y,2L/3}+\chi_{y,L\backslash 2L/3},\mbox{\hspace{.5cm}where  
} \chi_{y,L\backslash2L/3}=\chi_{y,\L_L\setminus\L_{2L/3}}  \]
so (for simplicity we omit the subscript $y$ from the norm)

\bea P_{E,L} & \leq  & \displaystyle\sup_y \mathbb P\left(
\frac{1}{4L^\theta} < \Vert \Gamma_{y,L}
R_{\omega,L}(E+i\epsilon)\chi_{y,L/3}\Vert_L \right)\\
& + & \displaystyle\sup_y \mathbb P \left( \frac{1}{2L^\theta} <
\epsilon \Vert R_{\omega,L}(E+i\epsilon) \Vert_L\Vert
\Gamma_{y,L} R_{\omega,L}(E)\chi_{y,2L/3} \Vert_L \right) \\
& + & \displaystyle\sup_y \mathbb P \left( \frac{1}{4L^\theta}<
\epsilon\Vert R_{\omega,L}(E)\Vert_L \Vert\chi_{y,L\backslash2L/3}
R_{\omega,L}(E+i\epsilon)\chi_{y,L/3} \Vert_L \right).
 \eea

To estimate the first term we use (\ref{lemproba1}) with
$a=L^{-\theta}$.  As for the rest, we use (\ref{lemproba2}) and
(\ref{lemproba1}), respectively, with $a=1$, plus the uniform Wegner
estimate. We obtain

\bea P_{E,L} & \leq  & \label{prob1} \displaystyle\sup_y \mathbb P\left(
\frac{1}{L^{\theta+\gamma}} < \Vert \Gamma_{y,L}
R_\omega(E+i\epsilon)\chi_{y,L/3}\Vert \right)\\
& + & \label{prob2} \displaystyle\sup_y \mathbb P \left( \frac{1}{L^\gamma} <
\Vert
\Gamma_{y,L} R_\omega(E+i\epsilon)\chi_{y,2L/3} \Vert \right) \\
& + & \label{prob3} \displaystyle\sup_y \mathbb P \left(\frac{1}{L^\gamma}<
\Vert\chi_{y,L\backslash2L/3}
R_\omega(E+i\epsilon)\chi_{y,L/3} \Vert\right)\\
& + & \label{prob4} Q_I(4\epsilon)^{s/2} L^d + 2Q_I \epsilon^{s}L^{\theta s+d} +
\frac{3p_0}{10}, \eea
for $L>\mathcal L$, with $\mathcal L$ as in Lemma \ref{lemproba}, where $\gamma>d/s$,
$0<\epsilon\leq 1$, $0<p_0<1$ and $Q_I=\displaystyle\sup_{E\in
I}Q_E<\infty$ .   Set

\be\label{L}
L=L(I,\epsilon):=\left[\left(\frac{p_0}{20Q_I\epsilon^s}\right)^{1/(\theta
s+d)}\right]_{6\mathbb
N},\ee
so that
\[Q_I(4\epsilon)^{s/2} L^d\leq \frac{p_0}{10} \mbox{\hspace{0.3cm}
and\hspace{0.3cm} }2Q_I \epsilon^{s}L^{\theta s+d}\leq \frac{p_0}{10}.\]

We first estimate,

\be\displaystyle\sup_y \mathbb P\left( \frac{1}{L^{\theta+\gamma}} <
\Vert \Gamma_{y,L} R_\omega(E+i\epsilon)\chi_{y,L/3}\Vert \right).
\ee
To do this, we decompose the norm using the function $\mathcal X(H_\omega)$ that localizes in energy , yielding

\be \label{probXi1} \displaystyle\sup_y \mathbb P\left( \frac{1}{2L^{\theta+\gamma}}
< \Vert \Gamma_{y,L} R_\omega(E+i\epsilon)\mathcal
X(H_\omega)\chi_{y,L/3}\Vert \right)\ee

\be \label{probXi2} + \displaystyle\sup_y \mathbb P\left(
\frac{1}{2L^{\theta+\gamma}} < \Vert \Gamma_{y,L}
R_\omega(E+i\epsilon)(1-\mathcal X(H_\omega))\chi_{y,L/3}\Vert
\right). \ee
For the second term we use Chebyshev's inequality and follow \cite[Eq.
6.32 - 6.34]{GK3}, so we can bound it by $p_0/12$.

Estimating in the same way the terms (\ref{prob2}) and (\ref{prob3}) we obtain that for $L$ big enough,

\begin{align} P_{E,L} \leq & \displaystyle\sup_y \mathbb P\left( \frac{1}{2L^{\theta+\gamma}}
< \Vert \Gamma_{y,L} R_\omega(E+i\epsilon)\mathcal
X(H_\omega)\chi_{y,L/3}\Vert \right) \label{probXi} \\
& + \displaystyle\sup_y \mathbb P \left( \frac{1}{2L^\gamma} <
\Vert \Gamma_{y,L} R_\omega(E+i\epsilon)\mathcal
X(H_\omega)\chi_{y,2L/3} \Vert \right) \label{probXii}\\
& + \displaystyle\sup_y \mathbb P \left(\frac{1}{2L^\gamma}<
\Vert\chi_{y,L\backslash2L/3}
R_\omega(E+i\epsilon)\mathcal
X(H_\omega)\chi_{y,L/3} \Vert\right) + \frac{3p_0}{4}. \label{probXiii}
 \end{align} 
As for the first term,

\begin{align}
& \mathbb P \left(
\frac{1}{2L^{\theta+\gamma}} < \norm{\Gamma_{y,L} R_\omega(E+i\epsilon)\mathcal
X(H_\omega)\chi_{y,L/3} }\right)  \nonumber \\ & \hspace{3cm} \leq   2L^{\theta+\gamma}
 \mathbb E \left(\norm{\Gamma_{y,L} R_\omega(E+i\epsilon)\mathcal
X(H_\omega)\chi_{y,L/3} }\right)\\
 & \hspace{3cm} \leq  2L^{\theta+\gamma} \displaystyle \sum_{u\in \tilde\L_{L/3}(y)} \mathbb E \left(\norm{\Gamma_{y,L} R_\omega(E+i\epsilon)\mathcal
X(H_\omega)\chi_u}\right).
\end{align}
For any $u$ fixed, given a compact subinterval $I\subset J$ and $M>0$  we set :

\[A_{u,M,I,\epsilon}=\left\{ E\in I:\mbox{
}\mathbb E \left( \norm{\angles{X-u}^{p/2}R_\omega(E+i\epsilon)\X(\Homega)\chi_u}_2^2\right)\leq
M\epsilon^{-(\alpha+1)} \right\}.  \]
We have, taking $T=\epsilon^{-1}$ and using  \cite[Lemma 6.3]{GK3}

\begin{align} |I\setminus A_{u,M, I,\epsilon}|& \leq \frac{1}{M\epsilon^{-(\alpha+1)}}\int_\R \mathbb E \left(\norm{\angles{X-u}^{p/2}R_\omega(E+i\epsilon)\X(\Homega)\chi_u}_2^2 \right) dE\nonumber\\
 & = \frac{2\pi}{M T^{\alpha+1}}\int_0^\infty e^{-2t/T} \mathbb E \left(\norm{\angles{X-u}^{p/2}e^{-it\Homega}\X(\Homega)\chi_u}_2^2 \right)dt\nonumber\\
 & \leq  \frac{\pi}{M T^\alpha} \sup_u \mathbb E\left(\mathcal M_{u,\omega}(p,\X,T)\right).\nonumber\\
\end{align}
\begin{rem}{\it
Notice that the analogous sets $A_{k,I,M}$ in the proof  \cite[Theorem 2.11]{GK3} do not work in the non ergodic setting, so we need to consider a family of sets $A_{u,M, I,\epsilon}$, indexed by $u$.}
\end{rem}

By hypothesis \ref{Emomentum} we can pick a sequence $T_k\rightarrow\infty$ such that for $k$ big enough, we have $\displaystyle\sup_u \mathbb E\left(\mathcal M_{u,\omega}(p,\X,T_k)\right) <C T_k^\alpha$, then for the corresponding sequence $\epsilon_k \rightarrow 0^+$  we have
\smallskip

\be\label{Acomp1} |I\setminus A_{u,M, I,\epsilon_k}| \leq \frac{C}{M}. \ee
Notice that this  bound is uniform in $u$.

Thus, for an $E\in I$ fixed and $\epsilon_k= T_k^{-1}$, either $E\in A_{u, I,M,\epsilon_k}$ in which case we have,  

\begin{align}\label{ExpEst}\mathbb E \left(\norm{\Gamma_{y,L_k} R_\omega(E+i\epsilon_k)\mathcal
X(H_\omega)\chi_{u} }\right) & \leq C_{p,d}L_k^{-p/2}\mathbb E\left( \norm{\angles{X-u}^{p/2}R_\omega(E+i\epsilon_k)\X(\Homega)\chi_u}_2\right)\nonumber\\
& \leq C_{p,d}L_k^{-p/2}\mathbb E\left( \norm{\angles{X-u}^{p/2}R_\omega(E+i\epsilon_k)\X(\Homega)\chi_u}_2^2\right)^{1/2}\nonumber\\
& \leq C_{p,d}L_k^{-p/2}M^{1/2}\epsilon_k^{-(\alpha+1)/2},
 \end{align}
where we write $L_k=L(I,\epsilon_k)$, or else, $E\in I\setminus A_{u, M,I,\epsilon_k} $, so by (\ref{Acomp1}) there exists $E_{u}\in A_{u, I,M,\epsilon_k}$ such that 

\[ |E-E_{u}|\leq \frac{ C}{ M}\]
and so, by the resolvent identity and the definition of $A_{u,M,I, \epsilon}$,

\begin{align} \mathbb E \left( \norm{\Gamma_{y,L_k} R_\omega(E+i\epsilon_k)\mathcal
X(H_\omega)\chi_{u} } \right) & \leq 
\mathbb E \left( \norm{\Gamma_{y,L_k} R_\omega(E_{u}+i\epsilon_k)\mathcal
X(H_\omega)\chi_{u} } \right) \nonumber\\ & + |E-E_{u}| \mathbb E \left( \Vert
R_{\omega}(E+i\epsilon_k) \Vert \Vert
R_{\omega}(E_{u}+i\epsilon_k) \Vert \right) \nonumber\\
&\leq C_{p,d}L_k^{-p/2}M^{1/2}\epsilon_k^{-(\alpha+1)/2}  +
\frac{C}{M\epsilon_k^2}.
\end{align}
Therefore, 

\begin{align}\label{proba1}
\mathbb P \left(
\frac{1}{2L_k^{\theta+\gamma}}  <
\norm{\Gamma_{y,L_k} R_\omega(E+i\epsilon_k)\mathcal
X(H_\omega)\chi_{y,L_k/3} }\right) & \leq C'_{p,d} L_k^{\theta+\gamma-p/2+d}M^{1/2}\epsilon_k^{-(\alpha+1)} \nonumber\\
& + C''_{p,d}\frac{L_k^{\theta+\gamma+d}}{M\epsilon_k^2}.\nonumber\\
\end{align}

\smallskip

The remaining terms (\ref{probXii}) and (\ref{probXiii}) are estimated in the same way, using the fact that $\mbox{dist}(\bar\L_{L-1}\setminus\L_{L-3}, \Lambda_{\frac {2L}{3}})\geq \frac L 3 - \frac 3 2$ and $\mbox{dist}(\Lambda_{L\setminus \frac{2L}{3}}, \Lambda_{\frac {2L}{3}})\geq \frac L 6$.  For these terms we obtain an estimate as (\ref{proba1}) with constants $C_{p,d}'^{(2)}$,$C_{p,d}''^{(2)}$  and $C_{p,d}'^{(3)}$, $C_{p,d}''^{(3)}$, respectively, and with no $\theta$ in the exponent of $L$.  Denote by $C_{p,d}$ the maximal constant, and  since $L^\theta<L^{\theta+\gamma}$, the estimate on (\ref{proba1}) using $C_{p,d}$ will imply the same estimate on (\ref{probXii}) and on (\ref{probXiii}).

\smallskip

Now, for $p$ such that $p>p'(\alpha,s)=\alpha \frac{2d}{s}+12\frac{d}{s}$,  we can find $\theta,\gamma>d/s$ for which

\be p > 5\theta+3\gamma +2d + (\alpha+1)(\theta s+d)/s,\ee
so if we set 

\be M= L_k^{3\theta+\gamma}, \ee
and recall

\be\epsilon_k
^{-(\alpha+1)/2}= C_{p_0,Q_I}L_k^{(\alpha+1)(\theta s+d)/2s}, \quad  \epsilon_k
^{-2}=C_{p_0,Q_I}' L_k^{-2(\theta s +d)/s}. \ee
we obtain, for $k$ big enough depending on $d,I,p,\alpha,\theta,\gamma,s,p_0, Q_I$, 

\begin{align}
C'_{p,d} L_k^{\theta+\gamma-p/2+d}M^{1/2}\epsilon_k^{-(\alpha+1)} < p_0/24
\end{align}
and
\be  C''_{p,d}\frac{L_k^{\theta+\gamma+d}}{M\epsilon_k^2} < p_0/24,
\ee
so there exists a sequence $L_k\rightarrow \infty$ such that for $k$ big enough,

\be \mathbb P \left(
\frac{1}{2L_k^{\theta+\gamma}}  <
\norm{\Gamma_{y,L_k} R_\omega(E+i\epsilon_k)\mathcal
X(H_\omega)\chi_{y,L_k/3} }\right) < \frac{p_0}{12}.  \ee 
The same argument shows that the terms (\ref{probXii}) and (\ref{probXiii}) are smaller than $p_0/12$, for $k$ big enough.

\smallskip

Inserting this in (\ref{probXi})-(\ref{probXiii}) we see that

\be\displaystyle \limsup_{k\rightarrow \infty}\sup_y\mathbb P \left(
\frac{1}{L_k^\theta} <
\Vert\Gamma_{y,L_k}R_{\omega,y,L_k}(E)\chi_{y,L_k/3} \Vert_{L_k}
\right)\leq p_0, \ee
Since $0<p_0<1$ is arbitrary, we conclude that (\ref{CImsa}) holds
for each $E\in I$.

\end{proof}

\begin{proof}[Proof of Theorem \ref{mom}]

From equation (\ref{CImsa}) to equation (\ref{probXiii}) the previous proof remains valid in the current setting.  We will only estimate (\ref{probXi}), since the remaining terms (\ref{probXii}) and (\ref{probXiii}) can be estimated in the same way.  Notice that

\[\P\left(\frac{1}{2L^{\theta+\gamma}} < \norm{\Gamma_{y,L} R_\omega(E+i\epsilon)\mathcal
X(H_\omega)\chi_{y,L/3} } \right)\]
\begin{align}\leq & \P\left(\frac{1}{2L^{\theta+\gamma}}<\sum_{u\in \tilde\L_{L/3}(y)}\norm{\Gamma_{y,L} R_\omega(E+i\epsilon)\mathcal
X(H_\omega)\chi_{u} }
\right)\nonumber\\
\leq & \sum_{u\in \tilde\L_{L/3}(y)}\P\left(\frac{1}{2L^{\theta+\gamma+d}}< \Vert
\Gamma_{y,L} R_\omega(E+i\epsilon)\mathcal
X(H_\omega)\chi_{u}\Vert\right) \end{align}

\smallskip
To estimate the r.h.s of the last inequality, the following following lemma is crucial,

\begin{lem}\label{deterministiclemma}
There exists $\mathcal L =\mathcal L(I,p,\theta,\gamma,d,\alpha,s,p_0, Q_I)$ such that for any $u\in \tilde\L_{L/3}(y)$ with $L=L(I,\epsilon)$ as in (\ref{L}), $L\geq \mathcal L$ and  $E\in I$ fixed, if 

\be p>p(\theta,\gamma,d,\alpha,s):= \alpha\frac{(\theta s+d)}{s}+9\theta+3\gamma+2d+\frac d s\ee
then, for $T=\epsilon^{-1}$,

\be \left\{ \omega:\mbox{ }\Vert
\Gamma_{y,L} R_\omega(E+i\epsilon)\mathcal
X(H_\omega)\chi_{u}\Vert>\frac{1}{2L^{\theta+\gamma+d}}\right\}\subset \left\{\omega:\mbox{ }\mathcal M_{u,\omega}(p,\mathcal X,T)>T^\alpha\right\}.  \ee
\end{lem}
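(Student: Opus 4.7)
The plan is to prove the contrapositive: if $\mathcal M_{u,\omega}(p,\mathcal X,T)\le T^\alpha$ with $T=\epsilon^{-1}$, then $\|\Gamma_{y,L}R_\omega(E+i\epsilon)\mathcal X(H_\omega)\chi_u\|\le 1/(2L^{\theta+\gamma+d})$. The argument is entirely deterministic in $\omega$; no probabilistic input is needed beyond what the moment hypothesis provides.

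The starting point is the Laplace representation of the resolvent for $\epsilon>0$,
\be R_\omega(E+i\epsilon)=i\int_0^\infty e^{it(E+i\epsilon)}e^{-itH_\omega}\,dt,\ee
which, after applying $\Gamma_{y,L}$ on the left and $\mathcal X(H_\omega)\chi_u$ on the right, yields
\be \|\Gamma_{y,L}R_\omega(E+i\epsilon)\mathcal X(H_\omega)\chi_u\|\le\int_0^\infty e^{-\epsilon t}\|\Gamma_{y,L}e^{-itH_\omega}\mathcal X(H_\omega)\chi_u\|\,dt.\ee
Inserting $\langle X-u\rangle^{-p/2}\langle X-u\rangle^{p/2}$ between $\Gamma_{y,L}$ and $e^{-itH_\omega}$, and using that for $u\in\tilde\L_{L/3}(y)$ the distance from $u$ to $\mathrm{supp}\,\Gamma_{y,L}\subset\bar\L_{L-1}(y)\setminus\L_{L-3}(y)$ is at least $L/3-3/2$, the multiplier estimate $\|\Gamma_{y,L}\langle X-u\rangle^{-p/2}\|_\infty\le C_p L^{-p/2}$ holds. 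Bounding the operator norm of the remaining factor by its Hilbert--Schmidt norm gives
\be \|\Gamma_{y,L}e^{-itH_\omega}\mathcal X(H_\omega)\chi_u\|\le C_p L^{-p/2}\,M_{u,\omega}(p,\mathcal X,t)^{1/2}.\ee

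The next step is Cauchy--Schwarz in the time integral, with the splitting $e^{-\epsilon t}=e^{-\epsilon t/2}\cdot e^{-\epsilon t/2}$:
\be \int_0^\infty e^{-\epsilon t}M_{u,\omega}^{1/2}\,dt\le \epsilon^{-1/2}\left(\int_0^\infty e^{-\epsilon t}M_{u,\omega}\,dt\right)^{1/2}=T\,\mathcal M_{u,\omega}(p,\mathcal X,2T)^{1/2},\ee
the last equality being the definition of $\mathcal M$ applied at time $2T$. Using the hypothesis at $2T$ gives $\mathcal M_{u,\omega}(p,\mathcal X,2T)\le (2T)^\alpha$, whence
\be \|\Gamma_{y,L}R_\omega(E+i\epsilon)\mathcal X(H_\omega)\chi_u\|\le C\,L^{-p/2}\,T^{(\alpha+2)/2}.\ee
Finally, substituting the scaling $L=L(I,\epsilon)\asymp \epsilon^{-s/(\theta s+d)}$ from (\ref{L}), so that $T\asymp L^{(\theta s+d)/s}$, the right-hand side becomes a pure power $L^{-p/2+(\alpha+2)(\theta s+d)/(2s)}$, which is bounded by $1/(2L^{\theta+\gamma+d})$ for all $L\ge \mathcal L$ provided $p$ exceeds a threshold of the form $\alpha(\theta s+d)/s$ plus a polynomial in $\theta,\gamma,d,1/s$, matching the stated $p(\theta,\gamma,d,\alpha,s)$.

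The main obstacle is the quantitative bookkeeping of exponents: the precise non-$\alpha$ part $9\theta+3\gamma+2d+d/s$ is sensitive to every inefficiency in the above steps, in particular to the Cauchy--Schwarz step and to the passage from $\mathcal M(2T)$ back to a bound stated at $T$. One then chooses $\mathcal L=\mathcal L(I,p,\theta,\gamma,d,\alpha,s,p_0,Q_I)$ large enough so that the intermediate constants $C_p,C$ from the multiplier and Cauchy--Schwarz estimates, as well as the constant in $L(I,\epsilon)$, are absorbed into the inequality, which delivers the stated set inclusion.
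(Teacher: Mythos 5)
Your approach is genuinely different from the paper's. The paper controls the resolvent through a Plancherel-type identity in the energy variable (their invocation of [GK3, Lemma~6.3]), which naturally produces the weight $e^{-2\epsilon t}=e^{-2t/T}$ for $T=\epsilon^{-1}$ — exactly the weight appearing in $\mathcal M_{u,\omega}(p,\mathcal X,T)$. Since that identity controls the resolvent only in an $L^2$ sense over $E$, they must introduce the good set $A_{u,\omega,M,I}$, bound the measure of its complement by $\pi/M$, and then transport the estimate to the fixed $E$ via the resolvent identity; it is precisely this last step, with its $\pi/(M\epsilon^2)$ error term, that inflates the threshold to the stated $9\theta+3\gamma+2d+d/s+\alpha(\theta s+d)/s$. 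You instead take the Laplace representation $R(E+i\epsilon)=i\int_0^\infty e^{itE}e^{-\epsilon t}e^{-itH_\omega}\,dt$ and run Cauchy--Schwarz in $t$, which delivers a pointwise-in-$E$ bound and bypasses the resolvent-identity step altogether. The price is that the $L^1$ weight $e^{-\epsilon t}$ coming out of the Laplace transform equals $e^{-2t/(2T)}$, i.e.\ the weight belonging to $\mathcal M_{u,\omega}(p,\mathcal X,2T)$, \emph{not} to $\mathcal M_{u,\omega}(p,\mathcal X,T)$.

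That is where the gap lies. You write ``Using the hypothesis at $2T$ gives $\mathcal M_{u,\omega}(p,\mathcal X,2T)\le(2T)^\alpha$,'' but the contrapositive of the claimed inclusion only hands you $\mathcal M_{u,\omega}(p,\mathcal X,T)\le T^\alpha$ for $T=\epsilon^{-1}$, and there is no monotonicity passing from $T$ to $2T$: since $M_{u,\omega}(p,\mathcal X,t)$ typically grows in $t$, one expects $\mathcal M(2T)\ge\mathcal M(T)$, which is the wrong direction. So your argument, as written, proves a correct (and arguably cleaner) inclusion into $\{\omega:\mathcal M_{u,\omega}(p,\mathcal X,2T)>(2T)^\alpha\}$, but not the stated one. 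This is repairable — one could declare $T=2\epsilon^{-1}$ in the lemma, or adjust the choice of $T_k$ in the proof of Theorem~\ref{mom} — but as a proof of the lemma as stated it does not close. Related to this: if you actually carry out the exponent bookkeeping in your scheme, the condition you need is $p>4\theta+2\gamma+2d+2d/s+\alpha(\theta s+d)/s$, which (in the relevant regime $\theta,\gamma>d/s$) is strictly smaller than the stated $9\theta+3\gamma+2d+d/s+\alpha(\theta s+d)/s$. That is consistent with your approach being genuinely more economical — you never pay the $\pi/(M\epsilon^2)$ loss — but it also confirms you are proving a variant of the lemma rather than reproducing the lemma as the paper states it, and your final sentence asserting that the exponents ``match'' the stated $p(\theta,\gamma,d,\alpha,s)$ has not actually been checked.
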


Now, if $p>p(\alpha,s):= 15 \frac{d}{s}+2\alpha\frac{d}{s}$, then there exist $\theta,\gamma >d/s$ such that $p> p(\theta,\gamma,d,\alpha,s)>p(\alpha,s)$ so 
Lemma \ref{deterministiclemma} holds yielding, for $L=L(I,\epsilon)$ as in (\ref{L}) big enough,

\begin{align} \P\left(\frac{1}{2L^{\theta+\gamma}}< \norm{\Gamma_{y,L} R_\omega(E+i\epsilon)\mathcal
X(H_\omega)\chi_{y,L/3} }  \right) \leq C_{p_0,Q_I} T^{\frac s 2}\sup_u\P(\mathcal M_{u,\omega}(p,\mathcal X,T)>T^\alpha)
\end{align}
where $C_{p_0,Q_I}$ comes from $L^d=C_{p_0,Q_I}T^{\frac s 2}$, by (\ref{L}).

By hypothesis (\ref{momentum}), we can pick a sequence $T_k\rightarrow\infty$ such that for $k$ big enough

\be T_k^{\frac s 2}\sup_u \P(\mathcal M_{u,\omega}(p,\mathcal X,T_k)>T_k^\alpha)<p_0/12. \ee
In an analogous way we can estimate (\ref{probXii}) and (\ref{probXiii}).  It follows that for all $E\in I$
we have

\be\displaystyle \limsup_{k\rightarrow \infty}\sup_y\mathbb P \left(
\frac{1}{L_k^\theta} <
\Vert\Gamma_{y,L_k}R_{\omega,y,L_k}(E)\chi_{y,L_k/3} \Vert_{L_k}
\right)< p_0. \ee

Since $0<p_0<1$ is arbitrary, we conclude that (\ref{CImsa}) holds
for each $E\in I$.

 \end{proof}

\bigskip

\begin{proof}[Proof of Lemma \ref{deterministiclemma} ] 

Let $\omega\in\set{\omega:\mbox{ } \mathcal M_{u,\omega}(p,\X,T)\leq T^\alpha}$. For a given compact subinterval $I\subset J$, $M>0$ and
$L=L(\epsilon,I)$ as in (\ref{L}),  we set

\[A_{u,\omega,M,I}=\{ E\in I:\mbox{
}\norm{\angles{X-u}^{p/2}R_\omega(E+i\epsilon)\X(\Homega)\chi_u}_2^2\leq
M\epsilon^{-(\alpha+1)} \}.  \]
We have, using  \cite[Lemma 6.3]{GK3}

\begin{align}\label{Acomp} |I\setminus A_{u,\omega,M, I}|& \leq \frac{1}{M\epsilon^{-(\alpha+1)}}\int_\R \norm{\angles{X-u}^{p/2}R_\omega(E+i\epsilon)\X(\Homega)\chi_u}_2^2dE\nonumber\\
& =  \frac{2\pi}{ M T^{\alpha+1}}\int_0^\infty e^{-2t/T} \norm{\angles{X-u}^{p/2}e^{-it\Homega}\X(\Homega)\chi_u}_2^2dt\nonumber\\
& = \frac{\pi}{ M T^\alpha} \mathcal M_{u,\omega}(p,\X,T)\nonumber\\
& \leq \frac \pi { M},
\end{align}
where the last  bound is uniform on $u$ and $\omega$.
\smallskip

Thus, for an $E\in I$ fixed either $E\in A_{u,\omega, M,I}$ in which case we have

\begin{align}\norm{\Gamma_{y,L} R_\omega(E+i\epsilon)\mathcal
X(H_\omega)\chi_{u} }& \leq C_{p,d}L^{-p/2}\norm{\angles{X-u}^{p/2}R_\omega(E+i\epsilon)\X(\Homega)\chi_u}_2\nonumber\\
& \leq C_{p,d}L^{-p/2}M^{1/2}\epsilon^{-(\alpha+1)/2}
 \end{align}
or else, $E\in I\setminus A_{u,\omega, M,I} $, so by \ref{Acomp} there exists $E_{u,\omega}\in A_{u,\omega, M,I}$ such that 

\[ |E-E_{u,\omega}|\leq \frac{\pi}{ M}\]
and therefore, by the resolvent identity and the definition of $A_{u,\omega,M,I}$,

\begin{align}\label{RI} \norm{\Gamma_{y,L} R_\omega(E+i\epsilon)\mathcal
X(H_\omega)\chi_{u} } & \leq 
\norm{\Gamma_{y,L} R_\omega(E_{u,\omega}+i\epsilon)\mathcal
X(H_\omega)\chi_{u} } \nonumber\\ & + |E-E_{u,\omega}| \Vert
R_{\omega}(E+i\epsilon) \Vert \Vert
R_{\omega}(E_{u,\omega}+i\epsilon) \Vert \nonumber\\
&\leq C_{p,d}L^{-p/2}M^{1/2}\epsilon^{-(\alpha+1)/2}  +
\frac{\pi}{M\epsilon^2} 
\end{align}

Now, for $p$ such that $p>p(\theta,\gamma,d,\alpha,s)$ we have 

\be 2(\theta+\gamma+d)<p -6\theta-\gamma-(1+\alpha)(\theta s+d)/s\ee
so if we set
\be M= L^{6\theta+\gamma}, \ee
and recall 
\be\epsilon
^{-(1+\alpha)/2}= C_{p_0,Q_I}L^{(1+\alpha)(\theta s+d)/2s},\ee
we obtain, for $L$ big enough depending on $d,I,p,\alpha,\theta,\gamma,s,p_0, Q_I$,

\begin{align}
C_{p,d}L^{-p/2}M^{1/2}\epsilon^{-(\alpha+1)/2} & = C_{p,d,Q_I,p_0}L^{-(p/2 -(6\theta+\gamma)/2-(1+\alpha)(\theta
s+d)/2s)}\nonumber\\ & <\frac{1}{4L^{(\theta+\gamma+d)}}
\end{align}
and

\be \frac{\pi}{ M\epsilon^2}= C'_{p_0,Q_I}L^{6\theta+2\gamma-2(\theta s +d)/s}<
\frac{1}{4L^{(\theta+\gamma+d)}}.
\ee

Inserting this in (\ref{RI}) proves the
lemma.

\end{proof}

\smallskip


\section{Uniform Wegner estimates for Delone-Anderson type potentials} \label{UWE}

\smallskip

\begin{defn}
A subset $D$ of $\Rd$ is called an \emph{(r,R)-Delone set}
if there exist reals $r$ and $R$ such that for any cubes
$\L_r$, $\L_R$ of sides $r$ and $R$ respectively, we have
$\sharp(D\cap\L_r)\leq1$ and $\sharp(D\cap\L_R)\geq 1$,
where $\sharp$ stands for cardinality.
\end{defn}

\begin{rem} {\it Note that in an $(r,R)$-Delone set there exists a minimal distance
between any two points, $r/2$, and a maximal distance between neighbors, $R/\sqrt{2}$.  Such a set
is said to be \emph{uniformly discrete} and \emph{relatively dense}.  A lattice is a particular case of a Delone set.}
\end{rem}

Take $0 < r< R<\infty$ and consider the operator $H_\omega=H_0+\lambda V_\omega$  with random potential given by

\be \label{ranpot1}
V_\omega(x)=\displaystyle\sum_{\gamma \in D}\omega_\gamma
u(x-\gamma),
\ee
where $D$ is a $(r,R)$-Delone set.  The measurable function $u$, called
\emph{single-site potential}, is such that $\Vert \displaystyle\sum_{\gamma \in D}u(\cdot-\gamma)
\Vert_\infty =1$, it has compact support and satisfies
\be\label{u} u^-\chi_{0, \epsilon_u}\leq u \leq u^+\chi_{0,
\delta_u}, \ee
for some constants $ 0<\epsilon_u\leq \delta_u <\infty $ and $0< u^-
\leq u^+ < \infty $.

\smallskip

Here, $(\omega_\gamma)_{\gamma\in D}$ is a family of
independent random variables, with
probability
distributions $\mu_\gamma$ of bounded and continuous densities $\rho_\gamma$ such that

\be \rho_+ :=\sup_{\gamma\in D} \Vert \rho_\gamma \Vert_\infty <\infty,  \ee

\be 0\in \mbox{supp }\rho_\gamma\subset [-m_0,M_0]  \ee
where $0\leq m_0<\infty$, $0<M_0<\infty$.

Under these assumptions $V_\omega$ is a bounded scalar potential
jointly measurable in both $\omega\in\Omega$ and $x\in\mathbb R^d$, and so the
mapping $\omega\mapsto
H_{\omega}$ is measurable.

Denote by $H_{\lambda,\omega,x,L}$ and $H_{0,x,L}$ the restriction of $H_\omega$ and $H_0$ to the cube $\L_L(x)$ with periodic boundary conditions, respectively (in the particular case of the Landau Hamiltonian, details on the finite volume operator $H_{B,L}$ are stated in Section \ref{Landau}),
with $\lambda$ fixed and $V_{\omega,x,L}$ being the restriction of $V_\omega$  to $\L_L(x)$, defined by

\be\label{restpot}
V_{\omega,x,L}(\cdot)=\displaystyle\sum_{\gamma\in
D\cap\L_{L-\delta_u}(x)}\omega_\gamma u(\cdot-\gamma). \ee
and denote by  $\tilde V_{x,L}$ the potential defined by 
\begin{equation}\label{5}
\tilde V_{x,L}(\cdot) = \displaystyle \sum_{\gamma \in
\tilde\L_{L-\delta_u}(x)}u(\cdot-\gamma).
\end{equation}
 where $\tilde\L_L(x)=D\cap \L_L(x)$.

 We denote by $P_{\lambda,\omega,x,L}$, $P_{0,x,L}$  the spectral projector associated to the finite volume operators $H_{\lambda,\omega,x,L}$, $H_{0,x,L}$, respectively.  In the particular case of the finite volume random Landau Hamiltonian and free Landau Hamiltonian, we write $H_{B,\lambda, \omega, x,L}$ and $H_{B,x,L}$, respectively, and we use the notation $\Pi_{n,x,L}$ for the spectral projector associated to the $n$-th Landau level, and $\Pi_{n,x,L}^\bot$ for its orthogonal projector (see Section \ref{Landaumodel}).
Define $s(\epsilon)=\displaystyle \sup_{\gamma\in D} \sup_{E\in\mathbb R}
\mu_\gamma ([E,E+\epsilon])$.  

We prove several Wegner estimates that we summarize in the following theorem,

\smallskip

\begin{thm}\label{thm}\label{WE1}

\begin{itemize}
\smallskip
\item[i.] For $d=2$, let $H_0$ be the Landau Hamiltonian with constant magnetic field $B>0$ fixed.
For any bounded interval $I\in\mathbb R$ there exist constants $Q_W=Q_W(B,\lambda,R,r,I,u,m_0,M_0)$, $\eta_{B,\lambda,\Delta}\in]0,1]$ and a finite scale $\mathcal L_*(B,\lambda,I,R)$ such that for every compact subinterval $\Delta\subset I$, with $|\Delta|<\eta_{B,\lambda,\Delta}$ and $L>\mathcal L_*$, we have

\be \label{UWE} \sup_{x\in\R^d}\mathbb E \{\mbox{ tr } P_{\lambda,\omega,x,L}(\Delta)\} \leq
Q_{W}\rho_+s(|\Delta|)L^d. \ee

\bigskip
%

\bigskip

\item [ii.] Let $E_0\in\mathbb R\setminus \sigma(H_0)$ for
$H_0=-\Delta+V_0$, where $V_0$ is $\mathbb Z^d $-periodic.  For any bounded interval $I\subset \mathbb R\setminus \sigma(H_0)$ there exist a constant $Q_W=Q_W(\lambda,R,r,I,u)$ and a finite scale $\mathcal L_*(R)$ such that for every compact subinterval $\Delta\subset I$, (\ref{UWE}) holds.

\bigskip

\item[iii.]Assume the IDS of $H_0$ is H\"older
continuous with exponent $\delta>0$ in some open interval $I$ and no
further assumption on $s(\epsilon)$. Then there exists a constant
$Q_W'=Q_W'(B,\lambda,I,u, R,r,d)>0$ such that for all compact subintervals $\Delta\subset
I$  with $|\Delta|$ small enough, and $0<\gamma<1$,

\be \mathbb E \{\tr P_{\lambda,\omega,x,L}(\Delta)\} \leq
Q_W'\max\{|\Delta|^{\delta\gamma},|\Delta|^{-2\gamma}s(|\Delta|)
 \}L^d.
\ee
In particular, if $s(\epsilon)\leq C\epsilon^\zeta$, for some
$\zeta\in [0,1]$, then

\be \mathbb E \{\tr P_{\lambda,\omega,x,L}(\Delta)\} \leq
Q_W'|\Delta|^{\frac{\zeta\delta}{\delta+2}}L^d. \ee

\smallskip

\smallskip

\end{itemize}

\end{thm}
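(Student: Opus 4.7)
The plan is to follow the two-step strategy that underlies all Combes-Hislop-Klopp-Klein type Wegner estimates in the continuous setting: first establish a uniform positivity estimate of the shape
\begin{equation*}
P_{\lambda,\omega,x,L}(\Delta)\, \tilde V_{x,L}\, P_{\lambda,\omega,x,L}(\Delta) \geq \kappa\, P_{\lambda,\omega,x,L}(\Delta),
\end{equation*}
with a constant $\kappa=\kappa(I,\lambda,r,R,u)$ independent of the cube centre $x$, and then combine it with spectral averaging over the couplings $\omega_\gamma$. Once such a bound is available, expanding $\tilde V_{x,L}=\sum_{\gamma\in\tilde\Lambda_{L-\delta_u}(x)}u(\cdot-\gamma)$ and integrating each summand via a Birman-Solomyak/Stollmann estimate yields a factor $\rho_+ s(|\Delta|)$ per Delone point, while the upper Delone bound $\#(D\cap\Lambda_L(x))\leq C_r L^d$, which is uniform in $x$, produces the required $L^d$.

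For (i), the positivity input I would use is the magnetic unique continuation estimate of \cite{CHKR}, which gives a lower bound $\Pi_{n,x,L}\,\tilde V_{x,L}\,\Pi_{n,x,L}\geq \kappa_n \Pi_{n,x,L}$ with $\kappa_n$ depending only on $B$, the lower Delone radius $r$, and the single-site data $(u^-,\epsilon_u)$. The decisive point is that the argument of \cite{CHKR} does not use Floquet theory, so a careful reading shows that it is insensitive to where the covering $\tilde V_{x,L}$ sits in space, which yields uniformity in $x$. A perturbative step then propagates this estimate from $\Pi_{n,x,L}$ to $P_{\lambda,\omega,x,L}(\Delta)$, provided $|\Delta|<\eta_{B,\lambda,\Delta}$ (so that $\Delta$ stays well separated from neighbouring Landau levels) and $L>\mathcal L_*$ (so that finite-volume magnetic corrections to $\Pi_{n,x,L}$ are small enough). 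For (ii) the interval $I$ lies in the resolvent set of the periodic $H_0$, and the positivity estimate is obtained by the spatial averaging method of \cite{GKH,BoK,G} adapted to the Delone setting, again with constants controlled only by $\dist(I,\sigma(H_0))$ and the Delone parameters. For (iii) no positivity input is needed: one bounds $\mathbb E\,\mathrm{tr}\,P_{\lambda,\omega,x,L}(\Delta)$ by a Chebyshev-type split between a deterministic IDS piece of size $|\Delta|^{\delta\gamma}L^d$ and a spectral-shift piece of size $|\Delta|^{-2\gamma}s(|\Delta|)L^d$, then optimizes over $\gamma\in(0,1)$, which for $s(\epsilon)\leq C\epsilon^\zeta$ gives the exponent $\zeta\delta/(\delta+2)$.

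The main obstacle is checking the $x$-uniformity of the positivity constant $\kappa$ in case (i): unlike in the ergodic setting, the arrangement of Delone points in $\Lambda_L(x)$ genuinely changes with $x$, so one must isolate precisely which features of $D\cap\Lambda_L(x)$ enter the \cite{CHKR} argument and show that they are controlled by $r$ and $R$ alone. The finite scale $\mathcal L_*$ enters at this point to ensure that, for $L\geq\mathcal L_*$, every cube $\Lambda_L(x)$ contains enough Delone points for the positivity bound to be realized, a fact which follows from the lower Delone property $\#(D\cap\Lambda_R(x))\geq 1$ iterated on a scale comparable to $R$. Once uniform positivity is secured, the spectral averaging step and the reduction to the point-counting bound are essentially a transcription of the ergodic argument, with the supremum over $x$ being automatic because every constant has been built to depend only on $r,R,I,\lambda,u$ and the distributional data $\rho_+,s(\cdot)$.
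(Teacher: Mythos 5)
Your overall template (uniform positivity estimate plus spectral averaging, with the point count $\sharp(D\cap\Lambda_L)\leq C_r L^d$ supplying the $L^d$) is the right one, and your account of part (iii) matches the paper. But two of the three cases are mischaracterized in ways that matter.

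For (i), you announce a positivity estimate of the form
$P_{\lambda,\omega,x,L}(\Delta)\,\tilde V_{x,L}\,P_{\lambda,\omega,x,L}(\Delta)\geq\kappa\,P_{\lambda,\omega,x,L}(\Delta)$
with the \emph{perturbed} projector, to be obtained from the free estimate
$\Pi_{n,x,L}\,\tilde V_{x,L}\,\Pi_{n,x,L}\geq\kappa_n\,\Pi_{n,x,L}$
by a ``perturbative step.'' That step is the crux, and the paper never performs it. Instead it writes
$\tr P_{\omega,L}(\Delta)=\tr P_{\omega,L}(\Delta)\Pi_{n,L}+\tr P_{\omega,L}(\Delta)\Pi_{n,L}^{\perp}$
and treats the two pieces by quite different mechanisms: the $\Pi_{n,L}^{\perp}$ term is bounded using the spectral gap to the neighbouring Landau bands (the resolvent factor $K$, Combes--Thomas decay of $\chi_i(H_{B,L}+1)^{-2}\chi_j$, and the spectral--averaging Lemma~\ref{lemaux}), while the $\Pi_{n,L}$ term is where the free positivity estimate (\ref{PE}) enters, combined with $\tilde V_L^2\leq C_u\tilde V_L$. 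A naive perturbative transfer from $\Pi_n$ to $P_{\lambda,\omega}(\Delta)$ does not go through for fixed (possibly large) $\lambda$, because the range of $P_{\lambda,\omega}(\Delta)$ can have nontrivial overlap with $\mathrm{Ran}\,\Pi_n^{\perp}$; the decomposition is precisely what makes this rigorous. Two smaller points: the constant in the free positivity estimate must depend on $R$, not $r$, since what is used is the covering bound $\sum_{\gamma}\hat\chi_{\gamma,\tilde R}\geq\chi_{x,L}$ (Lemma~\ref{PEthm} records $C_n(B,u,R)$); and the finite scale $\mathcal L_*$ in (i) is dictated by the exponentially small finite--volume error operator $\mathcal E_{n,L}$ in Lemma~\ref{lemmaPE}, not by needing ``enough Delone points'' --- the lower Delone bound is not what forces $L\geq\mathcal L_*$ here.

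For (ii), the approach you propose is wrong. You suggest deriving a positivity estimate by the spatial averaging method of \cite{GKH,BoK,G}. But spatial averaging is what one needs at the bottom of the spectrum of $-\Delta$ (the case treated in \cite{G}, not here). Part (ii) concerns an interval $I$ in the resolvent set of the periodic $H_0$, and the proof of (ii) in the paper uses no positivity input at all: since $\tilde\Delta\subset\mathbb R\setminus\sigma(H_0)$, one has $\tr P_{0,L}(\tilde\Delta)=0$ for $L\geq\mathcal L_*(R)$, so the first term in the decomposition (\ref{decomp}) vanishes and only the $P_{0,L}(\tilde\Delta^c)$ piece remains, which is handled exactly as in part $(i)$--$a$. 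This is a genuinely simpler argument than the one you outlined, and a spatial averaging bound would not even be the right tool inside a spectral gap.
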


\bigskip

Since the results are uniform in $x$, we state them for $x=0$, $\lambda$ fixed and for simplicity we
omit these subscripts from the notation.

For the proof we follow \cite{CHK2}, based on \cite{CHK}, plus \cite{GKS} in the case of the Landau Hamiltonian. 
In all cases we need to estimate $\mathbb E\{\tr P_{\omega,L}(\Delta)\}$.  We
decompose it with respect to the free spectral projector of an interval $\tilde \Delta$, such that $\Delta\subset \tilde \Delta$ and $d_{\Delta}=dist(\Delta, \tilde\Delta^c)>0$, that is

\be\label{decomp} \tr P_{\omega,L}(\Delta)=\tr P_{\omega,L}(\Delta)P_{0,L}(\tilde \Delta)+\tr
P_{\omega,L}(\Delta)P_{0,L}(\tilde \Delta^c). \ee
The key step in estimating the first term of the r.h.s  is to prove a positivity estimate as in  \cite[Theorem 2.1]{CHK2}.  In order to obtain this estimate in the case of the Landau Hamiltonian, we need some preliminary lemmas. 

\begin{lem}\label{PEthm}
Using the notations above, there exists a positive finite constant 
 $C_n(B,u,R)$, so that

\begin{equation}\label{PE}
\Pi_{n,L}\tilde V_{x,L} \Pi_{n,L} \geq C_n(B,u,R)\Pi_{n,L}.
\end{equation}
\end{lem}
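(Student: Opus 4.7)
The plan is to reduce the statement to a positivity estimate for the finite-volume Landau projector acting on a relatively dense set of balls, and then invoke the Combes--Hislop--Klopp--Raikov bound from \cite{CHKR}, which is precisely designed to avoid Floquet theory.

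First, using the pointwise bound $u\geq u^-\chi_{0,\epsilon_u}$ from (\ref{u}) together with the fact that $D$ is uniformly discrete with minimal spacing $r$, I would select (if needed) a sub-Delone set on which the translates $\chi_{\gamma,\epsilon_u}$ have pairwise disjoint supports, and deduce
\[ \tilde V_{x,L} \;\geq\; u^-\!\!\!\sum_{\gamma\in\tilde\Lambda_{L-\delta_u}(x)}\!\!\!\chi_{\gamma,\epsilon_u} \;\geq\; u^-\,\chi_{G_{x,L}},\qquad G_{x,L}:=\bigcup_{\gamma\in\tilde\Lambda_{L-\delta_u}(x)}\!\!\Lambda_{\epsilon_u}(\gamma). \]
Thus it suffices to establish $\Pi_{n,L}\chi_{G_{x,L}}\Pi_{n,L}\geq c_n(B,R,\epsilon_u)\Pi_{n,L}$ for some $c_n>0$ uniform in $x$ and in the realization of $D$.

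Second, I would apply the positivity estimate of \cite{CHKR}. Its hypothesis is that $G_{x,L}$ be $(R,\epsilon_u)$-uniformly dense, meaning that every subcube of side $R$ inside $\Lambda_{L-\delta_u}(x)$ contains a translate of $\Lambda_{\epsilon_u}$ lying in $G_{x,L}$. This is exactly guaranteed by the Delone condition $\sharp(D\cap\Lambda_R(y))\geq 1$, which places at least one impurity center in every $R$-cube. Since the argument in \cite{CHKR} relies on the explicit Gaussian-type kernel of $\Pi_n$ and its magnetic translation covariance rather than on any periodicity, the resulting constant depends only on $B$, $n$, $R$ and $\epsilon_u$ (and therefore on $u$), and is uniform in $x$ and in the Delone set itself. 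Combining this with the factor $u^-$ gives the desired $C_n(B,u,R)$.

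The step I expect to be the main obstacle is controlling the finite-volume aspect: \cite{CHKR} naturally produces an infinite-volume estimate for $\Pi_n$, whereas (\ref{PE}) is stated for the torus projector $\Pi_{n,L}$, and the potential $\tilde V_{x,L}$ has been truncated to $\tilde\Lambda_{L-\delta_u}(x)$. I would handle this by choosing $L$ larger than some scale $L_*(B,R)$, using the rapid (Gaussian) decay of the Landau kernel to compare $\Pi_{n,L}$ with $\Pi_n$ away from the boundary, and absorbing the boundary layer of width $O(\delta_u+R)$ as well as the discrepancy between the free and periodic kernels into the constant. Because all these error terms are controlled only by $B$, $R$ and $u$, the final estimate is uniform in $x\in\R^d$ and in the realization of the Delone set, as needed for the uniform Wegner estimate.
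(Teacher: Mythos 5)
Your high-level plan matches the paper's: reduce to a positivity estimate for the Landau projector over the union of small cubes at the Delone points, invoke the Combes--Hislop--Klopp--Raikov bound because it avoids Floquet theory, and control finite volume by taking $L$ large. But you state the CHKR input as a ready-made global lower bound for $\Pi_n\chi_G\Pi_n$ on ``uniformly dense'' sets $G$, and that is not what \cite{CHKR} provides. What is actually available is the \emph{local} estimate
\[
\Pi_n\chi_{0,\epsilon}\Pi_n \;\geq\; C_0\bigl(\Pi_n\chi_{0,\tilde R}\Pi_n - \eta\,\Pi_n\chi_{0,\kappa\tilde R}\Pi_n\bigr),
\]
with a subtracted overshoot term, transported to any center by magnetic translation covariance. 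Getting from this to a lower bound uniform over a Delone configuration is a real step and is where both Delone parameters enter: one sums the estimate over $\gamma\in\tilde\L_{L-\delta_u}(x)$, uses \emph{relative density} (with $\tilde R>2R+\delta_u$) to get $\sum_\gamma\hat\chi_{\gamma,\tilde R}\geq\chi_{x,L}$, and uses \emph{uniform discreteness} (bounded overlap of the $\kappa\tilde R$-cubes) to choose $\eta$ small enough that $\eta\sum_\gamma\hat\chi_{\gamma,\kappa\tilde R}\leq\frac12\chi_{x,L}$. Your sketch uses the density condition $\sharp(D\cap\L_R)\geq 1$ but never invokes the uniform discreteness $\sharp(D\cap\L_r)\leq 1$, which is what makes the overshoot controllable and the constant $\eta$ choosable independently of $D$; without it there is no reason the subtracted term stays small.

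The finite-volume step is also handled differently in the paper. Rather than comparing $\Pi_{n,L}$ with $\Pi_n$ ``away from the boundary'' (which is delicate since they act on different Hilbert spaces), the paper cites \cite[Lemma 5.3]{GKS}, which already states a finite-volume, torus version of the local CHKR inequality with an explicit error operator $\mathcal E_{n,\tilde x,L}$ of exponentially small norm in $L$. The whole argument is then run directly in finite volume, summing those error terms and absorbing them for $L\geq L^*(n,B,\epsilon,R,\kappa,\eta)$. This is cleaner than the infinite-volume comparison you propose and gives the uniformity in $x$ and in the realization of $D$ without extra work. So the strategy is right, but the proof as written has a gap at the local-to-global step (missing the $\eta$/$\kappa\tilde R$ mechanism and the role of uniform discreteness), and a vaguer-than-necessary treatment of the torus projectors.
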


\begin{proof}
From \cite{CHKR} we have that for $n \in \mathbb N$, $\tilde R >0$, for each
$0<\epsilon < \tilde R$, $\kappa>1$ and $\eta>0$ there exists a constant
$C_0=C_{0,n,\epsilon,\tilde R,\eta}>0$ such that

\be \label{chi0} \Pi_n\chi_{0,\epsilon}\Pi_n \geq
C_0(\Pi_n\chi_{0,\tilde R}\Pi_n - \eta \Pi_n\chi_{0,\kappa \tilde R}\Pi_n ).
 \ee

Because of the invariance of $H_B$ under the magnetic
translations (\ref{MT}) we have that the projections $\Pi_n$ commute
with these unitary operators, which in turn gives, for an arbitrary
$x\in \mathbb R^2$,

\bea\label{chix}
 U_x\Pi_n\chi_{0,\epsilon}\Pi_n U_x^* \geq
C_0 U_x(\Pi_n\chi_{0,\tilde R}\Pi_n - \eta \Pi_n\chi_{0,\kappa \tilde R}\Pi_n
)U_x^*
\\
\Pi_n U_x \chi_{0,\epsilon}U_x^*\Pi_n \geq C_0
(\Pi_nU_x\chi_{0,\tilde R}U_x^*\Pi_n - \eta \Pi_nU_x\chi_{0,\kappa
\tilde R}U_x^*\Pi_n )\\
\Pi_n \chi_{x,\epsilon}\Pi_n \geq C_0 (\Pi_n\chi_{x,\tilde R}\Pi_n - \eta
\Pi_n\chi_{x,\kappa \tilde R}\Pi_n ),
 \eea
since conjugation by unitary operators is a positivity preserving
operation.

Now, we recall \cite[Lemma 5.3]{GKS} (which is independent of
$V$ and, therefore, $D$).

\begin{lem}\label{lemmaPE} Fix $B>0$, $n\in\mathbb N$, $\tilde R >0$, $0<\epsilon< \tilde R$ and
$\eta>0$.
 If $\kappa>1$ and $L\in\mathbb N_B$ (defined as in (\ref{LforLandau}))  are such that $L>2(L_B+\kappa
 \tilde R)$ then for all $\tilde x\in\L_L(x)$, we have

 \be
\Pi_{n,L}\hat\chi_{\tilde x,\epsilon}\Pi_{n,L}\geq
C_0\Pi_{n,L}(\hat\chi_{\tilde x, \tilde R}-\eta\hat\chi_{\tilde x,\kappa
\tilde R})\Pi_{n,L} +\Pi_{n,L}\mathcal E_{n,\tilde x,L}\Pi_{n,L},
 \ee
where $C_0=C_{0;n,B,\epsilon,\tilde  R,\eta}>0$ is a constant as before and the error
operator $\mathcal E_{n,\tilde x,L}$ satisfies

\be\label{EO} \Vert \mathcal E_{n,\tilde x,L} \Vert \leq
C_{n,B,\epsilon,R,\eta}e^{-m_{n,B}L},
 \ee
for some positivie constant $m_{n,B}$.
\end{lem}

\smallskip

Now, by (\ref{u}) we have

\be \tilde V_{x,L}(\cdot) = \displaystyle \sum_{\gamma \in
\tilde\L_{L-\delta_u}(x)}u(\cdot-\gamma) \geq
u^-\displaystyle\sum_{\gamma \in \tilde\L_{L-\delta_u}(x)}
\hat\chi_{\gamma,\epsilon_u}.
 \ee

We fix $\tilde R>2 R+\delta_u$, in which case \be\label{chi1}
\displaystyle\sum_{\gamma \in \tilde\L_{L-\delta_u}(x)}
\hat\chi_{\gamma,\tilde R} \geq \chi_{x,L}. \ee

Now fix $\kappa >1$ and pick $\eta>0$ such that

\be\label{chi2}
 \eta\displaystyle\sum_{\gamma \in
\tilde\L_{L-\delta_u}(x)} \hat\chi_{\gamma,\kappa \tilde R} \leq
\frac{1}{2}\chi_{x,L}. \ee

It follows from Lemma \ref{lemmaPE}, (\ref{chi1}) and (\ref{chi2})
that

\bea
\Pi_{n,L}\tilde V_{x,L} \Pi_{n,L}\geq u^- C_0
 \displaystyle\sum_{\gamma\in\tilde\L_{L-\delta_u}(x)}\Pi_{n,L}(\hat\chi_{
\gamma,\tilde R}-\eta\hat\chi_{\gamma,\kappa \tilde R})\Pi_{n,L}
+\Pi_{n,L}\mathcal E_{n,L}\Pi_{n,L}
\\
\geq \frac{u^- C_0}{2} \Pi_{n,L} + \Pi_{n,L}\mathcal
E_{n,L}\Pi_{n,L}
\\
\geq C_1\Pi_{n,L}, \eea
for $L\geq L^*$ for some
$L^*=L_{n,B,\epsilon,R,\kappa,\eta}^*<\infty$ and
$C_1=\frac{u^-C_0}{4}$, since the error operator
\[ \Pi_{n,L}\mathcal
E_{n,L}\Pi_{n,L}=\Pi_{n,L}\displaystyle\sum_{\gamma\in\tilde\L_{L-\delta_u}
(x)}\mathcal
E_{n,\gamma,L}\Pi_{n,L}   \]
by (\ref{EO}), satisfies
\[ \Vert \mathcal E_{n,L}\Vert\leq L^2C_{n,B,\epsilon,R,\eta}e^{-m_{n,B}L}.   
\]

\end{proof}

Finally we recall,

\begin{lem}\cite[Lemma 2.1]{CHK2}\label{lemaux}
Suppose that T is a trace class operator independent of $\omega$ and u, the
single site potential (\ref{u}).  We then have

\be \mathbb E \{\tr P_{\omega,L}(\Delta)u_iTu_j\} \leq 8 s(|\Delta|)\Vert
u_iTu_j \Vert_1. \ee
where we use the notation $u_i=u(x-i)$, $i\in\R^2$.

\end{lem}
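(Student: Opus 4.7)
The statement is a spectral averaging estimate that, as indicated, follows the proof of \cite[Lemma 2.1]{CHK2}. I would proceed in three steps.

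\emph{Step 1 (reduction to positive operators).} Since $T$ is trace class but neither self-adjoint nor positive in general, decompose it through its real and imaginary parts and the polar decomposition of each: $T = T_1 - T_2 + i(T_3 - T_4)$, with each $T_k\geq 0$ trace class, $\sum_{k=1}^4 \|T_k\|_1 \leq 2\|T\|_1$. By linearity and the triangle inequality, the estimate
$\mathbb E\{\tr(P_{\omega,L}(\Delta)u_iTu_j)\}\le 8\,s(|\Delta|)\,\|u_iTu_j\|_1$ will follow once I prove $\mathbb E\{\tr(P_{\omega,L}(\Delta)u_iSu_j)\}\le 2\,s(|\Delta|)\,\|u_iSu_j\|_1$ for $S\ge 0$ trace class and sum over $k$, using $\sum_k\|u_iT_ku_j\|_1\leq 4\|u_iTu_j\|_1$ which the polar reduction delivers.

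\emph{Step 2 (conditioning and polar factorization).} Since the $(\omega_\gamma)_{\gamma\in D}$ are independent, I would condition on $\hat\omega_{ij}:=(\omega_\gamma)_{\gamma\neq i,j}$, so that $H_\omega = H_\perp + \omega_i u_i + \omega_j u_j$ with $H_\perp$ independent of $(\omega_i,\omega_j)$. For $S\geq 0$, write $u_iSu_j = V\,K$ with $K=|u_iSu_j|\ge 0$ trace class, $\tr K=\|u_iSu_j\|_1$, and $V$ a partial isometry. Then
\[
\tr\bigl(P_{\omega,L}(\Delta) u_i S u_j\bigr) = \tr\bigl(K^{1/2}\,P_{\omega,L}(\Delta)\,V\,K^{1/2}\bigr),
\]
and Cauchy--Schwarz for the Hilbert--Schmidt inner product gives
\[
\bigl|\tr\bigl(P_{\omega,L}(\Delta) u_i S u_j\bigr)\bigr|^2 \leq \tr\bigl(K P_{\omega,L}(\Delta)\bigr)\,\tr\bigl(V^*K V P_{\omega,L}(\Delta)\bigr).
\]
The factor $K$ has range in $\overline{\mathrm{Ran}\,u_j}$ (since $K^2 = u_jS u_i^2S u_j$), while $V K V^*$ has range in $\overline{\mathrm{Ran}\,u_i}$; this localizes the two factors so that each admits single-site spectral averaging.

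\emph{Step 3 (spectral averaging).} I would now invoke the Combes--Hislop--Kotani one-parameter spectral averaging inequality: for $H_t=H_\perp+tB$ with $B\geq 0$, density $\rho$ for $t$, and $A\ge 0$ trace class with range in $\overline{\mathrm{Ran}\,B}$,
\[
\int d\rho(t)\,\tr\bigl(A\,\chi_\Delta(H_t)\bigr) \leq s(|\Delta|)\,\tr(A).
\]
Applied to $A=K$ (with $B=u_j$, variable $\omega_j$) and then to $A=V^*KV$ (with $B=u_i$, variable $\omega_i$), and combined with Jensen's inequality $\mathbb E[X^{1/2}]\leq(\mathbb E X)^{1/2}$ on the Cauchy--Schwarz bound of Step~2, one obtains
\[
\mathbb E\bigl\{\bigl|\tr(P_{\omega,L}(\Delta) u_i S u_j)\bigr|\bigr\} \leq s(|\Delta|)\,\bigl(\tr K\bigr)^{1/2}\bigl(\tr V^*KV\bigr)^{1/2} \leq 2\,s(|\Delta|)\,\|u_iSu_j\|_1,
\]
using $\tr(V^*KV)\le \tr K=\|u_iSu_j\|_1$. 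Re-integrating against the conditional law of $\hat\omega_{ij}$ by Fubini preserves this, and Step~1 completes the proof with constant $8$.

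The main obstacle is arranging the argument in Step~2/3 so that the final bound involves $\|u_iTu_j\|_1$ itself, rather than a Cauchy--Schwarz majorant such as $\tr(u_iTu_i)^{1/2}\tr(u_jTu_j)^{1/2}$ that a naive double-sandwich decomposition would produce. The polar decomposition of $u_iSu_j$ together with the Hilbert--Schmidt Cauchy--Schwarz applied \emph{after} factoring out $K^{1/2}$ is precisely what yields the correct trace-norm right-hand side.
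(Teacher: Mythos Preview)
The paper does not supply its own proof of this lemma; it is quoted as \cite[Lemma 2.1]{CHK2} and used as a black box. Your outline follows the right architecture (reduce, condition, spectral average), but two steps do not go through as written.

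In Step~1 you claim that $T=T_1-T_2+i(T_3-T_4)$ with $T_k\ge0$ yields $\sum_k\|u_iT_ku_j\|_1\le 4\|u_iTu_j\|_1$. The Jordan decomposition controls $\sum_k\|T_k\|_1$ by $2\|T\|_1$, but after compressing by $u_i,u_j$ there is no reason this survives: cancellations can make $\|u_iTu_j\|_1$ arbitrarily small compared with the individual $\|u_iT_ku_j\|_1$. (In fact Step~1 is superfluous: your Steps~2--3 use only the polar decomposition of $u_iSu_j$, never $S\ge0$.)

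More seriously, the spectral averaging principle you invoke in Step~3 is not valid in the stated form. The Combes--Hislop / Birman--Solomyak estimate gives the operator inequality
\[
\int d\rho(t)\,B^{1/2}\chi_\Delta(H_t)B^{1/2}\ \le\ s(|\Delta|)\,I,
\]
whereas you assert $\int d\rho(t)\,\tr(A\chi_\Delta(H_t))\le s(|\Delta|)\,\tr A$ for any $A\ge0$ with range in $\overline{\mathrm{Ran}\,B}$. Having range in $\overline{\mathrm{Ran}\,B}$ does \emph{not} allow one to write $A=B^{1/2}CB^{1/2}$ with $\tr C\le\tr A$: if $B=u_j$ vanishes at the edge of its support (which \eqref{u} permits), the factor $u_j^{-1/2}$ is unbounded and no such $C$ exists. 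Concretely, the eigenvectors $\psi_n$ of your $K=|u_iSu_j|$ lie in $\mathrm{Ran}\,u_j$, but there is no uniform bound on the norm of a preimage $\phi$ with $\psi_n=u_j^{1/2}\phi$, so the averaging inequality cannot be applied to $\langle\psi_n,P_\omega\psi_n\rangle$.

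The remedy, which is what \cite{CHK2} actually does, is to factor \emph{before} the polar/SVD step: write $u_iTu_j=u_i^{1/2}\bigl(u_i^{1/2}Tu_j^{1/2}\bigr)u_j^{1/2}$ and expand the middle factor $u_i^{1/2}Tu_j^{1/2}=\sum_n\mu_n|\phi_n\rangle\langle\psi_n|$. Then
\[
\tr\bigl(P_\omega(\Delta)\,u_iTu_j\bigr)=\sum_n\mu_n\,\langle u_j^{1/2}\psi_n,\,P_\omega(\Delta)\,u_i^{1/2}\phi_n\rangle,
\]
and the explicit $u_i^{1/2},u_j^{1/2}$ factors make the genuine spectral averaging bound applicable after Cauchy--Schwarz, yielding a bound by $s(|\Delta|)\sum_n\mu_n=s(|\Delta|)\|u_i^{1/2}Tu_j^{1/2}\|_1$. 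For the applications in Section~\ref{UWE} this is equivalent, since one immediately majorizes by $\|T\|_1$ anyway.
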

\bigskip

\begin{proof}[Proof of Theorem \ref{thm}]

To prove $(i)$, using the preliminary lemmas we can follow the proof in \cite[Theorem 4.3]{CHK2}.  Notice that the
spatial homogeneity of the Delone set in the sense that points do not
accumulate neither are too far away, so the sums over indexes of elements of $D$
preserves the properties of the sums over indexes of elements of the lattice
$\mathbb Z^2$ as the original proofs.
\smallskip

Recall that we need to estimate $\mathbb E\{\tr P_{\omega,L}(\Delta)\}$ as in (\ref{decomp}), that is, for an arbitrary
$E_0\in\mathbb R$ , with  $\Delta$ and $\tilde\Delta$  closed bounded intervals centered on $E_0$ such that $\Delta\subset \tilde\Delta$,  $ |\Delta|<1$, $d_\Delta
>0$, we need to estimate

\be\label{decomp1} \tr P_{\omega,L}(\Delta)=\tr P_{\omega,L}(\Delta)\Pi_{n,L}+\tr
P_{\omega,L}(\Delta)\Pi_{n,L}^\bot. \ee

\bigskip

\emph{\bf{a.} Estimate on } $\mathbb E \{\tr
P_{\omega,L}(\Delta)\Pi_{n,L}^\bot\}.$\newline

The analysis in \cite[Eq. 2.6 - 2.10]{CHK2} for the $n$-th Landau band remains valid
taking, for the constants defined therein, $M=1$ and the operator K defined by

\be K\equiv \left(
\frac{H_{B,L}+1}{H_{B,L}-E_m}\right)^2,\hspace{1cm}\Vert K\Vert\leq
K_n\equiv \left( 1+ \frac{1+\Delta_+}{d_n}\right)^2,\ee
where $E_m$ is an eigenvalue of $H_{B,\lambda,\omega,L}$, $d_n\equiv
\min\{\dist(I, B_{n-1}), \dist(I, B_{n+1}\}$ and
$\Delta=[\Delta_{-},\Delta_+]$.

\bigskip

Then we can obtain the analog of \cite[Eq. 4.4]{CHK2},

\be \tr P_{\omega,L}(\Delta)\Pi_{n,L}^\bot \leq K_n \lambda^2
\max\{m_0,M_0\}^2\displaystyle\sum_{i,j\in\tilde\L}|\tr\mbox{ }
u_jP_{\omega,L}(\Delta)u_iK_{ij}|,
 \ee
where $K_{ij}\equiv \chi_i(H_{B,L}+1)^{-2}\chi_j$, for $\chi\geq0$ a
smooth function of compact support slightly larger than the support
of $u$ such that $\chi u =u$.  Note that due to the spatial homogeneity of 
$D$ and the fact that $\mbox{supp }u$ is contained in a cube of side $r$, the translated supports of $u$ do not overlap.

\bigskip

Now, denote by $\tilde\L_0=\{i,j\in\tilde\L /
\chi_i\chi_j=0 \}$ and by
$\tilde\L_0^c=\{i,j\in\tilde\L/\chi_i\chi_j\neq 0\}$. For
$i,j\in\tilde\L_0$, the operator $K_{ij}$ is trace class  \cite[Lemma
2.2]{BGKS},  \cite[Lemma 5.1]{CHK2} and it satisfies the
Combes-Thomas estimate,

\be \Vert K_{ij}\Vert_1=\Vert \chi_i(H_{B,L}+1)^{-2}\chi_j
\Vert_1\leq C_0'e^{-\tilde C_0\Vert i-j\Vert},\ee
where $C_0'$ and $\tilde C_0$ are positive constants.  So we can use
Lemma \ref{lemaux} to obtain

\bea \mathbb E\{|\displaystyle\sum_{i,j\in\tilde\L_0}\tr \mbox{
} u_jP_{\omega,L}(\Delta)u_iK_{ij}| \} \leq \mathbb E \{
\displaystyle\sum_{i,j\in\tilde\L_0}|\tr\mbox{ }
u_jP_{\omega,L}(\Delta)u_iK_{ij}|\}\\
\leq C_08s(|\Delta|)\displaystyle\sum_{i,j\in\tilde\L_0}
e^{-\tilde C_0 \Vert i-j\Vert}\\
\leq C_1 s(|\Delta|)|\L|.  \eea
where $C_1$ also depends on $r$, since $\sharp (\tilde \L_L)\leq C_{r,d} L^d$ for $L>R$, see Eq. (\ref{DeloneDense}).

On the other hand, for $i,j\in\tilde\L_0^c$, $K_{ij}$ is also
trace class \cite[Lemma 2.2]{BGKS} so we can apply Lemma \ref{lemaux} again, obtaining

\be \mathbb E\{\tr P_{\omega,L}(\Delta)\Pi_{n,L}^\bot\}\leq \ C_2 s(|\Delta|)|\L|, \ee
where $C_2>0$ depends on $u$, $I$, $\lambda$, $r$ and $M=\max\{m_0,M_0\}$ .

\bigskip

\emph{\bf b. Estimate on} $\mathbb E \{\tr
P_{\omega,L}(\Delta)\Pi_{n,L}\}.$\newline

We use the spectral projector $\Pi_{n,L}$ in order to control the
trace.  Here the key ingredient is the positivity estimate
(\ref{PE}) and the fact that, under our
 hypotheses on $u$, there exists a finite constant $C_u$, depending on
 $u$ only, such that
 \[ 0<\tilde V_L^2\leq C_u\tilde V_L. \]

Now,

\bea \tr P_{\omega,L}(\Delta)\Pi_{n,L}\leq \frac{1}{C_n(B,u,R)}\tr
P_{\omega,L}(\Delta)\Pi_{n,L}\tilde V_L\Pi_{n,L} \\
\leq \frac{1}{C_n(B,u,R)}\left\{\tr P_{\omega,L}(\Delta)\tilde
V_L\Pi_{n,L}-\tr P_{\omega,L}(\Delta)\Pi_{n,L}^\bot\tilde
V_L\Pi_{n,L} \right\}.\eea

Then we can proceed as in parts (2) and (3) of the proof of \cite[Theorem 4.3]{CHK2},  and we
finally arrive to the desired result,

\be \mathbb E\{\tr P_{\omega,L}(\Delta)\}\leq Q_Ws(|\Delta|)|\L|. \ee
where the constant $Q_W>0$ depends on $B,u, R,r,I,\lambda$ and $M$.

\bigskip

As for $(ii)$, note that in this case $\tr P_{0,L}(\tilde\Delta)=0$ if
$\tilde\Delta\subset\mathbb R\setminus\sigma(H_0)$, so we only need to estimate the second term in the r.h.s. of (\ref{decomp}), where we do not need the positivity estimate (\ref{PE}) for $P_{0,L}$.  The proof mimics $(i)$-$a$.

\bigskip

In case $(iii)$ we can estimate the first term in the r.h.s. of (\ref{decomp}) without using the analog of (\ref{PE}) for $P_{0,L}$. Instead, the H\"older continuity of
the IDS of the non perturbed operator implies that there exists a constant $C>0$ such that

\[ \tr P_{0,L}(\tilde\Delta)\leq C|\tilde\Delta|^\delta|\L|, \]
and so, for $0<\gamma<1$
\be \tr P_{\omega,L}(\Delta)P_{0,L}(\tilde\Delta)\leq
C|\Delta|^{\gamma\delta}|\L|. \ee

 Since, as in the previous case (writing explicitly the dependence on $d_\Delta$) we have

 \[ \mathbb E \{\tr P_{\omega,L}(\Delta)P_{0,L}(\tilde\Delta^c)\}\leq
\frac{Q_W'}{d_{\Delta}^2}
s(|\Delta|)|\L|,  \]
by taking $d_\Delta=|\Delta|^\gamma$ we obtain the desired result.
Furthermore, if $s(\epsilon)$ is $\zeta$-H\"older continuous, we
get, taking $\gamma$ such that $\gamma\delta=\zeta-2\gamma$,

\bea \mathbb E \{\tr P_{\omega,L}(\Delta)\}& \leq Q_W'\max\{
|\Delta|^{\gamma\delta}, |\Delta|^{\zeta-2\gamma} \}L^2 \\ & \leq
Q_W'|\Delta|^{\frac{\zeta\delta}{\delta+2}}L^2, \eea
where $Q_W'$ depends on $u,I,\lambda$, $R$, $r$ and $M$.

\bigskip

\end{proof}

\bigskip


\section{Applications to non ergodic random Landau operators}\label{Landau}

\subsection{The model} \label{Landaumodel}

We consider the case where the free Hamiltonian in (\ref{ranop}) is $H_B$, the Landau Hamiltonian,  and the random potential represents impurities placed in a Delone set (for the case 
 $H_0=-\Delta$  see \cite{G}).  We aim to prove for this model the existence of complementary regions of dynamical localization and delocalization in the spectrum and therefore, the existence of a dynamical transition energy.  By doing this we extend known results for ergodic random Landau Hamiltonians \cite{CH,CH2,GKS,GKS2} to non-ergodic ones.

\smallskip
Let $H_B$ be the unperturbed Landau Hamiltonian on $\Lp{\mathbb
R^2}$
\begin{equation}\label{H_B}
H_B = (-i\nabla-{\bf A})^2 \quad \hbox{ with } {\bf A}
=\frac{B}{2}(x_2,-x_1),
\end{equation}
where {\bf A} is the vector potential and $B$ is the strength of
the magnetic field.

 \bigskip
 The spectrum of $H_B$ is pure point and consists of a sequence of
infinitely degenerate eigenvalues, the Landau levels $\{
B_n=(2n+1)|B|;\mbox{ } n=0,1,... \}$, with associated orthogonal projection
operators $\Pi_n$.  As the spectrum is independent of the sign of B,
we will always assume $B>0$.

\bigskip

We define the magnetic translations $U_a$ for $a\in \mathbb R^2$ and
$\varphi\in \mathcal C^\infty_0(\mathbb R^2)$, by

\begin{equation}\label{MT}
U_a\varphi (x)= e^{-i\frac{B}{2}(x_2a_1-x_1a_2)}\varphi(x-a),
\end{equation}
\\
obtaining a projective unitary representation of $\mathbb R^2$ on
$\Lp{\mathbb R^2}$:
\\
\begin{equation}\label{4}
U_a U_b=
e^{i\frac{B}{2}(a_2b_1-a_1b_2)}U_{a+b}=e^{iB(a_2b_1-a_1b_2)}U_b U_a,
\quad a,b\in \mathbb R^2.
\end{equation}
\\
We then have $U_a H_B U_a^*=H_B$ for all $a\in \mathbb R^2$.

\bigskip

We consider the perturbed family of Landau Hamiltonians
given by

\be\label{randop}
 H_{B,\lambda,\omega}= H_B +\lambda V_\omega \hspace{1cm}\mbox{on}
\hspace{0.3cm} \Lp{\mathbb R^2}, \ee
where, as before, $\lambda$ is the disorder parameter which we consider fix and $V_\omega$ is the Delone-Anderson type potential given by (\ref{ranpot1})-(5.4) with the additional conditions:

(\emph{uc.}) $\delta_u < \tilde r/10$, i.e. $u$ has compact support
contained in $B(0,\tilde r/10)$. This implies that for $i,j\in D$
with $i\neq j$, $\mbox{supp } u_i\cap \mbox{supp }u_j=\emptyset$, where we use the
notation $u_i=u(\cdot-i)$ for $i\in\mathbb R^2$.\label{uc}

(\emph{u0.}) $\norm{u}_\infty=1$ and $u(0)=1$\label{u0}.

\smallskip

We denote the spectrum of this operator by
$\sigma_{B,\lambda,\omega}$.  By perturbation theory \cite[Theorem V.4.10]{K} we know that for each $\omega\in\Omega$

\[ \sigma_{B,\lambda,\omega} \subset \displaystyle\bigcup_{n=0}^\infty \mathcal
B_n(B,\lambda),  \]
where $\mathcal B_n(B,\lambda)=[B_n-\lambda m_o,B_n+\lambda M_0]$ is
called the \emph{n-th Landau band}.  Moreover, by a Borel-Cantelli argument, for almost every $\omega\in\Omega$,
\be \sigma_{B} \subset \sigma_{B,\lambda,\omega} \ee 
where $\sigma_B$ is the spectrum of the free Landau operator.  We also show that there exists almost surely spectrum near the band edges so our results are not empty (see  Section \ref{spectrum})

For $B$ fixed
$\lambda$ is small enough such that

\be\label{DBC}  \lambda(m_0+M_0)<2B,  \ee
i.e., the Landau bands $\mathcal B_n(B,\lambda)$ are disjoint and
hence the open intervals

\be \mathcal G_n(B,\lambda)=]B_n+\lambda M_0, B_{n+1}-\lambda m_0[,
\hspace{1cm} n=0,1,2,..., \ee
are nonempty spectral gaps for $H_{B,\lambda,\omega}$.

\bigskip

We now define finite volume operators following \cite{GKS}.  For $B>0$, we set
 \begin{equation}\label{LforLandau}
 K_B=\min\left\{ k\in\mathbb N : k\geq
\sqrt{\frac{B}{4\pi}} \right\} \quad \mbox{and} \quad
L_B=K_B\sqrt{\frac{B}{4\pi}}.
\end{equation}
We denote $\mathbb N_B=L_B\mathbb N$, $\tilde{\mathbb N_B} =\mathbb
N_B \cup \{\infty\}$ and $\mathbb Z_B^2= L_B \mathbb Z^2$.

\bigskip

We consider squares $\L_L(x)$ with $L\in\mathbb N_B$ and
$x\in\mathbb R^2$, and identify them with the torii $\mathbb
T_{L,x}:=\mathbb R^2/(L\mathbb Z^2+x)$.
We denote by $\chi_{x,L}$ the characteristic function of the cube
$\L_L(x)$ and for $\tilde x\in \L_L(x)$ and $r<L$ we
denote by $\hat\L_r(\tilde x)$ and $\hat\chi_{\tilde x,r}$ the
cube and characteristic function in $\mathbb T_{L,x}$.

\bigskip

For the first order differential
operator ${\bf D}_B=(-i\nabla-{\bf A})$ restricted to $\mathcal
C^\infty_c(\L_L(x))$ we take its closed, densely defined
extension ${\bf D}_{B,x,L}$ from $\Lp{\L_L(x)}$ to
$\Lp{\L_L(x);\C^2}$, with periodic boundary conditions
and then set $H_{B,x,L}={\bf D}_{B,x,L}^*{\bf D}_{B,x,L}$.  

We are left with the operator $H_{\omega,B,x,L}$ acting on
$\Lp{\L_L(x)}$ defined by

\be H_{B,\lambda,\omega,x,L}=H_{B,x,L}+ \lambda V_{\omega,x,L}.  \ee 
where $V_{\omega,x,L}$ is defined as in \ref{ranpot1}

We write
$R_L(z)=(H_{B,\lambda,\omega,x,L}-z)^{-1}$ for the resolvent operator of
$H_{B,\lambda,\omega,x,L}$.

\smallskip

Since $H_{B,x,L}$ has a compact resolvent, its spectrum consists in
the Landau Levels but now with finite multiplicity.  We denote by
$\Pi_{n,L}$ the orthogonal projection associated to the $n$-th Landau level and define
$P_{B,\lambda,\omega,x,L}(J)=\chi_J(H_{B,\lambda,\omega,x,L})$ for
$J\subset\mathbb R$ a Borel set.

\bigskip

This operator satisfies the compatibility conditions \cite[Eq. 
4.2]{GKS}: If $\varphi\in\mathcal D({\bf D}_{B,x,L})$ with
supp $\varphi\subset\L_{L-\delta_u}(x)$, then $\mathcal
I_{x,L}\varphi\in\mathcal D({\bf D}_B)$ and

\be\label{cc}
\begin{array}{cc}
\mathcal I_{x,L}{\bf D}_{B,x,L}\varphi= {\bf
D}_B\mathcal I_{x,L}\varphi, \\
\mathcal
I_{x,L}\chi_{x,L-\delta_u}V_{\omega,x,L}=\chi_{x,L-\delta_u}V_\omega,
\end{array}
\ee
where $\mathcal I_{x,L}:\Lp{\L_L(x)}\rightarrow \Lp{\mathbb
R^2}$ is the canonical injection

\begin{displaymath}
    \mathcal I_{x,L}\varphi(y)=\left\{ \begin{array}{ll}
\varphi(y) & \textrm{if $y
\in\L_L(x)$}\\
0 & \textrm{otherwise}.\\
\end{array} \right.
\end{displaymath}

From this we have
\[ \mathcal I_{x,L}H_{B,\lambda,\omega,x,L}\varphi=H_{B,\lambda,\omega}\mathcal
I_{x,L}\varphi, \]
that is, the finite volume operators $H_{B,\lambda,\omega,x,L}$
agree with $H_{B,\lambda,\omega}$ inside the square $\L_L(x)$.

\smallskip

However, $H_{B,\lambda,\omega,x,L}$ does not satisfy the covariance
condition (\ref{nonerg}) so we have \emph{a priori}

\[ H_{B,\lambda,\omega,x,L}\neq U_xH_{B,\lambda,\tau_{-x}(\omega),0,L}U_x^*, \]
where $U_x$ is the magnetic translation (\ref{MT}) seen as a unitary
map from $\Lp{\L_L(0)}$ to $\Lp{\L_L(x)}$ and $\tau_{x}$
is the translation defined as
$\tau_{x}(\omega_\gamma)=\omega_{\gamma-x}$ for $x\in\mathbb R^2$.

\bigskip

\subsection{Dynamical localization in Landau bands}

 In this section we prove

\begin{thm}\label{LocLandau}
Let $H_\omega$ be as before.  For any $n=0,1,2,...$ there exist
finite positive constants ${\bf{B}}(n)$ and $K_n(\lambda)$ depending only on
$n$, $M$, $u$ and $\rho$ such that for all $B\geq{\bf{B}}(n)$ we can
perform MSA in the intervals \be \Sigma_{B,n,\lambda,\omega}=
\sigma_{B,\lambda,\omega}\cap\{ E\in\mathcal B_n :\mbox{ } |E-B_n|\geq
K_n(\lambda)\frac{\log B}{B}\}, \ee

We have strong HS-dynamical localization at energy levels up to a distance
$K_n(\lambda)\frac{\log B}{B}$ from the Landau levels for large $B$.

\end{thm}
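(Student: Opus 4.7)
The plan is to verify the hypotheses of the adapted Bootstrap MSA (Theorem~\ref{Bootstrap}) at every energy $E$ with $|E-B_n|\geq K_n(\lambda)\log B/B$. Conditions (R) and (IAD) are built into the Delone-Anderson model of Section~\ref{Landaumodel}, and the uniform Wegner estimate on a small interval around such $E$ is supplied by Theorem~\ref{WE1}(i). The substantive task is the initial length scale estimate (ILSE), namely $\inf_{x\in\mathbb{Z}^2}\mathbb{P}\{\Lambda_L(x)\text{ is }(\theta,E)\text{-suitable}\}>1-1/841^d$ at some finite scale.

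To prove (ILSE), I would adapt the large-$B$ argument of \cite{GKS} to the non-ergodic Delone-Anderson setting, replacing every use of covariance by the $x$-uniform estimates at our disposal. Decompose the finite volume resolvent using the Landau projector $\Pi_{n,L}$ and its complement. On $\Pi_{n,L}^\perp\Lp{\Lambda_L(x)}$ the free finite volume Landau spectrum sits at distance $\geq 2B$ from $B_n$; under the disjoint-band condition~(\ref{DBC}), $E$ lies at distance at least $2B-\lambda(m_0+M_0)-K_n(\lambda)\log B/B$ from the spectrum of $\Pi_{n,L}^\perp H_{B,\lambda,\omega,x,L}\Pi_{n,L}^\perp$, so a deterministic Combes-Thomas estimate yields exponential decay of $\|\Gamma_{x,L}R_{\omega,x,L}(E)\Pi_{n,L}^\perp\chi_{x,L/3}\|_{x,L}$ with mass $m(B)\to\infty$ as $B\to\infty$, uniformly in $x$.

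The probabilistic content lies on $\Pi_{n,L}\Lp{\Lambda_L(x)}$, where the reduced Hamiltonian is essentially $B_n+\lambda\Pi_{n,L}V_{\omega,x,L}\Pi_{n,L}$. Here Lemma~\ref{PEthm}, $\Pi_{n,L}\tilde V_{x,L}\Pi_{n,L}\geq C_n(B,u,R)\Pi_{n,L}$, converts deviations in the $\omega_\gamma$'s into spectral deviations of the reduced operator. The goal is to show that with probability at least $1-1/841^d$, uniformly in $x\in\mathbb{Z}^2$, the reduced operator has no spectrum within distance $\tfrac{1}{2}K_n(\lambda)\log B/B$ of $E$. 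Combining the uniform Wegner bound $\lesssim s(|\Delta|)L^2$ from Theorem~\ref{WE1}(i) with the $x$-uniform positivity lift, one chooses $\mathbf{B}(n)$ and $K_n(\lambda)$ so that this probability exceeds $1-1/841^d$ at some finite scale; the scaling $\log B/B$ is precisely what makes the Wegner loss (polynomial in $L$) balance with the Combes-Thomas gain $e^{-m(B)L/2}$ in the large-$B$ limit. A resolvent identity, with cross terms controlled via $[\Pi_{n,L},V_{\omega,x,L}]$, then glues the two pieces into the required $(\theta,E)$-suitability of $\Lambda_L(x)$.

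The main obstacle is the spectral lifting on $\Pi_{n,L}\mathcal{H}$ uniformly in $x$: in \cite{GKS}, translation invariance reduces everything to $x=0$, whereas here the Delone geometry inside $\Lambda_L(x)$ must be quantitatively controlled. Precisely the $x$-uniformity of Theorem~\ref{WE1}(i) and of the positivity constant $C_n(B,u,R)$ of Lemma~\ref{PEthm} (the latter being a consequence of magnetic translation invariance of $\Pi_n$ rather than of ergodicity) do the essential work; the remainder is careful bookkeeping of scales. Once (ILSE) is verified, Theorem~\ref{Bootstrap} delivers strong HS-dynamical localization at $E$, so $E\in\Sigma_{SI}$, and intersecting with $\sigma_{B,\lambda,\omega}$ yields the advertised $\Sigma_{B,n,\lambda,\omega}$.
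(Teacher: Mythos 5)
Your plan correctly identifies the overall structure (verify the hypotheses of Theorem~\ref{Bootstrap}, then read off dynamical localization), and your observation that the crucial ingredient in the non-ergodic setting is $x$-uniformity of the Wegner and positivity estimates is accurate. However, your mechanism for the initial length scale estimate (ILSE) has a genuine gap, and you also cite the wrong Wegner estimate.

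First, the Wegner input. You invoke Theorem~\ref{WE1}(i), the generic band estimate $\lesssim s(|\Delta|)L^2$. The proof actually needs the refined estimate of Theorem~\ref{WE3}, which carries the explicit factor $B/(\mathrm{dist}(\Delta,B_n))^2$. That factor is indispensable in the $B\to\infty$ regime: the probability of bad boxes must be balanced against the distance $a\sim\log B/B$ of $E$ to the Landau level, and the generic constant $Q_W(B,\lambda,R,r,I,u,m_0,M_0)$ of Theorem~\ref{WE1}(i) does not track that dependence.

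Second, and more seriously, your proposed decay mechanism on $\Pi_{n,L}\mathcal H$ does not close. You want to use the Wegner estimate to obtain a spectral gap of size $\epsilon$ around $E$ with good probability, and then Combes-Thomas. But these two requirements pull in opposite directions: to make the Wegner probability $Q_n B\epsilon L^2/a^2$ small one must shrink $\epsilon$, and Combes-Thomas based on distance to the spectrum of the \emph{random} operator then gives a decay rate $\sim\sqrt{\epsilon}$, which vanishes faster than the gain in probability. Run the scales: with $a\sim K_n\log B/B$ the Wegner probability forces $\epsilon$ polynomially small in $B$, the Combes-Thomas rate $\sqrt{\epsilon}$ is then too small, and $e^{-L\sqrt{\epsilon}}$ does not beat $L^{-\theta}$ at any finite scale $L_0$ uniformly as $B\to\infty$. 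The paper's decay is \emph{not} Combes-Thomas from the random spectral gap; it is exponential decay through a \emph{classically forbidden ribbon}, at rate $\sim\min\{aB,\sqrt B\}$ (see Theorem~\ref{ILE}), which depends only on $a=|E-B_n|$ and $B$, not on $\mathrm{dist}(E,\sigma_\omega)$, and for $a\sim\log B/B$ gives decay $B^{-\beta_n K_n}$ — polynomial in $B$, hence arbitrarily small by increasing $K_n$.

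The missing idea is precisely the geometric construction enabling that ribbon estimate in the Delone setting. In \cite{CH} one needs, with good probability, a closed connected ribbon in the annulus $\L_L\setminus\L_{L/3}$ on which $|V(x)+B_n-E|>a$; Combes and Hislop produced it by bond percolation. The paper replaces this with a deterministic construction: build the Voronoi diagram of $D\cap\L_L$, cover $\L_{L/3}$ by Voronoi cells, take the perimeter $\mathcal C$ (a polygonal line avoiding $D$), and use assumption~(\emph{uc}) ($\mathrm{supp}\,u\subset B(0,\tilde r/10)$) to find a tube $\mathcal R$ around $\mathcal C$ of width $\tilde r/4-\tilde r/10$ on which $V\equiv 0$ almost surely. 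Then $|V+B_n-E|=|B_n-E|\geq a$ deterministically on $\mathcal R$, so \cite[Eq.~4.2 and Cor.~4.1]{CH} hold almost surely, yielding Theorem~\ref{ILE}. This ribbon construction is what makes the ILSE go through in the Delone-Anderson setting, and without it your balance of scales does not close. Once it is in hand, the verification of (\ref{suitable}) for suitable $L_0$, $K_n(\lambda)$ and ${\bf B}(n)$ is the bookkeeping you describe, carried out as in \cite[Theorem~4.1]{GK2}.
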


For the proof we need to verify the conditions to start the modified Multiscale Analysis, Theorem \ref{Bootstrap}.
As mentioned in the proof of Theorem \ref{Bootstrap}, this model satisfies properties (IAD), (R), (EDI), (SLI) and
(UNE).  What is left to prove is the existence of a suitable length scale $L_0$ that satisfies 
 (\ref{ILSE}) and (UWE).  The latter comes from the following
improvement in the Wegner estimate of the previous section and it follows \cite[Theorem 3.1]{CH}.

\bigskip

\begin{thm}\label{WE3} There exists $\tilde B>0$ and a constant $Q_n=\tilde
Q_{n,\lambda,u}\Vert \rho \Vert_\infty$ such
that for all $B>\tilde B$ and for any closed interval
$\Delta\subset\mathcal B_n\setminus\sigma(H_B)$

\be\label{WEL1} \mathbb E\{ \tr P_{B,\lambda,\omega,x,L}(\Delta)\}\leq Q_n
\frac{B}{2(dist(\Delta,B_n))^2}|\Delta|L^2.
\ee 

In particular, for $E_0\notin\sigma(H_B)$ and all
$0<\epsilon<|E_0-B_n|$,

\be\label{WEL2}  \mathbb P\{\dist(\sigma(H_{B,\lambda,\omega,x,L}),
E_0)\leq \epsilon\}\leq Q_n \frac{B}{(|E_0-B_n|-\epsilon)^2}\epsilon
L^2.
\ee

\end{thm}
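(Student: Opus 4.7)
The plan is to adapt the Combes--Hislop argument \cite[Theorem 3.1]{CH} to our Delone--Anderson setting. The key observation is that the hypothesis $\Delta\subset\mathcal B_n\setminus\sigma(H_B)$ places $\Delta$ in the resolvent set of $H_B$, so that $\dist(\Delta,\sigma(H_{B,L}))\geq d_\Delta:=\dist(\Delta,B_n)>0$ (for $B$ large, the disjoint-band condition (\ref{DBC}) guarantees that the neighbouring Landau levels $B_{n\pm 1}$ are at distance at least $2B-\lambda(m_0+M_0)-d_\Delta\gg d_\Delta$). This spectral gap replaces the positivity estimate on the range of $\Pi_{n,L}$ (Lemma \ref{PEthm}) that was the engine behind Theorem \ref{thm}(i): there we had to lift the trace into the $n$-th Landau subspace, whereas here the resolvent $(H_{B,L}-E)^{-1}$ is directly bounded by $1/d_\Delta$ on $\Delta$ and no positivity estimate is needed.

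First, I would smooth the indicator by choosing $\rho\in C_c^\infty(\R)$ with $\chi_\Delta\leq\rho\leq\chi_{\tilde\Delta}$ for a slightly enlarged $\tilde\Delta\subset\mathcal B_n\setminus\sigma(H_B)$ still satisfying $\dist(\tilde\Delta,B_n)\geq d_\Delta/2$, so that $\tr P_{B,\lambda,\omega,x,L}(\Delta)\leq\tr\rho(H_{B,\lambda,\omega,x,L})$. Second, I would apply spectral averaging in Feynman--Hellmann form: for each $\gamma\in D\cap\Lambda_{L-\delta_u}(x)$ one has $\partial_{\omega_\gamma}\tr\rho(H_{\omega,L})=\lambda\tr(u_\gamma\rho'(H_{\omega,L}))$, so integration by parts against the density $\rho_\gamma$ bounds $\mathbb E[\tr\rho(H_{\omega,L})]$ by $\lambda^{-1}\rho_+\,|\Delta|\sum_\gamma\mathbb E[\tr(u_\gamma\,T_\gamma)]$ where $T_\gamma$ is, up to boundary terms, a product of two resolvents $(H_{\omega,L}-E)^{-1}$ evaluated at endpoints of $\tilde\Delta$.

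Third, the key quantitative ingredients assemble as follows: the resolvent bound $\|(H_{B,L}-E)^{-1}\|\leq 2/d_\Delta$ on $\tilde\Delta$ and a second-resolvent identity transfer $T_\gamma$ to a bound involving $(H_{B,L}-E)^{-1}\Pi_{n,L}+(\text{contribution of }\Pi_{n,L}^\perp)$, the latter being $O(1/B)$ and absorbable into $Q_n$ thanks to (\ref{DBC}). The principal term gives $\tr(u_\gamma(H_{B,L}-E)^{-2}\Pi_{n,L})\leq d_\Delta^{-2}\|u_\gamma\|_\infty\tr(\Pi_{n,L}\chi_{x,\delta_u})$, and after summing over $\gamma$ the magnetic flux quantization $\tr(\Pi_{n,L}\chi_{x,L})=BL^2/(2\pi)$ produces the announced factor $B/(2d_\Delta^2)$, while the uniform Delone density $\sharp(D\cap\Lambda_L(x))\leq C_r L^2$ supplies the $L^2$ in (\ref{WEL1}) (uniformly in $x$).

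The main obstacle will be organising the estimates without the covering condition $\sum_\gamma u_\gamma\geq c>0$ and without translation invariance, both of which enter the classical Combes--Hislop proof; here the aperiodic $D$ forbids the former and the non-ergodicity forbids the latter. Both issues are bypassed because the hypothesis $\Delta\cap\sigma(H_B)=\emptyset$ lets us replace positivity by resolvent control and because the $(r,R)$-Delone property gives uniform--in--$x$ upper bounds on $\sharp(D\cap\Lambda_L(x))$, which is the only information about $D$ that enters. Finally, the tail bound (\ref{WEL2}) follows immediately from (\ref{WEL1}) applied to $\Delta=[E_0-\epsilon,E_0+\epsilon]$ together with Chebyshev's inequality, since $\dist(\sigma(H_{B,\lambda,\omega,x,L}),E_0)\leq\epsilon$ forces $\tr P_{B,\lambda,\omega,x,L}(\Delta)\geq 1$.
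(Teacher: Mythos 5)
Your high-level reading of the theorem is right: the hypothesis $\Delta\subset\mathcal B_n\setminus\sigma(H_B)$ is what makes the positivity estimate of Lemma~\ref{PEthm} unnecessary, the resolvent gap $\dist(\Delta,B_n)$ supplies the factor $d_\Delta^{-2}$, the uniform Delone density supplies the $L^2$ (uniformly in $x$), and the bound (\ref{WEL2}) follows from (\ref{WEL1}) by Chebyshev. However, the spectral-averaging mechanism you propose is not the one the paper uses, and as stated it has a gap.

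The paper follows Combes--Hislop \cite[Eq.~3.4--3.11]{CH}: starting from the eigenvalue equation $\psi=\lambda R_B(E)V_\omega\psi$ for $E\in\Delta$ (valid precisely because $E\notin\sigma(H_B)$), one reaches an inequality of the schematic form $\tr P_L(\Delta)\lesssim d_\Delta^{-2}\lambda^2\,\tr(V_\omega P_L(\Delta)V_\omega\,\Pi_{n,L})$. Expanding \emph{both} copies of $V_\omega=\sum_\gamma\omega_\gamma u_\gamma$ produces a \emph{double} sum over $i,j\in D$, one then bounds $|\omega_i\omega_j|\leq M^2$, and spectral averaging (in the one-parameter form of Lemma~\ref{lemaux}, applied to each pair $(i,j)$) yields the single factor $\|\rho\|_\infty|\Delta|$, giving
\[
\mathbb E\{\tr P_L(\Delta)\}\;\lesssim\; d_\Delta^{-2}\,M^2\,\|\rho\|_\infty\,|\Delta|\sum_{i,j\in D}\bigl\|\,u_i^{1/2}\Pi_{n,L}u_j^{1/2}\,\bigr\|_1 .
\]
The double sum is then controlled by splitting into near pairs $|i-j|<4\delta_u$ (using the trace-class estimate $\|\chi_{ij}\Pi_{n,L}\chi_{ij}\|_1\lesssim B$, which contributes $O(B|\Lambda|)$ after summing, and is where the factor $B$ comes from) and far pairs $|i-j|\geq 4\delta_u$ (using the exponential off-diagonal decay of the Landau projection kernel, contributing $O(|\Lambda|)$).

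Your Feynman--Hellmann argument replaces this by a \emph{single} sum $\sum_\gamma$ coming from $\partial_{\omega_\gamma}\tr\rho(H_{\omega,L})=\lambda\tr(u_\gamma\rho'(H_{\omega,L}))$, integrated by parts against $\rho_\gamma$. That is exactly the mechanism which needs a covering condition $\sum_\gamma u_\gamma\geq c>0$ to close: without it, $\sum_\gamma\partial_{\omega_\gamma}\tr\rho(H_{\omega,L})=\lambda\tr(\tilde V_L\,\rho'(H_{\omega,L}))$ cannot be bounded below by $c\,\lambda\,\tr\rho'(H_{\omega,L})$, and after integration by parts the boundary terms and the $\rho_\gamma'$ term do not combine into a bound for $\mathbb E[\tr\rho(H_{\omega,L})]$. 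Your claim that the resolvent gap substitutes for the covering condition is not correct for this particular argument; what the gap actually buys (and what CH exploit) is the identity $\psi=\lambda R_B(E)V_\omega\psi$, which \emph{inserts} $V_\omega$ on both sides and produces the double sum. The double-sum route is precisely how one avoids needing $\sum_\gamma u_\gamma\geq c$, at the price of needing good trace and kernel-decay estimates for $\Pi_{n,L}$ -- which one has in the Landau case.

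One more small point: in writing $\tr(\Pi_{n,L}\chi_{x,L})=BL^2/(2\pi)$ you are using the (per unit area) degeneracy of the Landau level, which is indeed the source of the $B$-factor, but in the paper this enters through the near-pair trace-class estimate of \cite[Lemma~2.1]{CH} rather than through a global flux count, and the far-pair contribution $O(|\Lambda|)$ must also be accounted for.
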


\smallskip

\begin{proof}
Without loss of generality we work within the first Landau band
$\mathcal B_0$, containing the Landau level $B_0$. Set $M=\Vert
V_\omega\Vert_\infty=\max\{m_0,M_0\}$. Let $\Delta$ be an interval such that
$\Delta\subset\mathcal B_0\setminus \{B_0\}$ and $\inf \Delta >B$, so $\mbox{dist }(\Delta, B_0)>0$ .

Following the same arguments in \cite[Eq. 3.4 - 3.11]{CH}, we get

\be\label{sumaij}\mathbb E\{ \tr P_L(\Delta)\} < \dist(\Delta,B_0)^{-2}M^2\Vert \rho
\Vert_\infty |\Delta| \displaystyle \sum_{i,j\in D} \Vert
\Pi_{0,L}^{ij} \Vert_1, \ee
where $P_L(\Delta)$ stands for $P_{B,\lambda,\omega,x,L}(\Delta)$
and we use the notation $A^{ij}=u_i^{1/2}A u_j^{1/2}$ for any
bounded operator $A$.

\bigskip

To evaluate the sum we consider separately the indices $i,j$ for
which $\Vert i-j\Vert< 4\delta_u$ and those for which $\Vert
i-j\Vert\geq 4\delta_u$, with $\delta_u$ as in (\ref{u}).


Let $\chi_{ij}$ be the characteristic function of supp$(u_i+u_j)$.
Again, as in Thm \ref{WE1}, the translated supports of $u$ behave in
a similar way as in the lattice.  Then we follow the same arguments
therein and obtain, using  \cite[Lemma 2.1]{CH},

\be \label{ij<delta} \displaystyle \sum_{|i-j|<4\delta_u} \Vert
\Pi_{0,L}^{ij} \Vert_1 \leq \Vert u \Vert_\infty^2 \displaystyle
\sum_{|i-j|<4\delta_u} \Vert \chi_{ij}\Pi_{0,L}\chi_{ij} \Vert_1
\leq C_0 B|\L| |\mbox{supp }u|, \ee
where the constant $C_0$ actually depends on the index $n$ of the
Landau level, which in this case is $0$.

\bigskip

Define $\chi_{ij}^+$ to be the characteristic function of the set
$\{ x\in\mathbb R^2 : \Vert x-i\Vert <\Vert x-j\Vert \}$ and denote
$\chi_{ij}^-=1-\chi_{ij}^+$.  Then we obtain

\[\Vert \Pi_{0,L}^{ij}\Vert_1 \leq \Vert
u_j^{1/2}\Pi_{0,L}\chi_{ij}^+\Vert_2
\Vert\chi_{ij}^+\Pi_{0,L}u_i^{1/2}\Vert_2+\Vert
u_j^{1/2}\Pi_{0,L}\chi_{ij}^-\Vert_2
\Vert\chi_{ij}^-\Pi_{0,L}u_i^{1/2}\Vert_2.  \]

\bigskip

Now, if $|i-j|\geq 4\delta_u$, condition (\ref{u}) implies that

\[ \dist(\mbox{supp }\chi_{ij}^+,\mbox{supp }u_j)\geq \frac{\Vert
i-j\Vert}{2}-\delta_u \geq k\Vert i-j\Vert  \]
for some $k>0$.  Similarly for $\dist(\mbox{supp
}\chi_{ij}^-,\mbox{supp }u_i)$. We then obtain

\be \label{ij>delta} \displaystyle \sum_{|i-j|\geq 4\delta_u} \Vert
\Pi_{0,L}^{ij} \Vert_1 \leq C_1 |\mbox{supp }u| |\L|. \ee

Combining (\ref{sumaij}), (\ref{ij<delta}) and (\ref{ij>delta}) we
obtain

\[ \mathbb E \{ \tr P_{L}(\Delta)\}\leq Q_0(\dist(\Delta,B_0))^{-2}\Vert \rho
\Vert_\infty \epsilon B |\L|, \]
where the constant $Q_0$ depends on $\lambda$, $M$,$\Vert
u \Vert_\infty$ and supp $u$.
Taking $\Delta=[E_0-\epsilon,E+\epsilon]$ for small $\epsilon>0$ and applying Chebyshev's inequality we obtain (\ref{WEL2}).

\end{proof}

As for the initial length scale estimate (\ref{ILSE}) to start the multiscale analysis,  we need to verify that for some
$L_0\in6\mathbb N$ sufficiently large (as specified in \cite{GK2}),
given $\theta>0$, $E\in\mathbb R\setminus\sigma(H_{B,L})$,

\be\label{suitable} \mathbb P\left\{\Vert \Gamma_{x,L_0}
R_{B,\omega,x,L_0}(E)\chi_{x,L_0/3} \Vert\leq
\frac{1}{L_0^\theta}\right\}>1-\frac{1}{L_0^p}, \ee
for a suitable choice of $p$, where
$\Gamma_{x,L}=\chi_{\bar\L_{L-1}(x)\setminus\L_{L-3}(x)}$.

\bigskip

To do so we follow the approach \cite{CH} to obtain estimates that we will later state as in
\cite{GK2}.  We need to show that in the annular region between a box of side $L/3$ and $L$, there exists a closed, connected ribbon where the potential $V$ satisfies the condition $|V(x)+B_n-E|>a>0$, for $ E\neq B_n$  with a good probability (\cite[Eq. 4.2]{CH}).  To prove this, Combes and Hislop used bond percolation theory, defining occupied bonds of the lattice as those bonds where the potential satisfies this property.  However, in our case there is no need to use percolation theory since this fact is assured by the assumption (\ref{uc}) on the single-site potential.  More precisely, we will show that there exist ribbons where the potential is zero almost surely.

\bigskip


Let us consider the {\it Voronoi diagram} associated to $D$ \cite{OBSC}.
Since $\tilde \L_{L}=D\cap\L_L$ is a discrete bounded set, we can
write $\tilde \L_{L}=\{p_1,..., p_n\}$, $n\in\mathbb N$.  For each site
$p_i$ we consider its {\it Voronoi cell}, defined as

\[ \mathcal V (p_i)=\{ x\in \mathbb R^2:\Vert x-p_i\Vert\leq\Vert x-p_j \Vert,
j\neq i, 1\leq j\leq n \},\]
i.e., the set of points that are closer to $p_i$ than to any other site in
$\tilde\L_L$. The Voronoi diagram associated to $\tilde\L_L$, denoted
by $\mathcal Vor(\tilde\L_L)$ is a subdivision of $\L_L$ into Voronoi
cells,
\[ \mathcal Vor(\tilde\L_{L})=\displaystyle\bigcup_{1\leq i\leq n} \mathcal
V(p_i).  \]

The edges and vertices of $\mathcal Vor(\tilde\L_L)$ are polygonal
connected lines with the property that the minimal and maximal distances from
any site $p_i$ to an edge or vertex are $r/4$ and $R/2\sqrt{2}$, respectively.

Now, take a covering of $\L_{L/3}$ by a finite collection of Voronoi cells,
$\mathcal V_\L$,
which is a convex polygon.  Its perimeter is a polygonal line $\mathcal C$ that
encloses $\L_{L/3}$ such that $\mathcal C\cap D=\emptyset$. Taking $L$ big
enough with respect to $R$ we have $\mathcal C\subset
\L_{L-3}\setminus\L_{L/3}$.
Moreover, assumption \emph{(uc)} implies that we can always find a
ribbon $\mathcal R$ associated to $\mathcal C$, i.e., a set
\[ \mathcal R =\{ x\in\mathbb R^2 :\mbox{ }dist(x,\mathcal C)< \frac{\tilde
r}{4} - \frac{\tilde r}{10}  \}, \]
such that $V(x)=0$ for all $x\in\mathcal R$ (see Fig. \ref{fig:Voronoi})

\begin{figure}[h]
\centering
\vspace{1mm}
\scalebox{0.5}{\includegraphics{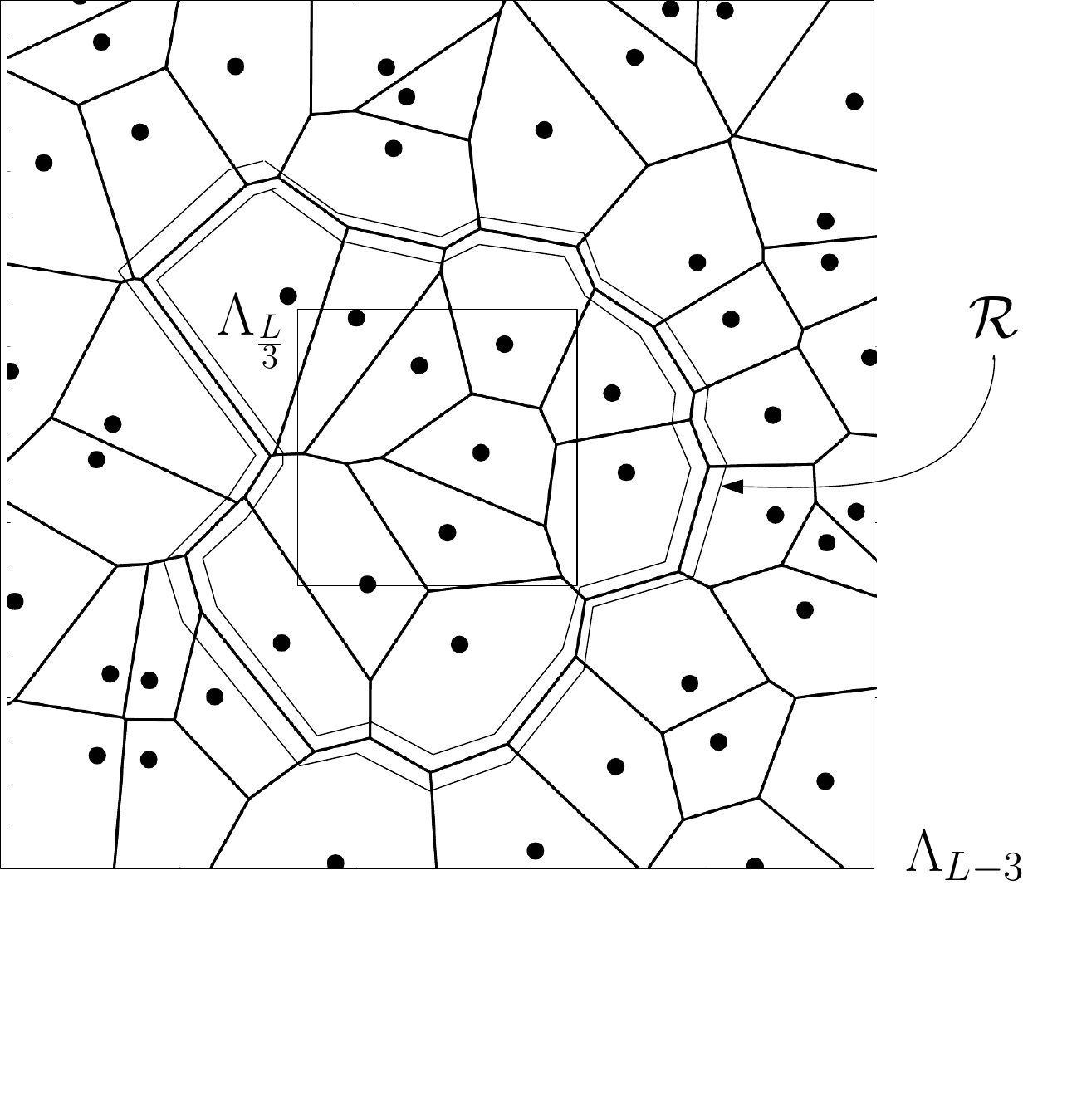}}
\caption{Ribbon $\mathcal R$ in the Voronoi diagram associated to $D$.  Points represent the support of the Delone-Anderson potential.}
\label{fig:Voronoi}
\end{figure}

Then, condition \cite[Eq. (4.2)]{CH} holds almost surely, therefore \cite[Corollary 4.1]{CH} holds almost surely, and this implies (see \cite[Proposition 5.1]{CH}, \cite[Theorem 4.3]{GK2})

\begin{thm}\label{ILE}
Let $E=B_n\pm 2a$ for some $n=0,1,2...$ with $0<2a<B$.  There exists
constants $Y_n,\beta_n>0$ depending only on $n,M,u, \delta_u$ such
that for any $0<\epsilon\leq a$, $L\in6\mathbb N$  and $Q_n$ as in
the previous theorem,

\be \mathbb P\left\{\Vert \Gamma_{x,L}
R_{B,\omega,x,L}(E)\chi_{x,L/3} \Vert\leq Y_n\frac{B}{a\epsilon^2}
e^{-\beta_n min\{aB,\sqrt{B}\}}\right\}> 1-Q_n\frac{B\epsilon
}{a^2}L^2.\ee

\end{thm}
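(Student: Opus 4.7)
My plan is to follow the strategy of \cite[Proposition 5.1]{CH} (cf.\ also \cite[Theorem 4.3]{GK2}), combining the deterministic geometric setup already constructed above with a magnetic Combes--Thomas estimate and the Wegner bound \eqref{WEL2}. The starting observation is that on the ribbon $\mathcal{R}$ the Delone--Anderson potential vanishes almost surely, so $H_{B,\lambda,\omega,x,L}$ coincides there with the free Landau Hamiltonian $H_{B,x,L}$, which is the essential mechanism allowing a free-operator decay estimate to control the perturbed resolvent.

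First I would choose a smooth cutoff $\Psi$ equal to $1$ on the region enclosed by $\mathcal{C}$ (which contains $\L_{L/3}(x)$) and vanishing outside a slightly fattened set still contained in the interior of $\mathcal{R}$; by construction $\nabla\Psi$ is supported in $\mathcal{R}$, so $V_\omega\equiv 0$ on $\mathrm{supp}\,\nabla\Psi$ and
\[
[H_{B,\lambda,\omega,x,L},\Psi]=[H_{B,x,L},\Psi]=:W_{\Psi}
\]
is a first-order magnetic differential operator supported in $\mathcal{R}$. Since $\Psi\chi_{x,L/3}=\chi_{x,L/3}$ and $\Gamma_{x,L}\Psi=0$, the geometric resolvent identity yields
\[
\Gamma_{x,L}R_{B,\omega,x,L}(E)\chi_{x,L/3}
=-\Gamma_{x,L}R_{B,\omega,x,L}(E)\,W_{\Psi}\,R_{B,\omega,x,L}(E)\chi_{x,L/3},
\]
which localises the analysis to the ribbon.

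Second, I would extract exponential decay through a magnetic Combes--Thomas estimate in the Landau gap. Since $\dist(E,\sigma(H_B))=2a$ and $V_\omega\equiv 0$ on $\mathcal{R}$, inserting one further cutoff $\tilde\Psi$ supported in $\mathcal{R}$ with $\tilde\Psi\equiv 1$ on $\mathrm{supp}\,W_{\Psi}$ and applying the geometric resolvent identity once more allows the perturbed resolvent, between disjoint supports, to be replaced by the free resolvent $R_B(E)$. The standard magnetic Combes--Thomas bound for $H_B$ in a band gap then gives
\[
\|\chi_{\mathcal{R}}R_B(E)\chi_{x,L/3}\|\le\frac{C_n}{a}\,e^{-\beta_n\min\{aB,\sqrt{B}\}},
\]
the $\min\{aB,\sqrt{B}\}$ rate reflecting the competition between the spectral gap $2a$ and the magnetic length $1/\sqrt{B}$.

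Third, the probabilistic input is \eqref{WEL2}: on an event of probability at least $1-Q_n\frac{B\epsilon}{a^2}L^2$ one has $\dist(\sigma(H_{B,\omega,x,L}),E)>\epsilon$, so $\|R_{B,\omega,x,L}(E)\|\le\epsilon^{-1}$. On this event, inserting this bound into the geometric resolvent identity together with $\|W_{\Psi}R_{B,\omega,x,L}(E)\|\le C\sqrt{B}\,\epsilon^{-1}$ (which absorbs the magnetic gradient appearing in $W_{\Psi}$) produces
\[
\|\Gamma_{x,L}R_{B,\omega,x,L}(E)\chi_{x,L/3}\|\le Y_n\,\frac{B}{a\epsilon^2}\,e^{-\beta_n\min\{aB,\sqrt{B}\}},
\]
which is the claim. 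The main obstacle is the magnetic Combes--Thomas step in the presence of the first-order operator $\mathbf{D}_{B,x,L}$ appearing in $W_{\Psi}$: one has to tilt by an appropriate parameter to reach the $\min\{aB,\sqrt{B}\}$ decay rate characteristic of Landau gaps while keeping the prefactor polynomial in $B$ and $a^{-1}$; once this is in hand, the rest is bookkeeping of the constants $B,a,\epsilon$ and the fact that the ribbon--to--$\L_{L/3}(x)$ separation is bounded below by a constant depending only on $r,R$.
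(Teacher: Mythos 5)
Your proposal follows exactly the strategy the paper invokes: the Voronoi ribbon construction guarantees that $V_\omega\equiv 0$ almost surely on an annular region separating $\L_{L/3}(x)$ from $\Gamma_{x,L}$, and the theorem then follows by the mechanism of \cite[Corollary~4.1, Proposition~5.1]{CH} and \cite[Theorem~4.3]{GK2}, which is precisely what the paper cites (the paper gives no independent argument beyond establishing the ribbon and the Wegner estimate, Theorem~\ref{WE3}). Your decomposition — first geometric resolvent identity to localize $W_\Psi$ to the ribbon, Wegner to control $\|R_{B,\omega,x,L}(E)\|\le\epsilon^{-1}$, and a decay estimate across the ribbon — is the same bookkeeping as \cite[Proposition~5.1]{CH}, and the $\epsilon^{-2}$ in the final constant indeed records the two resolvent norms as you indicate.

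One caution about the step you flag as the main obstacle: the bound
\[
\|\chi_{\mathcal{R}}R_B(E)\chi_{x,L/3}\|\le\frac{C_n}{a}\,e^{-\beta_n\min\{aB,\sqrt{B}\}}
\]
is not what one obtains from the \emph{generic} magnetic Combes--Thomas estimate, whose decay rate at distance $d$ from the spectrum scales like $\sqrt{d}\cdot\mathrm{dist}$; since the ribbon has $O(1)$ width, that route alone gives a rate $\sim\sqrt{a}$, not $\min\{aB,\sqrt{B}\}$. The $\min\{aB,\sqrt{B}\}$ rate in \cite[Theorem~4.2/Corollary~4.1]{CH} comes from the specific band structure of $H_B$ — one works with $\Pi_n$ and $\Pi_n^\perp$ separately, using the Gaussian decay of the Landau projection kernel (rate $\sim\sqrt{B}$) for $\Pi_n$ and the spectral separation $\sim B$ for $\Pi_n^\perp$. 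Also, when you insert the second cutoff $\tilde\Psi\subset\mathcal{R}$ with $\tilde\Psi\equiv 1$ on $\mathrm{supp}\,W_\Psi$, what you actually get is $W_\Psi R_B(E)W_{\tilde\Psi}R_{B,\omega,x,L}(E)\chi_{x,L/3}$, so the decay must be extracted between $\mathrm{supp}\,W_\Psi$ and $\mathrm{supp}\,W_{\tilde\Psi}$ inside the ribbon, not between $\mathcal{R}$ and $\L_{L/3}(x)$ directly; you therefore need to choose the cutoffs so that their gradients are supported on separated sub-annuli whose separation is comparable to the ribbon width. With those two points corrected, the proposal reproduces the cited argument, which is all the paper itself does.
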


Therefore, to satisfy (\ref{suitable}) we need only to verify
the conditions

\bea Y_n\frac{B}{a\epsilon^2} e^{-\beta_n
min\{aB,\sqrt{B}\}}\leq\frac{1}{L_0^\theta},\\
Q_n\frac{B\epsilon }{a^2}L_0^2\leq \frac{1}{L_0^p}, \eea
which can be done in the same way as in the proof of \cite[Theorem 4.1]{GK2}, yielding Theorem \ref{LocLandau} .

\bigskip


\subsection{Dynamical delocalization in Landau bands}

\begin{thm}\label{DelocLandau} Under the disjoint bands condition (\ref{DBC}) the random Landau
Hamiltonian $H_{B,\lambda,\omega}$ exhibits dynamical delocalization
in each Landau band $\mathcal B_n(B,\lambda)$, i.e. for all $n=1,2,...$,

\be \Xi^{DD}\cap \sigma_{B,\lambda,\omega}\cap \mathcal B_n(B,\lambda)\neq
\emptyset. \ee

In particular, there exists at least one energy $E_{n,\omega}(B,\lambda)\in
\mathcal B_n(B,\lambda)$ such that for every $\mathcal X \in
C^\infty_{c,+}(\mathbb R)$ with $\mathcal X \equiv 1$ on some open
interval $J\ni E_{n,\omega}(B,\lambda)$ and $p>0$, we have

\be \mathcal M_{B,\lambda}(p,\mathcal X,T)\geq C_{p,\mathcal
X}T^{\frac{p}{4}-6}, \ee
for all $T\geq 0$ with $C_{p,\mathcal X} >0$.

\end{thm}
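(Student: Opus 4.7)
The plan is to adapt the Hall-conductance strategy of \cite{GKS} for ergodic Landau operators to the non-ergodic Delone-Anderson setting, proceeding by contradiction. Suppose for some $n\geq 1$ the band $\mathcal{B}_n(B,\lambda)$ contains no point of $\Xi^{DD}$, i.e., $\mathcal{B}_n(B,\lambda)\subset\Xi^{DL}$. By the uniform Wegner estimate (\ref{WEL1}) and Theorem \ref{ILE}, the initial length scale estimate (\ref{suitable}) can be verified at every $E\in\mathcal{B}_n(B,\lambda)$, so Theorem \ref{Bootstrap} applies throughout the band. In particular, the decay of the Fermi projections (DFP) holds uniformly on any compact subset of the strip $\mathcal{G}_{n-1}(B,\lambda)\cup\mathcal{B}_n(B,\lambda)\cup\mathcal{G}_n(B,\lambda)$, where the adjacent gaps are trivially in the localization region.

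Once DFP is established, one defines the Hall conductance $\sigma_H(E)$ through the Avron-Seiler-Simon index of the pair of projections associated with $P_\omega((-\infty,E])$. This definition is purely deterministic and does not appeal to ergodicity; under DFP the relative index is finite and locally constant in $E$ on the set where DFP holds. In the resolvent gap $\mathcal{G}_{n-1}(B,\lambda)$ it takes the value $n$, and in $\mathcal{G}_n(B,\lambda)$ the value $n+1$. Under the contradiction hypothesis DFP would be valid on the entire strip $\mathcal{G}_{n-1}\cup\mathcal{B}_n\cup\mathcal{G}_n$, forcing these two values to coincide. The resulting contradiction yields $\Xi^{DD}\cap\mathcal{B}_n(B,\lambda)\neq\emptyset$. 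Combining with the almost sure existence of spectrum in every Landau band, and using that $\Xi^{DD}\cap\rho(H_{B,\lambda,\omega})=\emptyset$, produces a delocalized energy $E_{n,\omega}(B,\lambda)$ in $\Xi^{DD}\cap\sigma_{B,\lambda,\omega}\cap\mathcal{B}_n(B,\lambda)$.

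For the polynomial lower bound on the moments I would invoke the contrapositive of Theorem \ref{Emom} at $E_{n,\omega}(B,\lambda)$: since this energy is not in $\Sigma_{MSA}$, condition (\ref{Emomentum}) must fail for every admissible pair $(p,\alpha)$ with $p>12d/s+2\alpha d/s$. The Wegner estimate (\ref{WEL1}) furnishes the H\"older exponent $s=1$; with $d=2$, choosing $\alpha$ just below $p/4-6$ and $\mathcal{X}\equiv 1$ in a neighborhood of $E_{n,\omega}(B,\lambda)$ yields $\limsup_T T^{-(p/4-6)}\sup_u \mathbb{E}(\mathcal{M}_{u,\omega}(p,\mathcal{X},T))=\infty$, which gives the claimed lower bound $\mathcal{M}_{B,\lambda}(p,\mathcal{X},T)\geq C_{p,\mathcal{X}}T^{p/4-6}$ along a sequence of times, and after a standard rephrasing, for all large~$T$.

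The main obstacle is the Hall-conductance step. In the ergodic case one defines $\sigma_H$ as a trace per unit volume, for which covariance immediately gives both the quantization in gaps and the local constancy under DFP. In the Delone-Anderson setting neither tool is available, so one must work exclusively with the deterministic index-of-projections formulation and verify that DFP by itself suffices to give the required trace-class property, local constancy in $E$, and the correct values $n$ and $n+1$ on the two adjacent gaps; the careful reduction to the index computations of \cite{CH,CH2,GKS,GKS2} in the absence of translation covariance is the most delicate point of the argument.
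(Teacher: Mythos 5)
Your proposal follows essentially the same route as the paper, which also adapts the Hall-conductance contradiction argument of \cite{GKS}: quantization of the conductance in adjacent spectral gaps $\mathcal{G}_{n-1}$, $\mathcal{G}_n$, constancy of the conductance on connected components of the localization region, and the resulting incompatibility with the hypothesis $\mathcal{B}_n\subset\Xi^{DL}$, combined with the contrapositive of Theorem \ref{Emom} (i.e.\ \cite[Theorems~2.8, 2.10]{GK3}) with $s=1$, $d=2$ for the polynomial lower bound. Two remarks on implementation: the paper derives the constancy from property SUDEC and the trace formula $\sigma_{H_\omega}=-2\pi i\,\Theta(P_{B,\lambda,\omega,E})$ rather than from DFP and the index-of-projections formulation (both are equally deterministic and available here); and, in the one place your argument is informal, the paper does not simply assert the quantized values in the gaps for the perturbed operator but obtains them by a norm-resolvent-continuous homotopy in $\lambda$ down to the free Landau Hamiltonian, along the lines of \cite[Lemma~3.3]{GKS}, which requires precisely that the gaps $\mathcal{G}_n(B,\lambda')$ stay open for all $\lambda'\in[0,\lambda]$ --- this is what the disjoint-bands condition (\ref{DBC}) supplies, and it is this step, not ergodicity, that fixes the integer values. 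Your flagging of the trace-class and local-constancy issues as the delicate point is apt, but the resolution is the $\lambda$-homotopy rather than a fresh index computation.
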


This is a consequence of the quantization of the Hall conductance in each Landau band  and the fact that in regions of dynamical localization, the Hall conductance is constant, as proven in \cite[Section 3]{GKS}.  We recall the main lines of their strategy.

Consider the switch function $h(t)=\chi_{[\frac 1 2, \infty)}(t)$ and let $h_j$ denote the multiplication by the function $h(x_j)$, $j=1,2$.  The Hall conductance is defined as 

\be \sigma_{H_\omega}(B,\lambda,E)= -2\pi i \Theta(P_{B,\lambda,\omega,E}):= \tr \{ P_{B,\lambda,\omega,E}[[P_{B,\lambda,\omega,E},h_1],[P_{B,\lambda, \omega,E},h_2]]\} \ee
where $P_{B,\lambda,\omega, E}:=P_{B,\lambda,\omega}((-\infty, E])$.

Following the proof of \cite[Lemma 3.2]{GKS} we see that the Hall conductance is constant in connected components of the dynamical localization region, where property SUDEC is valid, as consequence of Theorem \ref{Bootstrap}.  On the other hand, it is well known that for $\lambda=0$, $\sigma_{H_\omega}(B,\lambda,E)=n$ if $E\in (B_n,B_{n+1})$ for all $n=0,1,2,...$.  Under the disjoint bands condition (\ref{DBC}),  if $E\in \mathcal G_n(B,\lambda_*)$ for $\lambda_*$ and some $n\in\{0, 1, 2,...\}$, we can find some $\lambda_E>\lambda_*$ such that $E\in \mathcal G_n(B,\lambda)$ for all $\lambda \in [0,\lambda_E]$.  That is, the spectral gaps stay open as $\lambda$ increases. Then we prove along the lines of \cite[Lemma 3.3]{GKS} that $\sigma_{H_\omega}(B,\lambda,E)=n$  if $E\in \mathcal G_n(B,\lambda)$,  for all $[0,\lambda_E]$. 
 As the spectral gaps $\mathcal G_n(B,\lambda)$ are by definition part of the localization region, this implies that the Hall conductance has the same value in different gaps, which is a contradiction.  Therefore, we must have $\Xi^{DD}\cap \sigma_{B,\lambda,\omega}\cap \mathcal B_n(B,\lambda)\neq \emptyset$ for every $\omega\in\Omega$.

By Theorems \ref{LocLandau} and \ref{DelocLandau} we conclude that there exists a dynamical transition energy in each Landau band as stated in Theorem \ref{DelocLandau}.

\bigskip

\subsection{Almost sure existence of spectrum near band edges}\label{spectrum}

Since we deal with a non ergodic random operator, previous results on the nature of the spectrum do not hold in this setting. In particular, we cannot use the characterization of the spectra as a union of spectra of periodic operators as in \cite{GKS}.  We need a more constructive approach and thus, to go back to the argument used in \cite{CH}.  We extend  \cite[Theorem 7.1]{CH} to a Delone-Anderson potential to make sure that, although the spectrum $\sigma_{B,\lambda,\omega}$ is random, there exists almost surely some part of $\sigma_{B,\lambda,\omega}$ in the region were we can prove dynamical localization, that is, in the spectral band edges.
 

We explicit the dependence on the $(r,R)$-Delone $D$ set by writing $V_\omega^D$ for the Delone-Anderson potential and $H_\omega^D$ for the corresponding random operator defined by (\ref{randop}).

 Consider the operator acting on $L^2(\R^2)$, 
$H_\omega^D=H_{B}+\lambda V_{\omega}^D$ where $\lambda>0$ and $V_{\omega}^D$ is defined as in \ref{ranpot1}.  Recall that
\be \label{ranpot1}
V_\omega^D(x)=\displaystyle\sum_{\gamma \in D}\omega_\gamma
u_\gamma,
\ee
where $D$ is an $(r,R)$-Delone set, the random variables $\omega_\gamma$ are i.i.d. with absolute continuous probability density $\mu$,  $\mbox{ supp }\mu=[-M,M]$ and $u_\gamma=u(x-\gamma)$. Assume moreover $u\in \mathcal C^2$,$\norm{u}_\infty=1$, $\mbox{supp }u\subset \L_r(0)$ and $u(0)=1$.  

\begin{thm}\label{existspect}
Under the disjoint bands conditions, for a random Landau Hamiltonian as stated before and any $n=0,1,2,...$ there exists a finite positive constant $B(n)$ depending on $n$, $M$, $u$, $ \lambda$ and $K_n(\lambda)$ such that for all $B>B(n)$, the intervals $\Sigma_{B,n,\lambda, \omega}$ in Theorem \ref{LocLandau} are almost surely non empty.
More precisely, we prove that there exist finite positive constants $C_n$, $B(n)$ depending on $n$, $M$, $u$ such that for every $B>B(n)$, we have for all $E\in\mathcal B_n$,
\begin{equation}
\sigma(H_{\omega})\cap [E-\lambda C_n B^{-1/2},E+\lambda C_n B^{-1/2}]\neq \emptyset
\end{equation}
\end{thm}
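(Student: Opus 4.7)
The plan is to construct Weyl sequences for $H_\omega$ at arbitrary energies $E\in\mathcal B_n$ using magnetic coherent states centered at Delone points, and then exploit the independence of $\{\omega_\gamma\}_{\gamma\in D}$ via a Borel--Cantelli argument. For $n\in\mathbb N$ and $z\in\mathbb R^2$, let $\phi_{n,z}\in\operatorname{Ran}\Pi_n$ denote the normalized $n$-th magnetic coherent state, obtained by applying the magnetic translation $U_z$ of \eqref{MT} to the standard $n$-th Landau eigenfunction centered at the origin. Then $H_B\phi_{n,z}=B_n\phi_{n,z}$, and $\phi_{n,z}$ is Gaussian-concentrated near $z$ on the magnetic length scale $B^{-1/2}$, with a pointwise bound of the form $|\phi_{n,z}(x)|^2\le c_n B(1+B\|x-z\|^2)^n e^{-B\|x-z\|^2/2}$.

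The deterministic step is to show that for every $\gamma\in D$ and every $E=B_n+\lambda e$ with $e\in[-M,M]$,
\[
\|(H_\omega-E)\phi_{n,\gamma}\|_2\;\le\;\lambda\,|\omega_\gamma-e|\;+\;\lambda\,\tilde C_n\,B^{-1/2},
\]
for $B$ large depending on $n$, $r$ and $u$. Since $(H_\omega-E)\phi_{n,\gamma}=\lambda(V_\omega^D-e)\phi_{n,\gamma}$, we split $V_\omega^D\phi_{n,\gamma}=\omega_\gamma u(\cdot-\gamma)\phi_{n,\gamma}+\sum_{\gamma'\ne\gamma}\omega_{\gamma'} u(\cdot-\gamma')\phi_{n,\gamma}$. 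By condition \emph{(uc)}, the supports of $u(\cdot-\gamma')$ for $\gamma'\ne\gamma$ lie at distance bounded below by a positive constant depending on $r$ from $\gamma$; the Gaussian decay of $\phi_{n,\gamma}$ then makes each $\|u(\cdot-\gamma')\phi_{n,\gamma}\|_2$, and hence their (convergent) sum, exponentially small in $B$. For the diagonal term, $u(0)=1$ together with $u\in\mathcal C^2$ gives $|u(y)-1|\le C_u\|y\|$ on $\operatorname{supp} u$; integrating $|u(y)-1|^2|\phi_{n,0}(y)|^2$ against the Gaussian weight, and adding the (Gaussian-small) mass of $\phi_{n,\gamma}$ outside $\operatorname{supp} u_\gamma$, one obtains $\|(u(\cdot-\gamma)-1)\phi_{n,\gamma}\|_2\le\tilde C_n B^{-1/2}$, which yields the claim.

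The probabilistic step is a direct application of the second Borel--Cantelli lemma. Since $D$ is relatively dense it contains an infinite sequence $\{\gamma_k\}$ of distinct points, and the variables $\{\omega_{\gamma_k}\}$ are i.i.d.\ with common density supported on $[-M,M]$. For each fixed $e\in[-M,M]$ one has $\mathbb P(|\omega_{\gamma_k}-e|<B^{-1/2})>0$, so almost surely $|\omega_{\gamma_k}-e|<B^{-1/2}$ for infinitely many $k$. Combined with the deterministic estimate, this yields $\operatorname{dist}(E,\sigma(H_\omega))\le \lambda C_n' B^{-1/2}$ almost surely. To upgrade this to a single full-measure event on which the conclusion holds uniformly in $E\in\mathcal B_n$, I would cover $\mathcal B_n$ by a finite $(\lambda B^{-1/2})$-net $\{E_j\}$ with $O(B^{1/2})$ points, intersect the countably many corresponding full-measure events, and absorb the covering error into the final constant $C_n$ via the triangle inequality.

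The main obstacle is the deterministic estimate of the second paragraph: one has to verify that the $O(B^{-1/2})$ error from the $\mathcal C^2$-smoothness of $u$ is not overwhelmed by the polynomial prefactor $(1+B\|y\|^2)^n$ of the higher coherent states. After normalization this boils down to checking that $\int \|y\|^2(1+B\|y\|^2)^n e^{-B\|y\|^2/2}\,dy = O(B^{-2})$ with a constant depending polynomially on $n$, which follows from the substitution $u=B^{1/2}\|y\|$. Once this step is established, the assertion $\Sigma_{B,n,\lambda,\omega}\ne\emptyset$ for $B>B(n)$ is immediate from Theorem~\ref{LocLandau}, since the approximation scale $\lambda C_n B^{-1/2}$ eventually dominates the exclusion scale $K_n(\lambda)\log B/B$ around each Landau level $B_n$.
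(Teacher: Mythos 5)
Your proposal is correct, and it reaches the conclusion by a genuinely different (and in some respects cleaner) route than the paper. The paper works with a single Gaussian $\phi_0$ centered at the origin (after translating the Delone set so that $0\in D$) and then applies Borel--Cantelli to find, with full probability, an entire box $\L_L(x_n)$ on which \emph{every} coupling $\omega_\gamma$ is within $\epsilon$ of the target value $\eta$; the approximate potential $V_0=\eta\sum_{\gamma\in\tilde\L_0}u_\gamma$ over the whole box is then Taylor--expanded around $0$ to control $\|(V_0-\eta)\phi_0\|$, and the far tail of $V_\omega^D$ is killed by Gaussian decay. You instead magnetically translate the coherent state to each Delone point $\gamma$, so that only the \emph{single} coupling $\omega_\gamma$ needs to land within $B^{-1/2}$ of $e$; the contributions of all other sites $\gamma'\neq\gamma$ are already exponentially small in $B$ because they live at a $B$-independent distance $\geq r/2-\delta_u$ from $\gamma$, and the Taylor step is done locally on $u$ alone (using $u(0)=1$ and $u\in\mathcal C^2$). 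This buys two things: a deterministic estimate $\|(H_\omega-E)\phi_{n,\gamma}\|\leq\lambda|\omega_\gamma-e|+\lambda\tilde C_nB^{-1/2}$ that makes the role of the single coupling transparent, and an argument that works uniformly for all Landau levels $n$ rather than being written out only for $n=0$. A further improvement worth noting: the paper's full-measure set $\Omega^L_\epsilon$ manifestly depends on the target value $\eta$, and the passage to ``for any $\eta\in[-M,M]$'' on a single full-measure event is not explicitly justified there; your finite $\lambda B^{-1/2}$-net over $\mathcal B_n$ followed by a countable intersection fills that gap cleanly, at the mild cost of absorbing the covering error into $C_n$. One small point to be careful about in writing up the deterministic step: when you bound $\|(u(\cdot-\gamma)-1)\phi_{n,\gamma}\|_2$, the polynomial prefactor $(1+B\|y\|^2)^n$ from the $n$-th coherent state must be integrated against the Gaussian, and as you note the scaling $y\mapsto B^{-1/2}y$ shows the resulting integral is $O(B^{-1})$ with an $n$-dependent constant; this is exactly where your constant $\tilde C_n$ (and hence $C_n$, $B(n)$) picks up its dependence on $n$.
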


For a set $A\in\R^2$ we denote by $\tilde A$ the intersection $A \cap D$.  Recall that we have, for an arbitrary box $\L_L(x)$ of side $L\in\mathbb N$ centered in $x$:

\be \label{DeloneDense}  C_{R,d} L^d\leq \sharp(\tilde \L_L)=\sharp(D\cap \L_L)\leq C_{r,d} L^d, \ee
where $C_{R,d}=R^{-d}$ and $C_{r,d}= \lceil r^{-d} \rceil$.

Take a sequence $\{x_n \}$ such that $|x_n-x_m|>L$ for every $n, m$ and consider the following sets in the probability space $\Omega$:
\[\Omega^L_{\epsilon}(x_n)=\{\omega :|\omega_\gamma-\eta|\leq \epsilon \mbox{ } \forall \gamma\in \tilde\Lambda_L(x_n) \}\]
and 
\be \Omega^L_{\epsilon}=\bigcap_{N}\bigcup_{n\geq N}\Omega^L_{\epsilon}(x_n) \ee
where $\eta\in [-M,M]$.  By the choice of $\{x_n \}$, the events $\Omega_{\epsilon}^L(x_n)$ and $\Omega_{\epsilon}^L(x_m)$ are independent for  $n\neq m$.  

Since the random variables are i.i.d. and (\ref{DeloneDense}) holds for every box $\L_L(x_n)$, we obtain

\begin{align} \Prob{\Omega^L_{\epsilon}(x_n)}=\Prob{|\omega_\gamma-\eta|\leq \epsilon, \forall \gamma\in\tilde \Lambda_L(x_n)} & = \Prob{|\omega_\gamma-\eta|\leq \epsilon}^{\sharp(D\cap \Lambda_L(x_n))}\\
& \geq \Prob{|\omega_\gamma-\eta|\leq \epsilon}^{C_{r,d}L^d}\\
&= \mu([\eta-\epsilon,\eta+\epsilon])^{C_{r,d}L^d}\\
\end{align}
Therefore
\be \sum_{n} \Prob{\Omega^L_{\epsilon}(x_n)} =\infty,\ee
which implies that $\Prob{\Omega^L_{\epsilon}}=1$, by the Borel-Cantelli lemma.

Given $\delta>0$, take $\epsilon=\delta/(rL)^d$.  We have shown that for $\omega\in\Omega_\epsilon^L$, a set of full measure, there exists an infinite sequence $\{x_n\}$ such that for any $\eta\in [-M,M]$,

\be \label{BC}|\omega_\gamma - \eta|< \frac \delta {(rL)^d} \quad \mbox{for all }\gamma\in \tilde \L_L (x_n) \ee 

 Fix one of these boxes and call it $\L_0$ (so $\L_0$ depends on $\omega$, but this procedure can be done for all $\omega\in\Omega_0$, the yielding result being uniform in $\omega$). 

\smallskip

Without loss of generality, $\tilde \L_0$ contains $0$.  Indeed, if $0\notin \tilde \L_L(x_n)$ for all $n$, take $L>R$ so that $\tilde \L_0 \neq 0$ and take $\gamma_0 \in \tilde \L_0$.  Consider now the operator 
\be H_\omega^{D-\gamma_0}=H_{B}+\lambda \displaystyle\sum_{\gamma \in D-\gamma_0}\omega_\gamma
u_\gamma\ee 

We have that $\sigma (H_\omega^{D})=\sigma (H_\omega^{D-\gamma_0})$, since, taking a translation $\tau_{\gamma_0}: \Omega\times D\rightarrow \Omega\times (D-\gamma_0)$ defined by $\tau_{\gamma_0}(\omega_\gamma, \gamma)=(\omega_\gamma, \gamma-\gamma_0)$, that  associates the same random variable of a point to its translated, we can see $H_\omega^{D}$ is unitarily equivalent to $H_\omega^{D-\gamma_0}$.

Moreover,  by what is known for $H_\omega^D$, with full probability there exists a sequence $\{\tilde x_n\}=\{x_n-\gamma_0\}$ such that (\ref{BC}) holds.  In particular, since the cube $\L_0$ is a cube that satisfies (\ref{BC}) for $H_\omega^D$, then the cube  $\L_{\gamma_0}=\L_0-\gamma_0$ 
satisfies (\ref{BC}) for $H_\omega^{D-\gamma_0}$.

Define
 
\be \label{V0}
V_{\gamma_0}(x)=\eta\displaystyle\sum_{\gamma \in \tilde \L_{\gamma_0}}
u_\gamma.
\ee
Since $\gamma_0 \in \tilde \L_0=\L_0 \cap D$ we have that $0\in \tilde \L_{\gamma_0}=(\L_0-\gamma_0) \cap (D-\gamma_0)$.  Moreover, the assumptions on $u$, namely that $u(0)=1$ and the supports of $u_\gamma$ do not overlap, imply that $V_{\gamma_0}(0)=\eta$.  Therefore, without loss of generality we can assume $\tilde \L_0$ is centered in $0$ and so we work from now on with $H_\omega^D$, $V_\omega^D$ and $V_0$ as in (\ref{V0}) with $\gamma_0=0$.

\begin{rem}
\emph{The assumption $u(0)=1$ is so we can later perform a Taylor expansion around $0$}.
\end{rem}

\begin{proof}[Proof of Theorem \ref{thmCH}]
From now on $L$ is fixed.  For the sake of completeness, we will reproduce the details of \cite[Appendix 2]{CH} with the corresponding adaptations and work in the \emph{0}-th Landau band.  Let $\Pi_0$ be the Landau projection in the \emph{0}-th Landau band, around the Landau level $B_0$.  Take the normalized function $\phi_0\in \Pi_0(\mathcal H)$,defined by

\be \phi_0(x)=\left(\frac{2B}{\pi}\right)^{1/2}e^{-B|x|^2}. \ee

Let $E\in [B_0-\lambda M, B_0+\lambda M]$, that is,  $E=B_0 + \lambda \eta$ for some $\eta \in [-M, M]$.  The case $\eta =0$ is trivial by the previous Borel-Cantelli argument, as $\{B_n\}_{n\geq 0}\subset \sigma(H_\omega)$ almost surely. Since the argument is analog for $\eta<0$,  in the following we consider only $\eta \in (0, M]$, and write

\begin{align}
\norm{ \left(H_{\omega}^{D}-E\right)\phi_0} & =
\norm{ \left(H_{\omega}^{D}-B_0-\lambda \eta\right)\phi_0} \\
& \leq \norm{\Pi_0(\lambda V_\omega^{D}-\lambda \eta)\phi_0}+\lambda \norm{(1-\Pi_0) V_\omega^{D} \phi_0}
\end{align}
For simplicity we write $V_\omega$ instead of $V_\omega^{D}$.
The deterministic result \cite[Lemma A.1 ]{CH} implies that
\be \label{A} \lambda \norm{(1-\Pi_0) V_\omega \phi_0}\leq \lambda C_1 B^{-1/2},\ee 
where $C_1$ is a constant depending only on the single-site potential $u$.
We are left with

\begin{align}
\norm{\Pi_0(\lambda V_\omega-\lambda \eta)\phi_0} \leq &\lambda \norm{ (\displaystyle\sum_{\gamma \in \tilde \L_0}
\omega_\gamma u_\gamma + \displaystyle\sum_{\gamma \in D\setminus \tilde \L_0}
\omega_\gamma u_\gamma -\eta)\phi_0 }\\
\leq & \lambda \norm{ (\displaystyle\sum_{\gamma \in \tilde \L_0}
\omega_\gamma u_\gamma-\eta)\phi_0 }+ \lambda \norm{\displaystyle\sum_{\gamma \in D\setminus \tilde \L_0}
\omega_\gamma u_\gamma\phi_0}\\
\leq & \lambda \norm{ (\displaystyle\sum_{\gamma \in \tilde \L_0}
\omega_\gamma u_\gamma-\eta)\phi_0 }+ \lambda M\displaystyle\sum_{\gamma \in D\setminus \tilde \L_0}
\norm{u_\gamma\phi_0} \label{T}
\end{align}

Recall that 
\be \{ \gamma \in D:\quad \gamma \in D\setminus \tilde\L_0\}\subset\{ \gamma \in D : \quad |\gamma|>r \}.\ee

The second term in (\ref{T}) can be estimated as in \cite[Eq. 7.6]{CH}, where it is shown that

\be \norm{u_\gamma \phi_0}^2 = \int_{\mathbb R^2} \phi_0(x)^2 u(x-j)^2 dx \leq \norm{u}^2_\infty e^{-2B|j|^2+4B r |j|} \ee
which is summable for $\gamma$ such that $|\gamma|>r$,  yielding that for all $B>B_*$, for a constant $B_*$ big enough,

\be \label{B} \lambda M\displaystyle\sum_{\gamma \in D\setminus \tilde \L_0}
\norm{u_\gamma\phi_0} \leq \lambda C_2 B^{-1/2}
\ee
where the constant is uniform in $B$.

As for the first term in (\ref{T}), recalling the definition of $V_0$ from (\ref{V0}), we write

\begin{align} \lambda \norm{ (\displaystyle\sum_{\gamma \in \tilde \L_0}
\omega_\gamma u_\gamma-\eta)\phi_0 }= & \lambda \norm{ (\displaystyle\sum_{\gamma \in \tilde \L_0}
\omega_\gamma u_\gamma- V_0 + V_0  -\eta)\phi_0 }\\
\leq & \lambda \norm{ (\displaystyle\sum_{\gamma \in \tilde \L_0}
\omega_\gamma u_\gamma- \eta\displaystyle\sum_{\gamma \in \tilde \L_0}
u_\gamma) \phi_0} + \lambda \norm{(V_0 -\eta)\phi_0 }\\
\leq & \lambda \norm{ \displaystyle\sum_{\gamma \in \tilde \L_0}
(\omega_\gamma-\eta)u_\gamma\phi_0} +\lambda \norm{(V_0 -\eta)\phi_0 } \label{T2}
  \end{align}

By the choice of $\L_0$ the first term in \ref{T2} is 
\be \label{C} \lambda \norm{ \displaystyle\sum_{\gamma \in \tilde \L_0}
(\omega_\gamma-\eta )u_\gamma\phi_0} \leq \lambda \delta \ee 

As for the second term in \ref{T2},

\begin{align} \norm{(V_0-\eta)\phi_0}^2 & = \left( \frac 2 \pi \right) \int_{\R^2} |V_0(x)-\eta|^2e^{-2B\norm{x}^2}dx \\
& =\left( \frac 2 \pi \right) \int_{\R^2} |V_0(B^{-1/2}x)-\eta|^2e^{-2\norm{x}^2}dx  \end{align}

Now, since $V_0(0)=\eta$, we have 
\be |V_0(B^{-1/2}x)-\eta|=|V_0(B^{-1/2}x)-V_0(0)| \ee 
and we can perform a Taylor expansion around $0$ for $V_0$, obtaining, since $\mbox{supp }V_0\subset \L_0$

\be |V_0(B^{-1/2}x)-V_0(0)|\leq B^{-1/2}\norm{x}\norm{\nabla V_0}_\infty\leq B^{-1/2}L\norm{\nabla V_0}_\infty\ee

Notice that $\norm{\nabla V_0}_\infty\leq C_3$, for a constant $C_3$ depending only on $u$, uniformly with respect to $\eta\in[0,M]$.  Replacing this in the integral we obtain

\be \norm{(V_0-\eta)\phi_0}^2= \left( \frac{C_4}{\pi B} \right) \int e^{-2\norm{x}^2}dx  \ee 

So we obtain once more 
\be \label{D} \lambda \norm{(V_0-\eta)\phi_0}\leq \lambda C_5 B^{-1/2} \ee 

Finally, adding the estimates (\ref{A}),(\ref{B}),(\ref{C}) and (\ref{D}) yields that for all $B>B_*$,

\be
\norm{H_{\omega}^{D}-(B_0+\lambda \eta)}\leq \lambda C_5 B^{-1/2}+ \delta
\ee
where the bound is uniform in $B$, $\omega\in \Omega_0$ and in $\eta\in [0,M]$.  The same result holds in any Landau band for all $B$ large enough.  Therefore, with probability one and for any $E=B_n+\lambda \eta$, we have  \be \sigma(H_{\omega}^{D})\cap [E-\lambda C_5 B^{-1/2}- \delta,E+ \lambda C_5 B^{-1/2}+ \delta ]\neq 0 \ee
 
Since $\delta>0$ is arbitrary, 


\be \sigma(H_{\omega}^{D})\cap [E-\lambda C_5 B^{-1/2},E+ \lambda C_5 B^{-1/2}]\neq 0,\ee 
for every $E\in [B_n, B_n+\lambda M]$.  This proves that any gap in the spectrum of $H_\omega^D$ in the Landau band cannot exceed a length of order $B^{-1/2}$.
\end{proof}

In particular, since we know by perturbation theory that $\sigma(H_{\omega}^{D}) \subset [B_n-\lambda M, B_n+\lambda M]$,we have that for $E=B_n+\lambda M$, that is, in the edge of the Landau band, 

\be \label{S}\sigma(H_{\omega}^{D})\cap [B_n+\lambda M-\lambda C_5 B^{-1/2},B_n+\lambda M]\neq \emptyset\ee

On the other hand, by Theorem \ref{LocLandau} we know the localization region is at a distance $K_n(\lambda)\frac {\ln B} B$ from the Landau level $B_n$.  If $\lambda$ is fixed and $B$ is such that 

\be K_n(\lambda) \frac {\ln B} B < \lambda M - \frac{\lambda C_n}{\sqrt B},\ee 
then the region of the spectrum that is almost surely near the band edge, that is above $B_n+\lambda M-\lambda C_n B^{-1/2}$, lies in the localization region, that is above $B_n +K_n(\lambda) \frac {\ln B} B $.  So we have shown Theorem \ref{existspect}, that is, for every $n=0,1,2,...$

\be \Sigma_{B,n, \lambda,\omega} \neq \emptyset \quad \mbox{for a.e. }\omega\in\Omega \ee


\end{document}